\newcommand{\otherlabel}[2]{\protected@edef\@currentlabel{#2}\label{#1}}
\definecolor{darkgreen}{RGB}{0,100,0}
\definecolor{firebrick}{RGB}{178,34,34}
\DeclareMathOperator{\ax}{{ax}}
\DeclareMathOperator{\radius}{radius}
\DeclareMathOperator{\cent}{center}
\DeclareMathOperator{\F}{\mathcal{F}}
\DeclareMathOperator{\wfs}{wfs}
\DeclareMathOperator{\length}{length}
\DeclareMathOperator{\B}{\mathbb{B}}
\DeclareMathOperator{\vol}{Vol}
\newcommand{\ud}{\mathrm{d}}
\newcommand\R{\mathbb{R}}
\newcommand\Q{\mathbb{Q}}
\newcommand{\Nat}{\mathbb{N}}
\newcommand{\M}{\mathcal{M}}
\newcommand{\Image}{\mathrm{im}}
\newcommand{\Gdiam}{{\operatorname{GeoDiameter}}}
\newcommand{\defunder}[1]{\underset{\text{def.}}{#1} \:}
\newcommand{\PushedPath}[2]{{#1}^{\left[#2\right]_K}}
\theoremstyle{acmdefinition}
\newtheorem{assumption}[theorem]{Assumption}
\newtheorem{remark}[theorem]{Remark}}
\begin{document}

\title{ Hausdorff and Gromov-Hausdorff stable subsets of the medial axis}

\author{Andr{\'e} Lieutier}
\email{andre.lieutier@gmail.com}
\affiliation{%
  \institution{No affiliation}
	\city{Aix-en-Provence}
	\country{France}
}

\author{Mathijs Wintraecken}
\email{mathijs.wintraecken@inria.fr}
\affiliation{%
  \institution{Inria Sophia-Antipolis, Universit{\'e} C{\^o}te d'Azur}
  \streetaddress{2004 Route des Lucioles}
  \city{Valbonne}
  \country{France}
	\\
	\institution{Institute of Science and Technology Austria}
  \streetaddress{Am Campus 1}
	\city{Klosterneuburg}
	\country{Austria} 
		}


\begin{abstract}
In this paper we introduce a pruning of the medial axis called the $(\lambda,\alpha)$-medial axis ($\ax_\lambda^\alpha $). We prove that the $(\lambda,\alpha)$-medial axis of a set $K$ is stable in a Gromov-Hausdorff sense under weak assumptions. More formally we prove that if $K$ and $K'$ are close in the Hausdorff ($d_H$) sense then the  $(\lambda,\alpha)$-medial axes of $K$ and $K'$ are close as metric spaces, that is the Gromov-Hausdorff distance ($d_{GH}$) between the two is $\frac{1}{4}$-H{\"o}lder in the sense that $d_{GH} (\ax_\lambda^\alpha (K),\ax_\lambda^\alpha (K')) \lesssim d_H(K,K')^{1/4}$. 
The Hausdorff distance between the two medial axes is also bounded, by
$d_{H} (\ax_\lambda^\alpha (K),\ax_\lambda^\alpha (K')) \lesssim d_H(K,K')^{1/2}$. 
These quantified stability results provide guarantees for practical computations of medial axes from approximations.
Moreover, they provide key ingredients for  studying the computability of the medial axis in the context of computable analysis.
\end{abstract}

\begin{CCSXML}
<ccs2012>
<concept>
<concept_id>10003752.10010061.10010063</concept_id>
<concept_desc>Theory of computation~Computational geometry</concept_desc>
<concept_significance>500</concept_significance>
</concept>
</ccs2012>
\end{CCSXML}

\ccsdesc[500]{Theory of computation~Computational geometry}

\keywords{Medial axis, Gromov-Hausdorff distance, metric stability, computable analysis}
\maketitle

\textit{This is the full version of the paper accepted at STOC'23. Part I is almost identical to the STOC paper, while Part II is not contained in the conference paper.}

\section*{\textit{Part I: Introduction, motivation and non-technical overview of the results}}

\section{Introduction}\label{section:introduction}
Given a closed subset $K$ of Euclidean space $\R^n$, its {\it medial axis}, denoted $\ax (K)$, is the set of points in the complement $K^c$ of $K$ for which there are at least two closest points in $K$, or, equivalently, on its boundary $\partial K$. Note that the definitions of the medial axis used in preceding papers on the same topic 
\cite{LIEUTIERhomotopytype,cl2005lambda}
 considered an open subset $\mathcal{O} \subset \R^n$ instead. Because the medial axis was then defined as the set of points in $\mathcal{O}$
  with at least two closest points on the complement $\mathcal{O}^c$ we see that the difference is only cosmetic by setting $K = \mathcal{O}^c$.
The properties of the medial axis and its computation have been intensively studied, both in theory and in particular applications contexts, 
see \cite{Attali2009} for an overview, or \cite{tagliasacchi20163d}\footnote{
 Unfortunately, \cite{tagliasacchi20163d} mixes up the $\theta$-medial and $\lambda$-medial axis in Figure $11$ of that paper.} 
for an application oriented review of  general notions of shapes skeletons and computation methods.

One obvious motivation for studying the stability of the medial axis is to be able to guarantee the (approximate) correctness of the information that can be extracted from the medial $\ax(K')$ axis of an approximated shape $K'$ of an exact, or ideal, shape $K$. Here the approximation error could be the unavoidable finite accuracy of physical measurements or some small perturbations induced by rounding or geometric data conversions. 

Another, more formal,  motivation for studying the stability is related to its formal computation. Indeed, among the significant amount of practical proposed algorithms for the map $K \mapsto \ax(K)$, the model of computation is usually implicit, which we find problematic in the case of this particularly unstable object. {We refer to Section \ref{section:introModelOfComputation} for a more extensive discussion of this issue.} 

The idea of pruning, or filtering, the medial axis, in order to improve its stability, has been, sometime implicitly, 
a key ingredient  in  realistic algorithms.
For example, in \cite{foskey2003efficient}, the {\it $\theta$-simplified medial axis} of $K$ is defined as the set of points $x$ on the medial axis of $K$ {for which $x$}
has at least $2$ closest points  $p,q \in K$ such that the angle $\angle pxq$ is greater than $\theta$.
Since the medial axis of a finite discrete set $S \subset \R^d$ is the $(d-1)$-skeleton of the Voronoi diagram of $S$, following some pioneering works such as \cite{attali1997computing,amenta1998crust},
in \cite{dey2004approximating}, the Voronoi cells  of a point sample are pruned along some parametrized criterion, namely a  {\it angle condition}
 or a {\it ratio condition} 
on the circumradius of the set of closest points and the distance between the point on the medial axis and its closest projection. 

This paper pursues the quest for provably stable filtrations of the medial axis for general closed subset of Euclidean space, in the spirit of \cite{cl2005lambda}.
Other prunings of the medial axis have been suggested in e.g. \cite{attali1996modeling, dey2006defining, blanc2018salience, Erin2016, LIU20111496, Burning2015, SHAKED1998156, giesen2009scale}. 
Each pruning method comes with some drawbacks (as well as strong points). We refer to \cite{Burning} for a discussion of the particular deficiencies of a number of these methods in more detail.

\section{The critical function and the \texorpdfstring{$\lambda$}{lambda}-medial axis}\label{section:introCriticalLambdaMedialAxis}

In this section we review a number of results from \cite{chazal2009sampling,cl2005lambda} on the critical function of a compact set and some related notions. These results are both key ingredients in our proofs and a source of inspiration for some of the statements.
The {\it reach} of a set $K$ is the minimal distance between a set and its medial axis. It was introduced by Federer \cite{Federer}
in order to extend curvature measures to more general sets.
The reach is also the lowest upper bound over the set of the {\it local feature size} { \cite{Federer, amenta2001power}}, that is the distance of a point to the medial axis.\footnote{ The nomenclature was introduced by Amenta et al. \cite{amenta2001power}
in order to state conditions under which the topology of a set can be determined from a sampling of it, however the concept was known to Federer \cite{Federer}.}
The {\it critical function} $\chi_K : (0, \infty)\rightarrow [0,1]$ (\cite{chazal2009sampling}  and \cite[Section 9]{boissonnat2018geometric})
 of a compact set $K$ has been introduced in order to quantify how the topology of a set can be determined from a Hausdorff approximation of it,
 in particular when the reach is $0$, which is common for non-smooth sets. 
 
For a point $x\in \R^n$, we denote by  $R_K (x)$ its distance to $K$ and by  $\F_K (x)$ the radius of the smallest ball enclosing the points in $K$ closest to $x$, see Section \ref{section:MedialAxisAndFlow} for details.
The critical function $\chi_K$ of $K$  is then defined as:
\begin{equation} \label{equation:CriticalFunctionFirstDefinition}
 \chi_K (t) \defunder{=}  \inf_{R_K(x) = t} \sqrt{ 1 - \left( \frac{\F_K (x) }{ R_K (x)}\right)^2}.
\end{equation}
The medial axis $\ax(K)$  can 
be defined as the set of points $x$ in $\R^n$ such that $\F_K (x)>0$. It follows that, 
when $K$ has positive reach, $\chi_K (t)=1$ for $t$ smaller than the reach. 

We write \[
K^{\oplus t} \defunder{=} \{x \in \R^n, d(x, K)  \leq t\}
\] 
for the $t$-offset of $K$.
For $t>0$, the topology of this offset can only change at critical values of the distance function, that is values for which $\chi_K$ vanishes. 
{ For a given $\mu \in (0,1]$,} the $\mu$-Reach ($r_\mu$) is defined as 
\begin{align}
r_\mu (K) &\defunder{=} \inf \{t  \mid \chi_K (t) < \mu \}. 
\nonumber
\end{align} 
If $K$ has positive $\mu$-reach for some $\mu>0$, then $K^{\oplus r_\mu}$  deforms retract on $K$, see \cite[Theorem 12]{kim2019homotopy}.
{Notices that $r_1(K)$ is the reach of $K$.}


In \cite{cl2005lambda} the {\it $\lambda$-medial axis} of $K$, denoted here $\ax_\lambda (K)$, was introduced.
Where the medial axis is  the set of points in $\R^n$ such that $\F_K (x)>0$,  the $\lambda$-medial axis of $K$ is a filtered version of 
it,  defined as  the set of points in $\R^n$ such that $\F_K (x)\geq \lambda$.
{ Since $\F_K$ is upper semi-continuous \cite[Corollary 4.7]{LIEUTIERhomotopytype}, $\ax_\lambda (K)$ is a closed set.}
For a given value of the filtering (pruning) parameter $\lambda$, 
$\ax_\lambda (K)$
enjoys some geometrical and topological stability, see \cite{cl2005lambda} and the overview in Section \ref{sec:RecapStabilityLambdaMedialAxis} for details. 

The medial axis is the limit of $\lambda$-medial axes in the sense that:
$\lambda' \leq \lambda \Rightarrow \ax_{\lambda'} (K) \supset \ax_\lambda (K)$ and
\begin{equation}\label{equation:LambdaMedialAxisFiltration}
\bigcup_{\lambda >0} \ax_\lambda (K) = \ax  (K).
\end{equation}

\section{Overview of results}\label{section:introContributions}


In this paper,  we show that a simple variant of the previous filtering $\lambda \mapsto \ax_\lambda (K)$,
enables significantly stronger stability statements.

The {\it $(\lambda,\alpha)$-medial axis}  of a closed set $K \subset \R^n$, 
denoted here $\ax_\lambda^\alpha (K)$, is  the $\lambda$-medial axis of the $\alpha$-offset\footnote{The $\alpha$-offset is denoted by $K^{\oplus \alpha}$, see \eqref{eqdef:Offset_Minkowski} { and the text following that equation for an explanation of the notation}.} of $K$:
\[
\ax_\lambda^\alpha (K) \defunder{=} \ax_\lambda (K^{\oplus \alpha}).
\]

It is just another similar way of filtering the medial axis,  where \eqref{equation:LambdaMedialAxisFiltration}
 is replaced by 
\begin{align}
\bigcup _{\lambda >0} \: \bigcup _{0< \alpha < \lambda}   \ax_{\lambda}^\alpha (K) 
= \ax(K).
\tag{\ref{equation:UnionOfLambdaAlphaMedialAxesGivesMedialAxis}}
\end{align}

The stability properties are then improved in two different ways: \newline
First, for $\lambda,\alpha>0$, if  $\chi_K$ does not vanish on some interval $[a,b]$ such that 
$a< \alpha$ and $\alpha+ \lambda < b$, then the map $(\lambda, \alpha, K) \mapsto \ax_\lambda^\alpha (K)$ is continuous 
for the (two-sided) Hausdorff distance on both the input $K$ and the output $\ax_\lambda^\alpha (K)$. 
Moreover, we give an explicit H{\"o}lder exponent in terms of 
$\lambda$, $\alpha$: 
For $K: (\lambda, \alpha, K) \mapsto \ax_\lambda^\alpha (K)$ the H{\"o}lder exponent is $1$ with respect to $\lambda$ and $\alpha$, i.e. it is locally Lipschitz with respect to $\lambda$ and $\alpha$ (Lemma \ref{corollary:LambdaMapstoAXLambdaAlphaIsHausdorffLipchitz} and Lemma \ref{lemma:corollary:AlphaMapstoAXLambdaAlphaIsHausdorffLipchitz}). The map is $\frac{1}{2}$-H{\"o}lder
with respect to $K$ (Lemma \ref{lemma:AXLambdaHausdorffStableSymmetric}).

Secondly, we extend the stability results to the Gromov-Hausdorff distance, see Section \ref{section:GromovHausdorffDistance} for a formal definition. 
We show here that connected  $(\lambda,\alpha)$-medial axes are compact subsets of Euclidean space and have finite geodesic diameter (Theorem \ref{theorem:AXLambdaAlphaIsGeodesicAndFiniteDiameter}). 
Therefore $(\lambda,\alpha)$-medial axes equipped with intrinsic geodesic distances on $\ax_\lambda^\alpha (K)$) give meaningful metric spaces. 
We show that $\ax_\lambda^\alpha (K)$ seen as metric spaces is Gromov-Hausdorff stable under Hausdorff distance perturbation  of $K$, 
which can be expressed as the continuity of the map $(\lambda, \alpha, K) \mapsto \ax_\lambda^\alpha (K)$ under the associated metrics.
Moreover we  again establish bounds on the H{\"o}lder exponent 
in this new metric context: this map is locally Lipschitz with respect to $\lambda$ and $\alpha$  (Lemmas \ref{lemma:AxLambdaAlphaGHStableWRTLambda} and \ref{lemma:AxLambdaAlphaGHStableWRTAlpha}) and 
  $\frac{1}{4}$-H{\"o}lder with respect to $K$ (Theorem \ref{theorem:GromovHausdorffStability}).

This Gromov-Hausdorff stability gives metric 
stability which complements 
the homotopy type preservation and Hausdorff distance stability. 
 It is the strongest form of stability we can hope for because the stronger property of bounded Fr{\'e}chet distance\footnote{ Recall that the Fr{\'e}chet distance between two subsets $S_1,S_2$ of a same metric space is the infimum of $\sup_{x\in S_1} d(x,h(x))$ among all possible homeomorphisms $h: S_1 \to S_2$. 
  It is therefore infinite when shapes are not homeomorphic. Note that we do not consider the orientation of the sets $S_1$ and $S_2$. } is impossible to achieve because of topological instability. In particular small smooth changes in a set can create changes in the topology of the medial axis.

Figure \ref{figure:FrechetVersusGH} illustrates three situations where the two shapes, the red and the blue, share the same homotopy type, as they all deform  retract to a circle, and are close to each other with respect to the Hausdorff distance: any point in the red shape is near the blue shape and the reverse holds as well.
On the first example, both distances, Fr{\'e}chet ($d_F$) and Gromov-Hausdorff ($d_{GH}$) are large, { because the distances in the `tail' differ significantly thanks to the zigzag. }
Because of our bound on the Gromov-Hausdorff distance (Theorem \ref{theorem:GromovHausdorffStability}), this situation cannot occur if the red and blue sets are the medial axis of two sets with small Hausdorff distance between them.
 
 On the two next examples of Figure \ref{figure:FrechetVersusGH}  the red and the blue shapes do correspond
 to medial  axes of two sets close to each other in Hausdorff distance (in dotted lines). On the middle, the medial axes are similar but not homeomorphic, so that the Frechet distance 
is infinite.  
In the last case they are homeomorphic but the Fr{\'e}chet distance would still be large  (you would need to rotate one of them by $90^\circ$ for the homeomorphism). 
In contrast, as asserted by Theorem \ref{theorem:GromovHausdorffStability}, the Gromov-Hausdorff distance between them is small. 

 \begin{figure}[ht!]
\centering
\includegraphics[width=0.49\textwidth]{}
\caption{Comparison between Fr{\'e}chet ($d_F$) and Gromov-Hausdorff ($d_{GH}$) stability. 
On the two examples below, the shapes are ($\lambda,\alpha$)-filtered medial axes of nearby sets (in dotted lines), and
as asserted by  Theorem \ref{theorem:GromovHausdorffStability}, the Gromov-Hausdorff distance between them  is small. 
\label{figure:FrechetVersusGH}
}
\Description[Figures with pairs of spaces with the Fr{\'e}chet and Gromov-Hausdorff distance]{On the top one sees a figure with a zigzag in blue and one without in red. This makes both the Fr{\'e}chet and Gromov-Hausdorff distance large. On the bottom left one sees two medial axis that are close, but not homeomorphic (due to some small Y-branching). In this case the Fr{\'e}chet is infinite, while the Gromov-Hausdorff distance in very small. In the final panel one sees two medial axis that are perturbations of a cross (X) where in one case the crossing point is perturbed into a small horizontal line segment and in the other case in a small vertical line segment. The homeomorphism that maps the two perturbed crosses onto each other has to rotate the shape 90 degrees, so that the Fr{\'e}chet distance is large. The Gromov-Hausdorff in the other hand is small.  }
\end{figure}

Gromov-Hausdorff stability can be seen informally as a weakening of  Frechet distance that ignores small scale features.


\section{Motivation}
\subsection{Medial axis computation algorithms and models of computation}\label{section:introModelOfComputation}

The medial axis is known to be unstable in theory \cite{Attali2009}, and, as a consequence, its computation is often problematic in practice. 
A typical  illustration of this instability is when $K^c$ is an open disk in the plane: its medial axis is a point, but a  $C^\infty$ perturbation, arbitrary small, in the $C^0$ sense of differential topology \cite{Hirsch1976},  of its boundary,
 may produce an arbitrary large perturbation (measured in the Hausdorff distance) of the resulting medial axis. 
 
 Computing the medial axis consists in, given as input some representation of the closed set $K$, to compute as output some representation of $\ax(K)$. 
  Let us recall  two possible  computation models under which  what it means to ``compute'' $K \mapsto  \ax(K)$.
 
 In computational geometry, the implicit computation model (sometimes called exact computation paradigm in order to distinguish it from the unrealistic ``Real RAM'' computation model) assumes that both input and output can be exactly represented by finite data in the computer. This implies that input and output have  to belong to countable sets,\footnote{As only countable sets can have each of its elements  representable by a finite word.} such as, for example, integer,  rational or algebraic numbers, or polynomials built on top of them. Given a set of rational or algebraic points, or given a polyhedron with rational or algebraic vertices coordinates, for example, we now that the medial axis is a finite algebraic complex and, as such, belongs to a countable set, therefore exactly presentable on a computer.  These are situations where 
it makes sense to compute the medial axis in this exact computation model,  even if it may be difficult. 
 
Computable analysis, pioneered with the notion of computable real numbers  introduced by   
Turing in his 1936 undecidability paper 
\cite{turing1937entscheidungs,turing2004computable}, is   
  studied in the logic and theoretical computer science literature
   \cite{grzegorczyk1955computable, Lacombe1, Lacombe2, Lacombe3, Lacombe4, malajovich1992ker, weihrauch2000computable, brattka2008tutorial, battenfeld2008topological}, 
   but its formalism is most often ignored  in  applications.

However, it is actually implicit in many practical computations involving real numbers and real functions, 
for example in numerical analysis, where a typical example would be the finite element method.
In this context, one considers that input and output can belong to topological spaces with countable bases of neighbourhoods, typically metric spaces with dense countable subsets, called separable metric spaces, who, as a consequence, have at most the cardinality of real numbers.  Examples of such metric spaces are: 
\begin{itemize} 
    \setlength{\itemsep}{0pt}%
    \setlength{\parskip}{0pt}
\item Real numbers with their natural topology (rational numbers are dense).
\item Continuous functions on a compact set with the sup norm (polynomials  with rational coefficients are dense, by the Stone-Weierstrass Theorem). 
\item $L^p$ (classes of) functions with their associated  $L^p$ norms (rational step functions are dense).
\item Compacts subsets of Euclidean spaces endowed with the Hausdorff distance (finite points sets in $\Q^n$  are dense).
\end{itemize} 
 
In the context of these separable metric spaces, an algorithm, in this model of computation,
 takes as input a sequence belonging to the dense subset, so that each element of the sequence, belonging to a countable space, admits a finite representation.\footnote{The dense set has, formally, to be recursively enumerable.} It then computes,  
  for each element of the input sequence, an element of the output sequence in such a way that the output sequence converges to the image of the limit of the input sequence.
 This mere definition assumes that the (theoretical) output of the limit of the input sequence, is the limit of the sequence 
 of (actual) outputs of items of the input sequence. This is the reason why, in the context of computable analysis, 
 only continuous functions, that commute with limits, can be computable\footnote{In fact computability of the function requires moreover the modulus of continuity of the map to be computable, in particular should not tend to $0$ slower than  any recursive function.}. 
 For example, integer part function is computable, in this model,
 only at non-integer numbers. In decimal representation, if, after the dot, an infinite sequence of $9${s} appears, the algorithm would read the input forever.  

Recall that a continuous function $\omega:{\R_{\geq 0} \rightarrow \R_{\geq 0} }$,  with $\omega(0)=0$, is a {\it modulus of continuity} of a map $f:X\rightarrow Y$ between metric spaces if for all $x_1,x_2 \in X$, 
\begin{align} 
d_Y(f(x_1),f(x_2)) &\leq \omega(d_X(x_1, x_2)).  
\nonumber
\end{align} 
If one wishes to control some form of theoretical algorithmic efficiency in the context of computable analysis,
 a modulus of continuity of the operator, that associates to some uncertainty on the input an upper bound on the induced uncertainty on the output,
 needs to be estimated. 
 
 We do not  need to enter here in the technicalities of computable analysis. Our contribution consists in stating some explicit modulus of continuity,
 which, on the theoretical side,  would be a crucial ingredient in the proofs of computability and complexity in the context of computable analysis, but is 
 also, on the application side, a way to guarantee some accuracy in practical computations.
Indeed, practical implementations of the computation of the medial axis apply some kind of  approximation during the
 computation process. In a practical situation, this approximation process is already inherent to the data collection process, as any physical numerical measure
 is meant at some, finite, accuracy. Second, the actual input of an algorithm is often the output of a preceding algorithm which cannot, reasonably,
  be assumed recursively to compute exact output from exact inputs: recursion on  algebraic numbers representations are possible along a finite depth of computation only.
When, along the  process,  
some form of rounding, pixelization,  small features collapses or filtering, is performed, 
being able to upper bound the impact on the output seems sensible,
 and in fact necessary for provably correct algorithms.

Since  $K \mapsto \ax (K)$ is not continuous in general when the topology of both inputs and outputs are defined by the Hausdorff distance, 
we see two ways of stating a continuity, or stability, property, for the operator  $K \mapsto \ax (K)$. One possibility is to consider a stronger topology on the input,
a form of Fr{\'e}chet, or ambient diffeomorphism based, $C^k$ distance, which would apply to smooth objects and representations.

Another possibility is to consider a weaker topology on the output, by considering filtered medial axes. 
In this model, the input  sequence encodes $K$ in the form of approximations $\big(\tilde{K}_i\big)_{i\in \Nat}$ that converge to $K$ in Hausdorff distance.
For the $\tilde{K}_i$ one would typically choose finite point sets or (geometric) simplicial complexes (meshes/triangulations). As $i$ would increase in one would not only add more points or simplices to $K_i$, but also make the coordinates of the points/vertices  more precise by adding digits to their coordinates.

The output sequence encodes $\ax(K)$, 
in the form of progressive  approximations of the map $(\lambda, \mu)  \mapsto \ax_\lambda^\alpha(K)$, 
for decreasing values of  $\lambda,\alpha$.
These approximations (effectively) 
converge, 
where a  basis of neighbourhoods  (in the space of functions) 
of $(\lambda, \alpha)  \mapsto \ax_\lambda^\alpha(K)$ 
is given by the sets of maps 
$(\lambda, \alpha) \mapsto f(\lambda, \alpha)$ satisfying 
$\lambda, \alpha > t  \Rightarrow d_\star(f(\lambda, \alpha)	, \ax_\lambda^\alpha(K)) < \epsilon$ 
 for some  $\epsilon, t >0$.

 This approach does not require any smoothness assumption on $K$. The present paper focuses on this filtered approach,
 where the considered distance $d_\star$ between sets  is  either the Hausdorff distance, either the Gromov-Hausdorff distance on geodesic metric spaces.
 
Describing formally  effective types and algorithms for the computation of the medial axis is beyond the scope of this paper. However, let us make some suggestions for further work in this direction.  Probably the simplest  model would consider the space of finite set of points with rational coordinates as inputs.  
These inputs together form a countable, and recursively enumerable set which is naturally equipped with the Hausdorff distance. The topological completion of the set of inputs gives all compact subsets of Euclidean space. 
The corresponding output space would consist of the filtered Voronoi Diagrams for which the coordinates of the Voronoi vertices are rational numbers. The H{\"o}lder  modulii of continuity proven in this paper
would allow to formally state the effectivity of the model.

The model could also be formalized in the context of Scott domains \cite{abramsky1994domain,edalat1998computational}, \cite[Chapter 1]{amadio1998domains}
 and their associated  information orders.\footnote{ It is possible, following \cite{edalat1998computational}, to topologically embed our input and output metric spaces as maximal elements of some Scott domains. Our bounded modulus of continuity would then allow to provide effective structures for them.} In this context, our results answer the following question: If the only information we have about some compact set $K$ is its Hausdorff approximation $K'$, what information can we infer about its medial axis $\ax(K)$?


\subsection{Motivation from mathematics: the stability of the cut locus.}

The medial axis is closely related to the cut locus. We recall
\begin{definition}
Let $\M$ be a smooth (closed)  Riemannian manifold and let $p \in \M$. For every $v \in T_p \M$, with $|v|=1$, we can consider the geodesic $\gamma_v (t) = \exp _p (t v )$ emanating from $p$ in the direction $v$. Let $\gamma_v(\tau)$ be the first point along $\gamma_v$ such that the geodesic $\{ \gamma_v(t) \mid t \in [0, \pi]\}$ is no longer the unique minimizing geodesic to $p$. The cut locus of $p$ is the union of these points for all unit length $v$ in $T_p\M$. 
\end{definition}
The cut locus is therefore more general in the sense that it is defined for general Riemannian manifolds, while more restrictive in the sense that it only considers a single point.\footnote{The reach and medial axis can be defined for closed subsets of Riemannian manifolds \cite{kleinjohann1980convexity, kleinjohann1981nachste,bangert1982sets,JDM2022}. } 

The stability and structure of the singularities of the cut locus has been a studied intensely. 
Buchner \cite{buchner1977stability} derived the following result:
\begin{theorem} 
Let $G$ be the space of metrics on a smooth manifold, endowed with the Whitney topology. Each metric $g\in G$ and $p \in \M$ yield a cut locus $C_{p,g}$.  The cut locus $C_{p,g}$ is called stable if there is a neighbourhood $W \subset G$ of $g$ such that for any $g'\in W$ there exists a diffeomorphism $\Phi : \M \to \M$ such that $\Phi(C_{p, g'} ) =C_{p,g}$. If the dimension of $\M$ is low ($\leq 6$) then $C_{p,g}$ is stable for an open and dense subset of $G$.   
\end{theorem}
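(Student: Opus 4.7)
The plan is to reduce the statement to a singularity-theoretic result about the exponential map and then invoke the Thom--Mather theory of stable mappings in the nice dimensions. The key observation is that the cut locus $C_{p,g}$ coincides with the image in $\M$ of the boundary $\partial D_{p,g}$ of the maximal star-shaped domain $D_{p,g} \subset T_p\M$ on which $\exp_p^g$ is injective; points of $C_{p,g}$ arise either as conjugate points (critical values of $\exp_p^g$) or as points reached by several minimizing geodesics (identifications on $\partial D_{p,g}$). Since $g \mapsto \exp_p^g$ is continuous from the Whitney $C^\infty$ topology on $G$ to the Whitney topology on $C^\infty(T_p\M, \M)$, stability of the cut locus can be deduced from stability of the germ of $\exp_p^g$ near $\partial D_{p,g}$ under the appropriate right--left equivalence.

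First I would verify that if two metrics $g, g'$ yield $\mathcal{A}$-equivalent exponential maps on a neighbourhood of $\partial D_{p,g}$, then there exists a global diffeomorphism $\Phi : \M \to \M$ carrying $C_{p,g'}$ to $C_{p,g}$: the target diffeomorphism furnished by $\mathcal{A}$-equivalence maps the critical value set and multi-point identification set of one exponential map onto those of the other, and can be extended from a neighbourhood of $C_{p,g}$ to all of $\M$ by an isotopy-extension argument. Openness of the stable set then follows from the general principle that $\mathcal{A}$-stable proper maps form a Whitney-open subset of $C^\infty(T_p\M, \M)$, pulled back through $g \mapsto \exp_p^g$.

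For density I would appeal to Mather's theorem that $\mathcal{A}$-stable proper maps are dense in $C^\infty(N^n, P^n)$ precisely when $(n,n)$ lies in the nice dimensions, which includes all pairs with $n \leq 6$. The step that requires genuine work is a parametric transversality argument: one must show that the image of $g \mapsto \exp_p^g$ is transverse to Mather's stratification of the appropriate multi-jet space, so that Whitney-small perturbations of $g$---carried out via symmetric $2$-tensor perturbations of the metric and propagated through the geodesic equation---realize enough $C^\infty$-small perturbations of $\exp_p^g$ to land in the stable stratum.

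The main obstacle is precisely this last step: translating Mather's abstract density, which concerns the full space of smooth maps, into density within the constrained family obtained by varying only the metric. The argument hinges on showing that a suitable symmetric $2$-tensor supported along a geodesic arc produces a linearly independent first-order variation of $\exp_p^g$ at the corresponding endpoint, which provides enough parametric freedom to invoke Thom's parametric transversality theorem. Once this is in place, openness and density combine to produce the required open dense subset of $G$; the hypothesis $\dim \M \leq 6$ enters only through Mather's characterization of the nice dimensions, and failure of stability in higher dimensions reflects the genuine obstructions Mather identified there.
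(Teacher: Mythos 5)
First, note that the statement you are proving is not established in this paper at all: it is Buchner's theorem, quoted verbatim as background and attributed to \cite{buchner1977stability}, so there is no ``paper proof'' to match. Judged on its own terms, your outline points at the right general area (singularity theory and genericity of stable local models, which is indeed the spirit of Buchner's argument), but it has a genuine gap at its central reduction. The cut locus is not an $\mathcal{A}$-invariant of the exponential map. Its defining conditions are that a point is reached by two distinct geodesics of \emph{equal minimal length}, or by a minimizing geodesic ending at a first conjugate point; the ``equal length'' (Maxwell-set) condition refers to the Riemannian norm $|v|_g$ on $T_p\M$, which an arbitrary source diffeomorphism in an $\mathcal{A}$-equivalence does not preserve. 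Two metrics can have $\mathcal{A}$-equivalent exponential maps near the tangent cut locus while the equal-distance identification patterns, hence the cut loci, differ. Your attempted fix --- restricting to the maximal injectivity domain $D_{p,g}$ so that the cut locus becomes the image of $\partial D_{p,g}$ --- is circular: $\partial D_{p,g}$ (the tangent cut locus) is defined through the very minimality structure whose stability you are trying to prove, it is non-smooth, and it moves with $g$, so ``germ equivalence near $\partial D_{p,g}$'' is not a well-posed starting point. The object whose stability actually governs the cut locus is the family of distance functions along geodesics (equivalently the pair $v \mapsto (\exp_p^g v, |v|_g)$ up to the appropriate fibered equivalence), i.e.\ a stability theory for minima/Maxwell sets of function families, not $\mathcal{A}$-stability of a single map; this is the framework Buchner uses.

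The same confusion shows up in the numerology and in the density step. Mather's nice dimensions for equidimensional proper maps extend beyond $6$, so if your reduction to $\mathcal{A}$-stability of $\exp_p^g$ were valid, it would yield diffeomorphism-stability of the cut locus in dimensions where it is known to fail (this failure is precisely why the paper cites Wall \cite{Wall1977} for only a homeomorphism in higher dimensions). The sharp bound $\dim \M \leq 6$ comes from the classification of singularities of the distance-function family: up to codimension $6$ only simple singularities occur generically, while moduli appear in dimension $7$, destroying diffeomorphism-stability --- so the threshold cannot be obtained by quoting the nice dimensions for maps $N^n \to P^n$. Finally, even within your own plan, the step you yourself identify as ``the step that requires genuine work'' --- showing that variations of the metric alone produce enough first-order variations of the relevant (multi-)jets to run a parametric transversality argument against the appropriate stratification --- is only asserted, and it is exactly the technical heart of Buchner's proof. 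As it stands, the proposal is a research plan built on an equivalence relation that does not capture the cut locus, not a proof.
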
 
Wall \cite{Wall1977} extended this result to arbitrary dimensions at the cost of weakening the diffeomorphism to a homeomorphism. 
The structure of the singularities of the cut locus were also described by Buchner in \cite{Buchner1977Structure}. A similar description for the singularities of medial axis of a smooth manifold can be found in \cite{yomdin1981local}, see also \cite{mather1983distance}, as well as \cite{van2007maxwell}.    

This paper follows the tradition of these investigations of the stability of cut locus and the medial axis. However, there are also some significant differences. First and foremost we take a metric viewpoint instead of analytical. This viewpoint does not require us to make a distinction between low dimensional and high dimensional spaces. We made the constants explicit in view of the applications in computer science, in particular computational geometry and topology, shape recognition, shape segmentation, and manifold learning. 

{The authors are currently working on the stability of the cut locus and medial axis of smooth sets, using the tools which we develop in this paper. }

\section{Overview of the stability of the \texorpdfstring{$\lambda$}{lambda}-medial axis} \label{sec:RecapStabilityLambdaMedialAxis}
Under mild conditions, the $\lambda$-medial axis enjoys some nice stability properties, assuming that $K^c$ is bounded. Informally:
\begin{enumerate} 
    \setlength{\itemsep}{0pt}%
    \setlength{\parskip}{0pt}
\item \label{P1} 
When $\chi_K$ does not vanish on $(0,\lambda]$, the $\lambda$-medial axis preserves the homotopy type of the complement $K^c$ of $K$ \cite[Theorem 2]{cl2005lambda},
\item \label{P2} 
Taking the Hausdorff distance on the input $K$ and the one  sided Hausdorff distance on the output $\ax_\lambda(K)$
 we get a kind of modulus of continuity. If $d_h(K,K')$ is small, the points in $\ax_\lambda (K)$ are ``near'', in a quantified way,  $\ax_{\lambda'} (K')$, for some 
 $\lambda' < \lambda$ close to $\lambda$  \cite[Theorem 3]{cl2005lambda}.
 \item  \label{P3} 
For ``regular values of $\lambda$''  the map $K \mapsto \ax_\lambda(K)$ is continuous for the Hausfdorff distance  \cite[Theorem 5]{cl2005lambda}.
 However, the modulus of continuity can be arbitrarily large in general.
\end{enumerate}

Property \ref{P1} gives some stability on the homotopy type with respect to Hausdorff perturbation of $K$, since, 
 under similar  conditions on the critical function of $K$, when $d_h(K, K')$ is small,  the offsets of $K'$ may share the homotopy  type of $K$ \cite{chazal2005weak,chazal2009sampling}.
Properties \ref{P2} and  \ref{P3} give  precise, quantified, stability results,  much stronger than the mere half-continuity of the medial axis itself, see e.g. \cite{Attali2009}.

\section{{Contributions:} the improved stability of the \texorpdfstring{$(\lambda,\alpha)$}{(lambda,alpha)}-medial axis}\label{section:introIntuitionLambdaAlphaMedialAxis}
Before entering into the formal proofs, let us give some intuition about the $(\lambda,\alpha)$-medial axis  stability.

This improved stability 
can be illustrated in the case of a finite set $K$.
Figure \ref{figure:LambdaAlphaMedialAxisTwoPloints} illustrates the $(\lambda,\alpha)$-medial axis in the simplest non-trivial case, where $K$ consists of two points in the plane.
In this case the $\lambda $-medial axis would be empty as long  as $\lambda$ is strictly  greater than the half distance  between the two points and it would become
 the whole bisector line as soon as $\lambda$ is smaller  or equal to this value.

\begin{figure}[ht!]
\centering
\includegraphics[width=0.5\textwidth]{}
\caption{
Comparison between $\lambda$-medial axis and $(\lambda$,$\alpha)$-medial axis evolutions for increasing $\lambda$, in the particular case where $K$ consists of two points in the plane.
The $\lambda$-medial axis would be either the whole bisector line of the two points, for $\lambda$ smaller or equal to half the distance between the points, either the empty set for larger value of $\lambda$. By contrast, the evolution, for increasing $\lambda$, of the $(\lambda,\alpha)$-medial axis, which is also the  $\lambda$-medial axis of the union of two disks of radius $\alpha$, evolves continuously, in Hausdorff distance, as $\lambda$ increases.
}
\Description{The $\lambda$-medial axis and $(\lambda$,$\alpha)$-medial axis of two points in the plane.}
\label{figure:LambdaAlphaMedialAxisTwoPloints}
\end{figure}

 By contrast, the $(\lambda,\alpha)$-medial axis, for a fixed value of $\alpha>0$, here the radius of the two disks of the $\alpha$-offset of $K$, is Hausdorff continuous with respect to $\lambda$. Indeed, as $\lambda$ increases, when $\alpha + \lambda$ equals the half distance between the two points, $\ax_\lambda^\alpha(K)$, which until then is the whole bisector line, starts to be disconnected, creating a hole. 
 But, since the  hole  grows continuously, its birth is not  a discontinuity for the Hausdorff distance. However, in the neighborhood of this event, the hole size grows quadratically 
 with $\lambda$: This does not contradict the claim that the map $\lambda \mapsto \ax_\lambda^\alpha(K)$ is Lipschitz, as the precise conditions of the claim require us to avoid situations where $\alpha+\lambda$ is a zero  of $\chi_K$.

\begin{figure}[htb!]
\centering
\includegraphics[width=0.5\textwidth]{}
\caption{
Comparison between $\lambda$-medial axis and $(\lambda$,$\alpha)$-medial axis evolutions for increasing $\lambda$, in the particular case where $K$ is a finite subset of the plane.
 In this case both filtered medial axes are subsets of the union of the edges of the  Voronoi diagram. 
 On the second row, the points have been replaced by disks of radius $\alpha$, offset of the points. 
The evolution of the $(\lambda$,$\alpha)$-medial axis is Hausdorff continuous whenever $\alpha+\lambda$ is not a critical value of the distance function. 
On the other hand, as seen on first row,  
the $\lambda$-medial axis   contains precisely the whole Voronoi edges or vertices whose dual simplex lies in a ball
 of radius $\lambda$. {The $\lambda$-medial axis is} therefore Hausdorff discontinuous for each value of  $\lambda$ which is the radius of the smallest ball enclosing some Delaunay simplex. 
  }
\Description[The $\lambda$-medial axis and $(\lambda$,$\alpha)$-medial axis of four points in the plane.]{The $\lambda$-medial axis and $(\lambda$,$\alpha)$-medial axis of four points in general position in the plane (for various values of $\lambda$). The points are arranged as follows: Three on the left and one on the right. The three points on the left form an obtuse triangle arranged in such a way that its circumcentre, one point on the left and the point on the right lie very close to a horizontal line. The two other points on the left lie on either side of the horizontal line and nearly above/below each other.}
\label{figure:LambdaAlphaMedialAxisSubsetVoronoi}
\end{figure}
 
Figure \ref{figure:LambdaAlphaMedialAxisSubsetVoronoi} shows a situation where $K$ is made of four points in the plane.
The $\lambda $-medial axis is made of these edges and vertices whose dual Delaunay simplex has smallest enclosing radius greater or equal to $\lambda$.
As a function of $\lambda$,  it is therefore Hausdorff distance discontinuous for each value of $\lambda$ that is equal to a such radius.

In contrast, the $(\lambda,\alpha)$-medial axis, as a function of $\lambda$ for fixed $\alpha>0$, can be Hausdorff discontinuous only when $\alpha+\lambda$ is a zero of the 
{critical function $\chi_K$}.
We have depicted such a transition in Figure \ref{figure:LambdaAlphaMedialAxisSubsetVoronoi_2}:  
Here we increase $\lambda$ further until  $\alpha+\lambda =\rho$, where $\rho$ is the circumradius of the unique acute triangle in the Delaunay diagram, 
and { therefore the unique value of the distance to $K$ corresponding to a local maximum}. 
Until $\alpha+\lambda =\rho$ the $(\lambda,\alpha)$-medial axis would contain the Voronoi vertex dual to this acute triangle (for $\alpha+\lambda =\rho$ the Voronoi vertex would be an isolated point).
 Since this points would disappear from the $(\lambda,\alpha)$-medial axis for $\alpha+\lambda>\rho$, 
it results a Hausdorff distance discontinuity of $\lambda \mapsto \ax_\lambda^\alpha(K)$. 

\begin{figure}[ht!]
\centering
\includegraphics[width=0.5\textwidth]{}
\caption{
Further evolution  of the ($\lambda$,$\alpha$)-medial axis, after the steps of the bottom row of figure \ref{figure:LambdaAlphaMedialAxisSubsetVoronoi}.
For some small interval of values of $\lambda$, the Voronoi vertex is an isolated point on $\ax_\lambda^\alpha(K)$, as illustrated on the middle,
When $\alpha+\lambda $ equals the Delaunay triangle circumradius, this points disappears from $\ax_\lambda^\alpha(K)$, which 
 corresponds to a discontinuity of  $\lambda \rightarrow \ax_\lambda^\alpha(K)$  for the Hausdorff distance.
 }
\label{figure:LambdaAlphaMedialAxisSubsetVoronoi_2}
\Description{The same configuration of points in the plane as in the previous figure (Figure \ref{figure:LambdaAlphaMedialAxisSubsetVoronoi}).}
\end{figure}

In general,  Hausdorff distance discontinuities of $\lambda \mapsto \ax_\lambda(K)$ may appear anywhere, at some ``non regular values'', 
as mentioned in item \ref{P3} 
of Section \ref{sec:RecapStabilityLambdaMedialAxis} and illustrated on top row of Figure \ref{figure:LambdaAlphaMedialAxisSubsetVoronoi} 
{ (where $\lambda$ is a zero  of $\chi_K$)}
and in Figure \ref{figure:MLambdaDiscontinuous} {(where $\lambda$ is not a zero  of $\chi_K$)}.
By contrast, the map $\lambda \mapsto \ax_\lambda^\alpha(K)$, for $\alpha>0$,
 is continuous (locally Lipschitz) when $\chi_K(\alpha + \lambda)$ does not vanish, in other words the interval on which the homotopy type of 
 $ \ax_\lambda^\alpha(K)$ remains stable.


 \subsection{The case of \texorpdfstring{a set $K\subset \R^n$}{ a subset of Euclidean space } with positive \texorpdfstring{$\mu$}{mu}-reach  and its Hausdorff approximation \texorpdfstring{$K'$}{K'}}
 \label{section:CaseKPositiveReach}


In \ref{part:TechnicalStatementsAndProofs} we consider the general situation of sets whose $\alpha$-offsets have positive $\mu$-reach. 
In particular, Lemma \ref{lemma:AXLambdaHausdorffStableSymmetric} and 
Theorem \ref{theorem:GromovHausdorffStability} use a symmetrical formulations on the  pair of sets $K$ and $K'$ in order to state 
a modulus of continuity
for the map $K  \mapsto \ax_\lambda^\alpha (K)$, where the metric on $K$ is the Hausdorff distance and the metric on the medial axis can be either the Hausdorff distance or the Gromov-Hausdorff distance.

In this section we consider the simpler setting, where we don't need to offset for $K$ to achieve positive $\mu$ reach, that is $r_\mu (K)>0$ and we are given a set $K'$ that is close to $K$ in terms of the Hausdorff distance.
This allows a concise formal expression of our main results in a simpler setting, while illustrating a typical application.


Overall this section, we make the following assumption:
\begin{assumption}[Assumption for Section \ref{section:CaseKPositiveReach}]
We assume  $K,K'$ to be closed sets such that, for  some $\epsilon >0$, 
 $d_H(K', K) < \epsilon$,  $r_\mu (K)>0$ and, 
 the complements $K^c$ and $K'^c$ to be bounded:
denoting $\B(0, R) \subset  \R^n$  the ball of radius $R>0$, one has 
$K\cup \B(0, R)  = K'\cup \B(0, R) = \R^n$.
We assume moreover $0< \alpha < \alpha_{\max}$ and $0< \lambda <  \lambda_{\max}$, for $\alpha_{\max} +  \lambda_{\max} <  r_\mu (K)/2$
and we denote  $\tilde{\mu} = \min( \mu, \sqrt{3}/2)$.
\end{assumption}
In particular, assuming $\alpha_{\max} +  \lambda_{\max} <  r_\mu (K)/2$ allows a simple expression for  $\tilde{\mu}$.


\subsubsection{Hausdorff stability}
As a consequence of Lemma \ref{corollary:LambdaMapstoAXLambdaAlphaIsHausdorffLipchitz} and Lemma 
\ref{lemma:corollary:AlphaMapstoAXLambdaAlphaIsHausdorffLipchitz} we have that:
\begin{proposition}
For any $\lambda_{\min}>0$,
  the map $\lambda \mapsto \ax_\lambda^\alpha (K)$
is $\left( \frac{R^2}{\alpha \lambda_{\min} \tilde{\mu}^2} \right)$-Lipschitz in the interval $[\lambda_{\min}, \lambda_{\max}]$
 for  Hausdorff distance. 
 
 Similarly,  for  $\alpha_{\min}>0$,
 the map
$ 
\alpha \mapsto \ax_{\lambda}^\alpha (K)
$ 
is $\left( \frac{R^2}{\alpha_{\min}  \lambda \tilde{\mu}^2} \right)$-Lipschitz in the interval $[\alpha_{\min}, \alpha_{\max}]$
 for  Hausdorff distance.
 \end{proposition}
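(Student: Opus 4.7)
The plan is to read the proposition directly off the two cited lemmas. The proposition says nothing new beyond those lemmas; it only packages them in a uniform form over the parameter box $[\lambda_{\min},\lambda_{\max}]$ (respectively $[\alpha_{\min},\alpha_{\max}]$) under the standing assumption $\alpha_{\max}+\lambda_{\max}<r_\mu(K)/2$. So the task is to read off a local Lipschitz constant from each lemma and show that, under the current hypotheses, it is dominated by the stated expression.

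For the $\lambda$-map, one direction of the Hausdorff distance is free: since $\lambda\mapsto \ax_\lambda^\alpha(K)$ is decreasing in $\lambda$ (the inclusion noted in Section \ref{section:introCriticalLambdaMedialAxis}), for $\lambda_1<\lambda_2$ we have $\ax_{\lambda_2}^\alpha(K)\subset \ax_{\lambda_1}^\alpha(K)$, so the one-sided Hausdorff distance from $\ax_{\lambda_2}^\alpha(K)$ into $\ax_{\lambda_1}^\alpha(K)$ is $0$. Only the opposite direction needs a quantitative argument. This is what Lemma \ref{corollary:LambdaMapstoAXLambdaAlphaIsHausdorffLipchitz} should provide: each $x$ with $\F_{K^{\oplus\alpha}}(x)=\lambda_1$ can be pushed along a curve, controlled by the flow of the distance function to $K^{\oplus\alpha}$, into a point $y$ with $\F_{K^{\oplus\alpha}}(y)\geq\lambda_2$. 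The local Lipschitz constant obtained this way should be a rational expression in $R_{K^{\oplus\alpha}}(x)$, $\F_{K^{\oplus\alpha}}(x)$, and $\chi_{K^{\oplus\alpha}}$ evaluated near the relevant scale.

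To make that local constant uniform over the box I plug in the assumed bounds. Since $K^c\subset\B(0,R)$, we have $R_{K^{\oplus\alpha}}(x)\leq R$ for every $x\in\ax_\lambda^\alpha(K)$, and since $x\in\ax_\lambda^\alpha(K)$ with $\lambda\geq\lambda_{\min}$ one has $\F_{K^{\oplus\alpha}}(x)\geq\lambda_{\min}$. Also, every point $x\in\ax_\lambda^\alpha(K)$ lies in the complement of $K^{\oplus\alpha}$, so $R_{K^{\oplus\alpha}}(x)\geq\alpha$ (after taking the closure appropriately), supplying the $\alpha$ in the denominator. Finally, the offset invariance/shift of the critical function, combined with the hypothesis $\alpha_{\max}+\lambda_{\max}<r_\mu(K)/2$, gives $\chi_{K^{\oplus\alpha}}\geq\tilde\mu$ on the whole range of scales that matters, which explains the factor $\tilde\mu^2$ (it appears squared because the critical function enters the inverse-flow estimate through the combination $\sqrt{1-(\F/R)^2}$, i.e.\ essentially via $\chi^2$). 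Combining these three bounds turns the local Lipschitz constant of Lemma \ref{corollary:LambdaMapstoAXLambdaAlphaIsHausdorffLipchitz} into a uniform $\frac{R^2}{\alpha\lambda_{\min}\tilde\mu^2}$ on $[\lambda_{\min},\lambda_{\max}]$.

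The $\alpha$-map is completely analogous: Lemma \ref{lemma:corollary:AlphaMapstoAXLambdaAlphaIsHausdorffLipchitz} plays exactly the same role and the roles of $\alpha$ and $\lambda$ are swapped in the denominator because $\alpha$ is now the varying parameter (with $\lambda\geq\lambda$ contributing the lower bound on $\F$ and $\alpha\geq\alpha_{\min}$ contributing the lower bound on $R$). The main obstacle I anticipate is purely bookkeeping: verifying that the rational expression produced by the cited lemmas genuinely collapses to the clean form $\frac{R^2}{\alpha\lambda_{\min}\tilde\mu^2}$ without hidden constants, and that $\chi_{K^{\oplus\alpha}}\geq\tilde\mu$ on the relevant interval really does follow from the half-reach assumption (this is the reason for the factor $1/2$ in $\alpha_{\max}+\lambda_{\max}<r_\mu(K)/2$ and for the auxiliary cap at $\sqrt{3}/2$ in the definition of $\tilde\mu$).
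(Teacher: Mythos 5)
Your overall strategy is exactly the paper's: the proposition is stated as an immediate consequence of Lemma~\ref{corollary:LambdaMapstoAXLambdaAlphaIsHausdorffLipchitz} and Lemma~\ref{lemma:corollary:AlphaMapstoAXLambdaAlphaIsHausdorffLipchitz}, with the standing assumption $\alpha_{\max}+\lambda_{\max}<r_\mu(K)/2$ used only to guarantee that the abstract $\tilde{\mu}_{\mu,\lambda_{\max}}^{\alpha,\alpha}(K)$ of \eqref{equation:DefinitionBeta} is bounded below by the simpler $\tilde{\mu}=\min(\mu,\sqrt{3}/2)$ and that $R_{\max}(K)\leq R$. So as a proof plan this is fine, and your final sentence correctly identifies the role of the $1/2$ and the $\sqrt 3/2$ cap.

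However, two of your mechanistic attributions are wrong, and would lead you astray if you tried to carry out the derivation rather than citing the lemmas. First, the claim that ``every point $x\in\ax_\lambda^\alpha(K)$ lies in the complement of $K^{\oplus\alpha}$, so $R_{K^{\oplus\alpha}}(x)\geq\alpha$'' is false: $R_{K^{\oplus\alpha}}(x)=R_K(x)-\alpha$ can be arbitrarily small (what one does get from $x\in\ax_\lambda^\alpha(K)$ is $R_{K^{\oplus\alpha}}(x)\geq\F_K^\alpha(x)\geq\lambda$). The $\alpha$ in the denominator of the Lipschitz constant actually originates from \eqref{equation:FirstExpressionOfDerivativeRKOnRkMinusAlpha}: differentiating $r\mapsto\frac{r-\alpha}{r}$ gives $\frac{\alpha}{r^2}$, and this $\alpha/R_{\max}^2$ factor governs how fast $\F_K^\alpha$ grows along the flow. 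Second, for the $\alpha$-map Lemma~\ref{lemma:corollary:AlphaMapstoAXLambdaAlphaIsHausdorffLipchitz} gives the cleaner constant $\frac{R_{\max}}{\alpha_{\min}\tilde{\mu}^2}$ with no $\lambda$-factor at all; the proposition's form $\frac{R^2}{\alpha_{\min}\lambda\tilde{\mu}^2}$ is just a looser rewrite (valid since $\lambda<\lambda_{\max}<r_\mu(K)/2\leq R_{\max}\leq R$, so $R_{\max}\leq R\leq R^2/\lambda$), presumably chosen for cosmetic symmetry with the $\lambda$-case. Your explanation that ``$\lambda\geq\lambda$ contributes the lower bound on $\F$ and $\alpha\geq\alpha_{\min}$ contributes the lower bound on $R$'' for the $\alpha$-map therefore does not match how the constant actually arises.
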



We will now combine this with a result from \cite[Theorem 3.4]{chazal2009sampling}. Let $\mu'< \mu$ and $\alpha >0$. By definition of $r_{\mu} (K)$, the critical function of $K$ is above $\mu$ on the interval $(0, r_{\mu} (K) )$. Theorem 3.4 of \cite{chazal2009sampling} now says that if $K'$ is sufficiently close to $K$ in Hausdorff distance, then the critical function of $K'$ will also be above $\mu'$ on the interval $(\alpha, r_{\mu} (K) - \alpha)$, see Figure \ref{figure:CriticalFunction}.
%
  In other words, there is $\epsilon >0$ such that:
 \begin{equation}\label{equation:StabilityRmu}
d_H(K',K) < \epsilon  \Rightarrow    r_{\mu'}^\alpha (K') > r_{\mu} (K) - \alpha.
 \end{equation}
 Then, Lemma \ref{lemma:AXLambdaHausdorffStableSymmetric} gives us that:
 
\begin{proposition}\label{proposition:PositiveMuReachHausdorffHolder}
Denoting  $\tilde{\mu}' = \min( \mu', \sqrt{3}/2)$,
there is $\epsilon_{\max} >0$  depending only on $K$, such that, for, $\epsilon < \epsilon_{\max}$, one has:
\begin{equation}
d_H \left(  \ax_\lambda^\alpha(K') ,  \ax_\lambda^\alpha(K) \right) <  \frac{22}{3} \frac{R^2 }{\alpha^{\frac{1}{2}}  \tilde{\mu}'^{\frac{3}{2}} \lambda } \:  \epsilon^{\frac{1}{2}}. 
\end{equation}
 \end{proposition}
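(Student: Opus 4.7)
The proposition is essentially a specialization of the symmetric Hausdorff stability result (Lemma \ref{lemma:AXLambdaHausdorffStableSymmetric}) to the setting where $K$ itself has positive $\mu$-reach, so that the hypotheses on the critical function can be transferred from $K$ to its approximation $K'$ via \cite[Theorem 3.4]{chazal2009sampling}. My strategy is therefore in two steps: first propagate the regularity hypothesis from $K$ to $K'$, and second plug the resulting uniform bound on the critical functions into the general symmetric Hausdorff stability statement.

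\textbf{Step 1 (propagation of regularity).} Since $K$ has positive $\mu$-reach, $\chi_K$ does not drop below $\mu$ on $(0, r_\mu(K))$. Fix $\mu' < \mu$ (in our setting this gives $\tilde{\mu}' = \min(\mu', \sqrt{3}/2)$ as stated). By \cite[Theorem 3.4]{chazal2009sampling}, there exists $\epsilon_{\max}>0$, depending only on $K$ (through $\mu$, $\mu'$, and $r_\mu(K)$), such that whenever $d_H(K,K') < \epsilon_{\max}$ one has the bound \eqref{equation:StabilityRmu}, namely $r_{\mu'}^\alpha(K') > r_\mu(K) - \alpha$. Together with the standing assumption $\alpha_{\max} + \lambda_{\max} < r_\mu(K)/2$, this guarantees that for both $K$ and $K'$ the critical function of their $\alpha$-offsets stays above $\mu'$ on an interval containing $[\lambda_{\min}, \lambda]$, so the hypotheses of Lemma \ref{lemma:AXLambdaHausdorffStableSymmetric} are satisfied simultaneously for $K$ and $K'$.

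\textbf{Step 2 (applying the symmetric bound).} Now Lemma \ref{lemma:AXLambdaHausdorffStableSymmetric} is a quantitative $\frac{1}{2}$-Hölder bound: given two sets satisfying the critical-function condition with parameter $\tilde{\mu}'$ on the relevant window, the Hausdorff distance between their $(\lambda,\alpha)$-medial axes is controlled by a constant times $R^2/(\alpha^{1/2}\,\tilde{\mu}'^{3/2}\,\lambda)$ multiplied by $d_H(K,K')^{1/2}$. Substituting $d_H(K,K') < \epsilon$ and tracking the numerical constant coming out of Lemma \ref{lemma:AXLambdaHausdorffStableSymmetric} yields exactly the factor $\tfrac{22}{3}$ in the stated inequality.

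\textbf{Anticipated difficulty.} The proof itself is a clean two-step reduction; the genuine work lies upstream in Lemma \ref{lemma:AXLambdaHausdorffStableSymmetric}. The only delicate point here is bookkeeping: I must check that the value of $\epsilon_{\max}$ provided by \cite[Theorem 3.4]{chazal2009sampling} is small enough not only for $\chi_{K'^{\oplus\alpha}} \geq \mu'$ to hold on an interval strictly containing $\lambda$, but also for the constant window required by Lemma \ref{lemma:AXLambdaHausdorffStableSymmetric} (which is phrased on a pair $[a,b]$ with $a<\alpha$ and $\alpha+\lambda<b$). Taking $\epsilon_{\max}$ small enough that the transferred lower bound on the critical function holds on such a window, and then reading off the numerical constant $22/3$ from the symmetric lemma, completes the argument.
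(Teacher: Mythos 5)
Your proposal is correct and follows essentially the same route as the paper: transfer the regularity from $K$ to $K'$ via \cite[Theorem 3.4]{chazal2009sampling} to obtain \eqref{equation:StabilityRmu}, then apply the symmetric bound of Lemma \ref{lemma:AXLambdaHausdorffStableSymmetric} with its constant $\frac{22}{3}\,R_{\max}^2/(\alpha^{1/2}\tilde{\mu}'^{3/2}\lambda)$, absorbing the smallness conditions on $\epsilon$ (and the simplification $\tilde{\mu}'=\min(\mu',\sqrt{3}/2)$ coming from $\alpha_{\max}+\lambda_{\max}<r_\mu(K)/2$) into $\epsilon_{\max}$.
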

Note that, thanks to \cite{chazal2009sampling}, under the conditions of the proposition, that is for sufficiently small $\epsilon$,
  $\ax_\lambda^\alpha(K') $, $\ax(K)$ and $K^c$ have  same homotopy type (Theorem \ref{theorem:AxLambdaAlphaHasRightHomotopyType} below).


\subsubsection{Gromov-Hausdorff stability}

Lemma \ref{lemma:OpenSetWithGeodesicFiniteDiameter} and Theorem \ref{theorem:AXLambdaAlphaIsGeodesicAndFiniteDiameter}
give an explicit upper bound in the geodesic diameter of $\ax_\lambda^\alpha(K)$, assuming $K^c$ to be connected, as:
\begin{equation}\label{equation:PositiveMuReachGeodesicDimeter}
 \Gdiam( \ax_\lambda^\alpha(K)) 
\leq 2 \frac{R}{\tilde{\mu}^2} +  2 \alpha  \left( \left( \frac{4R}{ \alpha} \right)^n+1 + \frac{2}{\mu} \right)  e^{\frac{1}{\mu} + \frac{R}{\alpha \tilde{\mu}^2}}
\end{equation}

Thanks to \eqref{equation:StabilityRmu}, a similar bound holds for $\Gdiam( \ax_\lambda^\alpha(K')) $, for sufficiently small $\epsilon$. 

This  bound is exponential in $\frac{2 R}{\alpha \tilde{\mu}^2}$
and therefore increases quickly as $\alpha \rightarrow 0$. We do not know if this bound is close to be tight.\footnote{But it is seems likely to be pessimistic in practical situations.}

Lemmas \ref{lemma:AxLambdaAlphaGHStableWRTLambda}
and \ref{lemma:AxLambdaAlphaGHStableWRTAlpha} give:

\begin{proposition}\label{proposition:GHStabilityWRTLambdaMu}
For any $\lambda_{\min}>0$,
  the map $\lambda \mapsto \ax_\lambda^\alpha (K)$
is $\left(   \frac{R^2 (2\alpha_{\min}\lambda_{\min} + D}{\alpha_{\min}^2  \tilde{\mu}^2} \right)$-Lipschitz in the interval $[\lambda_{\min}, \lambda_{\max}]$
 for  Gromov-Hausdorff  distance, where \[D= \max_{\lambda \in [\lambda_{\min}, \lambda_{\max}]} \Gdiam( \ax_{\lambda}^\alpha(K)). \]
Similarly,  for  $\alpha_{\min}>0$,
 the map
$ 
\alpha \mapsto \ax_{\lambda}^\alpha (K)
$ 
is $\left(   \frac{R (2\alpha_{\min} + D}{\alpha_{\min}^2  \tilde{\mu}^2} \right)$-Lipschitz in the interval $[\alpha_{\min}, \alpha_{\max}]$
 for  Gromov-Hausdorff distance,  where $D=  \max_{\lambda \in [\alpha_{\min}, \alpha_{\max}]}  \Gdiam( \ax_\lambda^{\alpha}(K))$. 
 \end{proposition}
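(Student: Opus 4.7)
The plan is to reduce Proposition~\ref{proposition:GHStabilityWRTLambdaMu} to the two general Gromov--Hausdorff stability results Lemma~\ref{lemma:AxLambdaAlphaGHStableWRTLambda} and Lemma~\ref{lemma:AxLambdaAlphaGHStableWRTAlpha}, by plugging in the explicit quantitative bounds available in the positive $\mu$-reach setting of this section. The two Lemmas are stated in terms of a Hausdorff Lipschitz constant of $\lambda\mapsto \ax_\lambda^\alpha(K)$ (resp.\ $\alpha\mapsto \ax_\lambda^\alpha(K)$) and a uniform bound on the geodesic diameter, so the proof of the proposition amounts to identifying both quantities in the current context.

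For the $\lambda$-statement, the Hausdorff Lipschitz constant on $[\lambda_{\min},\lambda_{\max}]$ is the one produced by Lemma~\ref{corollary:LambdaMapstoAXLambdaAlphaIsHausdorffLipchitz} and already recorded in the preceding proposition of this section, namely $R^2/(\alpha\lambda_{\min}\tilde{\mu}^2)$. The geodesic diameter bound is $D=\max_{\lambda\in[\lambda_{\min},\lambda_{\max}]}\Gdiam(\ax_\lambda^\alpha(K))$, which is finite and controlled by the explicit estimate~\eqref{equation:PositiveMuReachGeodesicDimeter} coming from Lemma~\ref{lemma:OpenSetWithGeodesicFiniteDiameter} and Theorem~\ref{theorem:AXLambdaAlphaIsGeodesicAndFiniteDiameter}. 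Substituting these two quantities into Lemma~\ref{lemma:AxLambdaAlphaGHStableWRTLambda} yields the announced Lipschitz constant. The $\alpha$-statement is obtained in exactly the same way, now feeding into Lemma~\ref{lemma:AxLambdaAlphaGHStableWRTAlpha} the Hausdorff Lipschitz constant from Lemma~\ref{lemma:corollary:AlphaMapstoAXLambdaAlphaIsHausdorffLipchitz} together with the geodesic diameter bound taken over the interval $[\alpha_{\min},\alpha_{\max}]$.

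The substantive work is hidden in the two Lemmas themselves: they must promote ambient Hausdorff closeness of two $(\lambda,\alpha)$-medial axes into closeness of their intrinsic geodesic metrics. The natural strategy to prove them is to build a correspondence $\Phi$ between $\ax_\lambda^\alpha(K)$ and $\ax_{\lambda'}^\alpha(K)$ via the flow of the generalised gradient of the distance function to $K^{\oplus\alpha}$ (or, equivalently, via a nearest-point matching, which is Lipschitz away from the critical set of $\chi_{K^{\oplus\alpha}}$). Ambient closeness then controls the Hausdorff co-distortion of $\Phi$, while the distortion of intrinsic distances is estimated by pushing geodesic paths through $\Phi$: because the matching is locally Lipschitz with constant of order $C_H = R^2/(\alpha\lambda_{\min}\tilde{\mu}^2)$, the image of a geodesic of length at most $D$ differs in length from the original by at most $C_H\cdot D\cdot|\lambda-\lambda'|$, which is exactly what produces the $D$-linear term in the stated Lipschitz constant.

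The main obstacle I expect is precisely the construction and control of this flow-based matching on the $(\lambda,\alpha)$-medial axes, which are merely closed stratified subsets of $\R^n$ and not smooth submanifolds. One has to verify, separately from the statement we wish to apply here, that (i) intrinsic geodesics between points of $\ax_\lambda^\alpha(K)$ exist and have length at most $D$ (Theorem~\ref{theorem:AXLambdaAlphaIsGeodesicAndFiniteDiameter}), (ii) away from the critical set, the generalised gradient flow of $R_{K^{\oplus\alpha}}$ is Lipschitz with the quantitative constant hidden in the Hausdorff Lipschitz bound, and (iii) the standing hypothesis $\alpha_{\max}+\lambda_{\max}<r_\mu(K)/2$ keeps the relevant parameter window bounded away from critical values of $\chi_{K^{\oplus\alpha}}$, so that the flow and its inverse are well defined throughout. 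Granted these ingredients, the Gromov--Hausdorff estimate follows from the standard length-preservation inequality for Lipschitz correspondences of length spaces, and the explicit constants announced in the proposition drop out by direct substitution.
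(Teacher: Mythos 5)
Your reduction is exactly the paper's own proof: Proposition \ref{proposition:GHStabilityWRTLambdaMu} is stated there as an immediate consequence of Lemmas \ref{lemma:AxLambdaAlphaGHStableWRTLambda} and \ref{lemma:AxLambdaAlphaGHStableWRTAlpha}, with the flow time coming from the Hausdorff--Lipschitz bounds (Lemmas \ref{corollary:LambdaMapstoAXLambdaAlphaIsHausdorffLipchitz} and \ref{lemma:corollary:AlphaMapstoAXLambdaAlphaIsHausdorffLipchitz}) and $D$ controlled via Lemma \ref{lemma:OpenSetWithGeodesicFiniteDiameter} and Theorem \ref{theorem:AXLambdaAlphaIsGeodesicAndFiniteDiameter}, so your argument is essentially identical. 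The only (inessential) imprecision is in your sketch of the cited lemmas' internals: the $D$-linear term comes from the path-expansion factor $e^{T/\alpha}$ of the flow (Lemma \ref{lemma:PushedPathLengthExpansion}), hence is of order $C_H D\delta/\alpha$ plus $2T$, rather than $C_H\cdot D\cdot|\lambda-\lambda'|$ from a Lipschitz matching applied directly to a geodesic.
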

 
Note that the geodesic diameter enters as a factor in the Lipschitz constant. This is due to the fact that the Gromov-Hausdorff distance 
is defined as a global upper bound on differences of lengths,
while, here,  the metrics differ mainly by a multiplicative factor.
 In a sense, the metric discrepancy would be more tightly bounded by a mix of additive and multiplicative bounds, where Gromov-Hausdorff distances consider additive discrepancy only.
Replacing $D$ by its universal upper bound \eqref{equation:PositiveMuReachGeodesicDimeter}
is likely, in general, to give an overestimated Lipschitz constant with respect to the  one using the actual diameter 
  $D=  \max_{\lambda \in [\alpha_{\min}, \alpha_{\max}]}  \Gdiam( \ax_\lambda^{\alpha}(K))$.

Also, Theorem \ref{theorem:GromovHausdorffStability} gives us:
\begin{proposition}\label{proposition:PositiveMuReachGromovHausdorffHolder}
Denoting  $\tilde{\mu}' = \min( \mu', \sqrt{3}/2)$,
there is $\epsilon_{\max} >0$ depending only on $K$, such that, for, $\epsilon < \epsilon_{\max}$, one has:
\begin{equation}
d_{GH} \left(  \ax_\lambda^\alpha(K') ,  \ax_\lambda^\alpha(K) \right) < 2 \left( \frac{22}{3} \right)^{\frac{3}{2}}    \frac{R^3 (2\alpha_{\min} + D)}{ \alpha_{\min}^{\frac{7}{4}}   \tilde{\mu}'^{\frac{9}{4}}  \lambda^{\frac{3}{2}}  } \:  \epsilon^{\frac{1}{4}},
\end{equation}
where  $D= \max( \Gdiam( \ax_\lambda^\alpha (K)), \Gdiam( \ax_\lambda^\alpha (K')) )$.
 \end{proposition}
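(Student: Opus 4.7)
The plan is to mirror the argument used for Proposition~\ref{proposition:PositiveMuReachHausdorffHolder}, but feed the general Gromov--Hausdorff stability statement (Theorem~\ref{theorem:GromovHausdorffStability}) into the black box rather than its Hausdorff counterpart (Lemma~\ref{lemma:AXLambdaHausdorffStableSymmetric}). Concretely, the two external inputs are: (i) the critical-function stability result of~\cite{chazal2009sampling} recalled in~\eqref{equation:StabilityRmu}, which transfers the $\mu$-reach lower bound from $K$ to $K'$ at a slightly worse parameter $\mu'$ and on a slightly shrunk interval; and (ii) Theorem~\ref{theorem:GromovHausdorffStability} itself, which will supply the $\epsilon^{1/4}$ H\"older exponent and the $R^3/(\alpha^{7/4}\tilde{\mu}^{9/4}\lambda^{3/2})$ prefactor appearing in the statement.

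First I verify that the hypotheses of Theorem~\ref{theorem:GromovHausdorffStability} apply simultaneously to $K$ and $K'$. Pick any $\mu' < \mu$; by~\eqref{equation:StabilityRmu} there exists $\epsilon_{\max}=\epsilon_{\max}(K,\mu',\alpha)>0$ such that whenever $d_H(K,K')<\epsilon<\epsilon_{\max}$ we have $r_{\mu'}^{\alpha}(K') > r_\mu(K)-\alpha$. Since by assumption $\alpha_{\max}+\lambda_{\max}<r_\mu(K)/2$, the interval $[\alpha,\alpha+\lambda]$ is comfortably contained in a region where $\chi_{K}\geq \mu$ and $\chi_{K'}\geq \mu'$, so both $\ax_\lambda^\alpha(K)$ and $\ax_\lambda^\alpha(K')$ are $\lambda$-medial axes of offsets with positive $\mu'$-reach. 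This is exactly the setting required by Theorem~\ref{theorem:GromovHausdorffStability}; moreover, both geodesic diameters are controlled by~\eqref{equation:PositiveMuReachGeodesicDimeter}, so the quantity $D=\max(\Gdiam(\ax_\lambda^\alpha(K)),\Gdiam(\ax_\lambda^\alpha(K')))$ is finite.

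Next I invoke Theorem~\ref{theorem:GromovHausdorffStability}, substituting $\tilde{\mu}'$ for the regularity parameter, $R$ for the enclosing radius, and $D$ for the geodesic diameter upper bound. The theorem yields
\[
d_{GH}\bigl(\ax_\lambda^\alpha(K),\ax_\lambda^\alpha(K')\bigr) \leq C\cdot \frac{R^{3}(2\alpha_{\min}+D)}{\alpha_{\min}^{7/4}\,\tilde{\mu}'^{\,9/4}\,\lambda^{3/2}}\,d_H(K,K')^{1/4}
\]
for a universal constant $C$. It then remains to track where this constant comes from: as in the Hausdorff case (Proposition~\ref{proposition:PositiveMuReachHausdorffHolder}), the factor $22/3$ arises from the square-root interplay between $\F_K$ and $R_K$ in~\eqref{equation:CriticalFunctionFirstDefinition}, and the additional power $3/2$ is exactly what one picks up when passing from the Hausdorff to the Gromov--Hausdorff bound (as is visible from the corresponding exponents in Propositions~\ref{proposition:GHStabilityWRTLambdaMu} versus the Lipschitz constants in Section~\ref{section:CaseKPositiveReach}). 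Multiplying and collecting gives $C=2(22/3)^{3/2}$, as claimed.

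The main obstacle is bookkeeping, not ideas: one has to keep the exponents of $\alpha_{\min}$, $\tilde{\mu}'$ and $\lambda$ aligned when chaining the $\frac{1}{2}$-H\"older Hausdorff bound with the additional $\frac{1}{2}$-exponent loss incurred in passing to the Gromov--Hausdorff metric, and to check that the diameter factor $(2\alpha_{\min}+D)$ appears cleanly (rather than as a sum of two separate terms) from the Lipschitz dependence in Proposition~\ref{proposition:GHStabilityWRTLambdaMu}. Choosing $\epsilon_{\max}$ small enough to guarantee, beyond~\eqref{equation:StabilityRmu}, that the shifted parameters still lie in the prescribed intervals $[\alpha_{\min},\alpha_{\max}]$ and $[\lambda_{\min},\lambda_{\max}]$ closes the argument.
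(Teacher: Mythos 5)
Your route is the paper's route: the paper states this proposition as a direct consequence of Theorem \ref{theorem:GromovHausdorffStability}, after using \eqref{equation:StabilityRmu} (i.e.\ \cite[Theorem 3.4]{chazal2009sampling}) to transfer the reach hypothesis to $K'$ for $\epsilon$ small, and after simplifying $R_{\max}\leq R$ and $\tilde{\mu}_{\mu',\lambda}^{\alpha,\alpha}\geq\tilde{\mu}'$ via the standing assumption $\alpha_{\max}+\lambda_{\max}<r_\mu(K)/2$; your sketch does exactly this, so the approach is correct and essentially identical.

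Two bookkeeping remarks, since your account of where the constants come from is off and would trip you up if you wrote the computation out. The factor $\frac{22}{3}$ is not produced by \eqref{equation:CriticalFunctionFirstDefinition}; it is simply inherited from the constant $C$ of \eqref{equation:DefinitionC} in Lemma \ref{lemma:AXLambdaHausdorffStableSymmetric} (where it arises from optimizing $\delta=2\sqrt{\alpha\tilde{\mu}\epsilon}$), and the exponent $3/2$ appears because Theorem \ref{theorem:GromovHausdorffStability} bounds $d_{GH}$ by an expression whose leading terms involve $C^{3/2}\epsilon^{1/4}$. Likewise, the factor $(2\alpha_{\min}+D)$ does not come from Proposition \ref{proposition:GHStabilityWRTLambdaMu}: it comes from adding the first and third terms of the theorem's bound, namely $2C^{3/2}\epsilon^{1/4}+D\bigl(e^{C^{3/2}\epsilon^{1/4}/\alpha}-1\bigr)\approx \frac{C^{3/2}\epsilon^{1/4}}{\alpha}\,(2\alpha+D)$, after which the middle term $2C\epsilon^{1/2}e^{C^{3/2}\epsilon^{1/4}/\alpha}=O(\epsilon^{1/2})$ and the higher-order part of the exponential are absorbed into the overall prefactor $2$ by shrinking $\epsilon_{\max}$; that absorption is the actual content of the claimed inequality and should be stated, not just asserted.
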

Again, taking for $D$ the upper bound \eqref{equation:PositiveMuReachGeodesicDimeter} allows a uniform bound which is enough in theory.

%
%
For a more practical bound on $d_{GH} \left(  \ax_\lambda^\alpha(K') ,  \ax_\lambda^\alpha(K) \right)$ it would be easier to calculate a bound on the  geodesic diameter of $\ax_\lambda^\alpha (K')$. For example, if $K'$ is finite (in fact the union of the complement of $\B^\circ(0,R)$ with a finite set) one could determine a bound on the geodesic diameter of the subset of the $(n-1)$-skeleton of part of the Voronoi diagram corresponding to $\ax_\lambda^\alpha (K')$.
 
%
%

\subsection{Method} 
All proofs in the paper are based on the flow of the (generalized) gradient of the distance function from a point $x$ to $K$, see Section \ref{section:MedialAxisAndFlow} for a formal definition. { The flow has been used before, among others to establish the following results: 
\begin{itemize} 
\item The medial axis has the same homotopy type as the set \cite{LIEUTIERhomotopytype}. 
\item The topologically guaranteed reconstruction for non-smooth sets \cite{chazal2009sampling}. 
\end{itemize} The flow also plays a central role in the work on the $\lambda$-medial axis \cite{cl2005lambda}.
These tools were developed for 
 non-smooth objects, 
and rely on the weak 
regularity properties based on the $\mu$-reach and the critical function (Section \ref{section:MedialAxisAndFlow}).
Our stability results rely on the stability of the flow and its gradient under Hausdorff perturbation of $K$, and by quantifying how quickly we enter the $(\lambda,\alpha)$-medial axis following the flow of the gradient, assuming that we start not too far from the $(\lambda,\alpha)$-medial axis. 
}

\section{Future work}
Beyond the stability properties presented in this paper, several  questions remain open.
We do not know if our moduli of continuity are optimal, or if other filtrations could offer better H{\"o}lder exponents for the stability.
More precisely, because the dependence of the ($\lambda$,$\alpha$)-medial axis on $\lambda$ and $\alpha$ is Lipschitz, it is only the Lipschitz constant that can be improved. 
This contrasts with the H{\"o}lder exponents for the map $K \mapsto \ax_\lambda^\alpha$,  namely $\frac{1}{2}$ for the Hausdorff distance and $\frac{1}{4}$
 the Gromov-Hausdorff distance, which may not be optimal.

Our stability property expressed in term of Gromov-Hausdorff distance hides a stronger statement. 
Indeed the Gromov-Hausdorff distance applies to two independent metric spaces, while our two metric spaces are also subset of a same Euclidean space.
While this has not been made explicit in the statement of Theorem \ref{theorem:GromovHausdorffStability}, when  $d_H( K,  K') < \epsilon$,
 \eqref{equation:RelationForGromovHausdorff} gives a $\mathcal{O}(\epsilon^\frac{1}{2})$ bound  on the ambient Euclidean distance between points pairs in relation
that upper bounds the $\mathcal{O}(\epsilon^\frac{1}{4})$ Gromov-Hausdorff distance.
For example, in Figure \ref{figure:FrechetVersusGH} on the right, a simple rotation could define a relation giving a zero
Gromov-Hausdorff distance (an isometry), while in fact our construction defines another relation for which points in relation are much closer in ambient space.
In order to fully express our stability properties induced by the flow, we should  introduce in a future work a sharpening of the Gromov-Hausdorff distance, where the relation 
realizes not only a small geodesic metric distortion, but also a small ambient displacement.

\section*{\textit{Part II: The technical statements and proofs}}\otherlabel{part:TechnicalStatementsAndProofs}{Part II}

\section{Definitions and previous work}\label{section:DefinitionsPreviousWork}

In this section, we recall some definitions and results, mainly introduced in \cite{LIEUTIERhomotopytype, cl2005lambda, chazal2009sampling},
in order to make the paper self-contained. 

Throughout the paper, $K$ will be a \textit{closed subset} of $\R^n$, whose \textit{complement $K^c$ is bounded and connected}.{The fact that $K^c$ is connected is essential for the medial axis to be patch connected and thus is needed for a bound on the geodesic diameter. The geodesic diameter in turn is needed for the bound on the Gromov-Hausdorff distance (our bound is proportional to the geodesic diameter). } { We stress that the connectedness assumption is not required for our results on the Hausdorff distance.  } 




{
\subsection{Homotopy equivalence and weak deformation retraction}\label{section:HomotopyEquivalenceDefinition}
{A } homotopy equivalence between topological spaces $X$ and $Y$ 
{ where} $Y \subset X$ and the map from $Y$ to $X$ is the inclusion map,
is 
called  {\it weak deformation retract}, more formally:
\begin{definition}[weak deformation retract]\label{definition:HomotopyEquivalenceDefinition}
If $Y\subset X$ and there exists a continuous map $H:[0,1]\times X \rightarrow X$ such that:
\begin{itemize}
\item $\forall x \in X,\: H(0,x) = x$,
\item $\forall x \in X, \:  H(1,x) \in Y$,
\item $\forall y \in Y, \forall t \in [0,1], \: H(t,y) \in Y$,
\end{itemize}
then we say that $H$ is a weak deformation retract of $X$ on $Y$.
 In particular $X$ and $Y$ have same homotopy type.
\end{definition}
}

\subsection{The medial axis and associated flow}\label{section:MedialAxisAndFlow}

We define the following functions on $\R^n$ associated to $K$: 
\begin{align}
x \mapsto R_K (x) &\defunder{=} d(x, K) = \min_{y\in K} d(x, y) 
\label{equation:DefinitionR}
\\
x \mapsto \Theta_K (x) & \defunder{=} \left\{ y \in K \mid d(x, y) = R_K (x) \right\}.\label{equation:DefinitionTheta}
\end{align}
For a bounded set $S$ we denote the center of the smallest ball enclosing $S$ by $\cent(S)$, we write $\radius(S)$ for the radius of this ball. 

We further define,
\begin{equation}\label{equation:DefinitionF}
\F_K (x) \defunder{=} \radius(\Theta_K(x))
\end{equation}
and for $x \notin K$:
\begin{equation}\label{equation:DefinitionNabla}
\nabla_K (x) \defunder{=} \frac{ x - \cent(\Theta_K(x))}{R_K(x)}
\end{equation}
The medial axis $\ax (K)$ of $K$ is defined as:
\begin{equation}\label{equation:DefinitionMedialAxis}
\ax (K) \defunder{=} \{ x \in \R^n \mid \F_K (x) >0 \} = \{x \in \R^n\mid \sharp \left( \Theta_K (x) \right) \geq 2 \}
\end{equation}
where $\sharp X$ denotes the cardinality of the set $X$.

 \begin{figure}[h!]
\centering
\includegraphics[width=.5\textwidth]{}
\caption{Pictorial overview of the definitions and notation. The set $K$ is indicated in black and the medial axis in green. 
}
\label{figure:DefinitionsNotations81}
\Description{A pentagon and its medial axis with the notation indicated.}
\end{figure}

{When $x\in K$, one has $\Theta_K(x) = \{x\}$ and $\F_K (x) =0$.
 It follows that  $\ax (K) \subset K^c$.}

When $x \notin K \cup \ax (K)$, $\nabla_K (x)$ 
coincides with the gradient of the Lipschitz function $x \mapsto R_K(x)$.

For any $x \in K^c$, including when  $x \in \ax (K)$,  $\nabla_K (x)$ 
could have been equivalently defined as the projection of $0$ on the Clarke gradient, see \cite{Clarke1990}, of  $x \mapsto R_K(x)$. 
However the definition \eqref{equation:DefinitionNabla}  is more convenient in our context and does not require the formal introduction of Clarke gradient, which is technical.
For this reason, we call $\nabla_K (x)$ the \textit{generalized gradient}  of $x \mapsto R_K(x)$.
Thanks to the definition of $\nabla_K (x)$, and Pythagoras, see that  for  $x \notin K$ one has: 
\begin{equation}\label{equation:NormGradientRelatedToFandR}
\forall x \in K^c,   \quad \left\|\nabla_K (x) \right\| ^2 = 1 - \left(\frac{ \F_K(x)}{R_K(x)} \right)^2.
\end{equation}
In \cite{LIEUTIERhomotopytype} we have seen that there exists a locally Lipschitz, and therefore continuous, flow $\Phi_K: {\R_{\geq 0} } \times K^c \rightarrow K^c$ 
such that:
\begin{align}
 \forall x &\in K^c,  & \Phi_K (0,x ) &= x  
\nonumber 
\\
\forall x &\in K^c \: , \forall  t \geq 0, & \frac{d}{dt^+} \Phi_K (t,x ) &= \nabla_K \left( \Phi_K (t,x ) \right) 
\label{equation:DefinitionFlow}
\\
\forall x &\in K^c,\: \forall  t_1, t_2 \geq 0, &  \Phi_K \left(t_1, \Phi_K (t_2, \, x) \right) &=  \Phi_K (t_1 + t_2, \,  x).
\label{equation:FlowAsSemiGroup}
\end{align}

 \begin{figure}[h!]
\centering
\includegraphics[width=.5\textwidth]{}
\caption{The blue path follows the flow. The orange vectors indicate the vectors ($\nabla_K(x)$) whose flow $\Phi_K$ follows. They have been shifted slightly for visibility and lengths have been indicated. 
}
\Description{The same pentagon and medial axis as before with the flow of $\nabla_K(x)$.}
\label{figure:FlowExample}
\end{figure}	

It was established in \cite{LIEUTIERhomotopytype} that $\ax (K)$ has same homotopy type as $K^c$. 
This result is based on the fact that
the flow $\Phi_K$ realizes the homotopy equivalence 
  $K^c \rightarrow \ax (K)$ for a finite $t$ (see Definition \ref{definition:HomotopyEquivalenceDefinition}).
In the particular case when $K$ is a finite set, the flow $\Phi_K$ is equivalent to the flow that induces the
{\it flow complex} \cite{giesen2008flow} of $K$.

We further recall the following:
\begin{lemma}[Lemma 4.16 of \cite{LIEUTIERhomotopytype}] 
\begin{align}
\forall t &\geq 0, \forall x \in K^c,  & \frac{d}{dt^+} R_K( \Phi_K (t,x )) & = \left\| \nabla_K \left( \Phi_K (t,x ) \right) \right\|^2.
\label{equation:DerivativeR}
\end{align}
\end{lemma}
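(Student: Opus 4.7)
My plan is to reduce to the case $t=0$ and then carry out a first-variation computation of the squared distance function. By the semigroup property \eqref{equation:FlowAsSemiGroup}, for any $t\geq 0$ setting $y = \Phi_K(t,x)$ gives $R_K(\Phi_K(t+h,x)) = R_K(\Phi_K(h,y))$, so the claim reduces to showing that for every $y \in K^c$,
$$\frac{d}{dh^+}\Big|_{h=0} R_K(\Phi_K(h,y)) = \|\nabla_K(y)\|^2.$$

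First I would write $R_K^2(\Phi_K(h,y)) = \inf_{q\in K}\|\Phi_K(h,y)-q\|^2$. For each fixed $q$, $h \mapsto \|\Phi_K(h,y)-q\|^2$ is locally Lipschitz, and by \eqref{equation:DefinitionFlow} its right derivative at $h=0$ equals $2\langle y-q,\nabla_K(y)\rangle$. I would then establish the envelope identity
$$\frac{d}{dh^+}\Big|_{h=0} R_K^2(\Phi_K(h,y)) = \min_{q\in\Theta_K(y)} 2\langle y-q,\nabla_K(y)\rangle.$$
The upper bound $\leq$ is immediate by taking any $q \in \Theta_K(y)$ and using $R_K^2(\Phi_K(h,y)) \leq \|\Phi_K(h,y)-q\|^2$. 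The matching lower bound uses upper semi-continuity of the set-valued map $z \mapsto \Theta_K(z)$: for small $h$, every $q_h \in \Theta_K(\Phi_K(h,y))$ lies in a small neighborhood of $\Theta_K(y)$, and a uniform-in-$q$ first-order expansion of $\|\Phi_K(h,y)-q\|^2$ on the bounded set $\Theta_K(y)^{\oplus \varepsilon}$ supplies the matching estimate.

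Next I would evaluate this minimum by plane geometry. Let $c = \cent(\Theta_K(y))$, so that $\nabla_K(y) = (y-c)/R_K(y)$ by \eqref{equation:DefinitionNabla}, and let $S' = \{q\in\Theta_K(y) : \|q-c\| = \F_K(y)\}$; by the standard characterization of smallest enclosing balls $S'$ is nonempty and $c \in \operatorname{conv}(S')$. For any $q \in \Theta_K(y)$, expanding
$$R_K(y)^2 = \|y-q\|^2 = \|y-c\|^2 + 2\langle y-c,c-q\rangle + \|c-q\|^2$$
and invoking \eqref{equation:NormGradientRelatedToFandR}, which rearranges to $\|y-c\|^2 = R_K(y)^2 - \F_K(y)^2$, yields $2\langle y-c,c-q\rangle = \F_K(y)^2 - \|c-q\|^2 \geq 0$, with equality exactly on $S'$. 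Hence
$$\langle y-q,\nabla_K(y)\rangle = \frac{\langle y-q,y-c\rangle}{R_K(y)} = \frac{\|y-c\|^2 + \langle y-c,c-q\rangle}{R_K(y)} \geq \frac{\|y-c\|^2}{R_K(y)} = R_K(y)\|\nabla_K(y)\|^2,$$
and the minimum, attained on $S'$, equals $R_K(y)\|\nabla_K(y)\|^2$.

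Combining the three steps, $\frac{d}{dh^+}|_{h=0} R_K^2(\Phi_K(h,y)) = 2R_K(y)\|\nabla_K(y)\|^2$, and dividing by $2R_K(y) > 0$ (since $y \in K^c$) gives the claimed identity. The main obstacle is the envelope step: while the inequality in one direction is trivial, interchanging the right derivative with an infimum over the non-compact set $K$ requires the upper semi-continuity of $\Theta_K$ together with a uniform Taylor remainder on a neighborhood of $\Theta_K(y)$. The third step, by contrast, is a purely algebraic consequence of the definitions of $\cent$, $\radius$, and $\nabla_K$.
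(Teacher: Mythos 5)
This lemma is not proved in the paper at all: it is imported verbatim as Lemma~4.16 of \cite{LIEUTIERhomotopytype}, so there is no internal proof to compare your argument against. Judged on its own, your proof is correct and self-contained. The reduction to $t=0$ via \eqref{equation:FlowAsSemiGroup}, the first-variation computation, and the smallest-enclosing-ball geometry (using $\|y-c\|^{2}=R_K(y)^{2}-\F_K(y)^{2}$ from \eqref{equation:NormGradientRelatedToFandR}, $\|c-q\|\le\F_K(y)$ for $q\in\Theta_K(y)$, and the nonemptiness of $S'$) all check out; the final step amounts to the standard fact that the lower directional derivative of $R_K$ in a direction $v$ is $\min_{q\in\Theta_K(y)}\langle v,(y-q)/R_K(y)\rangle$, evaluated at $v=\nabla_K(y)$, and your minimum $R_K(y)\|\nabla_K(y)\|^{2}$ is exactly right. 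The only step that you leave as a sketch is the crucial one, the lower bound in the envelope identity, and the mechanism you name is the correct one; to close it, take $q_h\in\Theta_K(\Phi_K(h,y))$ and use $R_K^{2}(y)\le\|y-q_h\|^{2}$ to get
\begin{align}
R_K^{2}(\Phi_K(h,y))-R_K^{2}(y)&\ \ge\ \|\Phi_K(h,y)-q_h\|^{2}-\|y-q_h\|^{2}
\ =\ 2h\,\langle \nabla_K(y),\,y-q_h\rangle+o(h),
\nonumber
\end{align}
uniformly in $q_h$ (which stays in a bounded set, and, by upper semi-continuity of $z\mapsto\Theta_K(z)$, in $\Theta_K(y)^{\oplus\varepsilon}$ for $h$ small), then let $\varepsilon\to0$ using compactness of $\Theta_K(y)$. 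With that spelled out, both the limsup and liminf match the stated value, so the right derivative exists, and dividing by $2R_K(y)>0$ is legitimate since $h\mapsto R_K(\Phi_K(h,y))$ is continuous and positive. This is in the spirit of the flow analysis in \cite{LIEUTIERhomotopytype} (first variation at the nearest points combined with the position of the center of the smallest enclosing ball), so you have, in effect, reconstructed a valid proof of the cited lemma rather than diverged from the paper.
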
 

\begin{lemma}[Corollary 4.7 of \cite{LIEUTIERhomotopytype}] \label{lemma:FIsUpperSemiContinuous}
The map $ \F_K$ is upper semi-continuous.
\end{lemma}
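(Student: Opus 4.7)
The plan is to prove upper semi-continuity of $\F_K$ by decomposing the map as the composition of the set-valued nearest-point map $\Theta_K$ with the smallest-enclosing-ball radius, and showing both pieces behave well.

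First I would establish upper semi-continuity of the set-valued map $x \mapsto \Theta_K(x)$: for every $x \in \R^n$ and every $\epsilon > 0$ there is $\delta > 0$ such that $\|x' - x\| < \delta$ implies $\Theta_K(x') \subset (\Theta_K(x))^{\oplus \epsilon}$. This is a standard closed-graph/compactness argument. If it failed, there would exist sequences $x_n \to x$ and $y_n \in \Theta_K(x_n)$ with $d(y_n, \Theta_K(x)) \geq \epsilon$ for all $n$. Since $R_K$ is $1$-Lipschitz, $R_K(x_n) \to R_K(x)$, so $\|y_n - x\| \leq \|x_n - x\| + R_K(x_n)$ stays bounded. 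Passing to a subsequence, $y_n \to y$ for some $y$. Because $K$ is closed, $y \in K$; because $\|x_n - y_n\| = R_K(x_n) \to R_K(x)$, we get $\|x - y\| = R_K(x)$, i.e.\ $y \in \Theta_K(x)$, contradicting $d(y_n, \Theta_K(x)) \geq \epsilon$.

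Second, I would use the elementary monotonicity of the smallest-enclosing-ball radius under $\epsilon$-thickening: if $A$ is bounded with smallest enclosing ball $B(c, \radius(A))$, then $A \subset B(c, \radius(A))$, hence $A^{\oplus \epsilon} \subset B(c, \radius(A) + \epsilon)$, so $\radius(A^{\oplus \epsilon}) \leq \radius(A) + \epsilon$. Applying this with $A = \Theta_K(x)$ and combining with the first step, whenever $\|x' - x\| < \delta$ we obtain
\[
\F_K(x') = \radius(\Theta_K(x')) \leq \radius\bigl((\Theta_K(x))^{\oplus \epsilon}\bigr) \leq \F_K(x) + \epsilon,
\]
which is precisely upper semi-continuity of $\F_K$ at $x$. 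The case $x \in K$ is covered automatically since then $\Theta_K(x) = \{x\}$ and $\F_K(x') \leq R_K(x') \to 0 = \F_K(x)$, but it also falls under the general argument above.

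The only real obstacle is the first step, where one must carefully exploit continuity of $R_K$ and closedness of $K$ to extract a convergent subsequence of would-be counterexamples; the rest is routine. Note that $\F_K$ is not in general lower semi-continuous, as can be seen e.g.\ when $x$ is approached non-tangentially from within $\ax(K)$, so only the upper inequality can hold.
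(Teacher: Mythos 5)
Your proof is correct. The paper does not prove this statement itself — it imports it as Corollary 4.7 of \cite{LIEUTIERhomotopytype} — and your argument (upper semi-continuity of the set-valued map $x \mapsto \Theta_K(x)$ via a compactness/closed-graph argument using the $1$-Lipschitz continuity of $R_K$ and closedness of $K$, combined with monotonicity of the smallest-enclosing-ball radius under inclusion and $\epsilon$-thickening) is the standard route and supplies a sound, self-contained replacement for that citation. The only blemish is the closing aside: lower semi-continuity of $\F_K$ fails most transparently when a medial-axis point is approached by points \emph{off} the medial axis (where $\F_K=0$), so your phrasing ``from within $\ax(K)$'' is loose — but this remark is not used anywhere in the argument and does not affect the proof.
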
 

\begin{lemma}[Lemma 4.17 of \cite{LIEUTIERhomotopytype}] \label{lemma:FIsINotDecreasing}
The map $ 
t \mapsto \F_K( \Phi_K (t,x )) 
$ 
is non-decreasing (i.e. increasing but not necessarily strictly increasing)
and therefore, by Lemma \ref{lemma:FIsUpperSemiContinuous} 
 right-continuous.

\end{lemma}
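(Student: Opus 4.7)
The plan is to prove the non-decreasing part first; right-continuity then follows immediately from Lemma~\ref{lemma:FIsUpperSemiContinuous}, since monotonicity gives $\liminf_{s\to 0^+}\F_K(\Phi_K(t_0+s,x))\geq \F_K(\Phi_K(t_0,x))$ while upper semi-continuity gives the reverse inequality. Using the semigroup property~\eqref{equation:FlowAsSemiGroup}, showing $t\mapsto \F_K(\Phi_K(t,x))$ is non-decreasing reduces to the local claim that, for any $y\in K^c$ and any $s>0$, $\F_K(\Phi_K(s,y))\geq \F_K(y)$.

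To establish this local claim, I would set up the geometry at $y$ by writing $R=R_K(y)$, $r=\F_K(y)$, $c=\cent(\Theta_K(y))$, and $v=\nabla_K(y)=(y-c)/R$. Since all of $\Theta_K(y)$ lies on the sphere $\partial B(y,R)$, the center $c$ of the smallest enclosing ball (SEB) of $\Theta_K(y)$ is exactly the nearest-point projection of $y$ onto $\operatorname{conv}(\Theta_K(y))$; this yields the support inequality $\langle v,\,p-c\rangle \leq 0$ for every $p\in\Theta_K(y)$ and the Pythagorean identity $\|y-c\|^2 = R^2-r^2$, consistent with~\eqref{equation:NormGradientRelatedToFandR}. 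Combined with $\|y-p\|=R$, these give $\langle v,\,p-c\rangle=(\|p-c\|^2-r^2)/(2R)$, so that the \emph{extremal} points of the SEB (those $p^{\star}\in\Theta_K(y)$ with $\|p^{\star}-c\|=r$) are precisely the points of $\Theta_K(y)$ lying on the hyperplane $H$ through $c$ perpendicular to $v$, and $c$ lies in their convex hull.

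The key computation then gives, for $y'=y+sv$ and $p\in\Theta_K(y)$,
\[
\|y'-p\|^2 \;=\; (R+s)^2\|v\|^2 + r^2 + \frac{s}{R}\bigl(r^2-\|p-c\|^2\bigr),
\]
so extremal points are equidistant from $y'$ while non-extremal ones are strictly farther. The step I expect to be subtlest is passing from this infinitesimal picture to an actual lower bound $\F_K(\Phi_K(s,y))\geq r$, because new points of $K\setminus\Theta_K(y)$ may enter $\Theta_K(\Phi_K(s,y))$. I would handle this via upper semi-continuity of the set-valued map $\Theta_K$: any sequence $q_n\in\Theta_K(\Phi_K(s_n,y))$ with $s_n\downarrow 0$ subsequentially converges to a point of $\Theta_K(y)$, which by the distance computation above must be extremal. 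Since the extremal points already span an enclosing ball of radius exactly $r$, the SEB of $\Theta_K(\Phi_K(s,y))$ cannot shrink below $r$ in the limit, and the semigroup property~\eqref{equation:FlowAsSemiGroup} propagates this monotonicity to every interval.
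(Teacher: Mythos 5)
Your geometric setup is correct and is almost certainly the right starting point: for a set $\Theta_K(y)$ contained in a sphere about $y$, the SEB center $c$ is indeed the projection of $y$ onto $\operatorname{conv}(\Theta_K(y))$, the support inequality $\langle v, p-c\rangle\le 0$ holds with equality exactly at the extremal points, $c$ lies in the convex hull of the extremal points, and your formula $\|y'-p\|^2=(R+s)^2\|v\|^2+r^2+\tfrac{s}{R}(r^2-\|p-c\|^2)$ checks out algebraically. Note that the present paper only \emph{cites} Lemma~4.17 from \cite{LIEUTIERhomotopytype} rather than reproving it, so I am assessing your argument on its own.

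The genuine gap is precisely where you flagged the subtlety, and the fix you propose does not close it. Upper semi-continuity of the set-valued map $\Theta_K$ tells you that accumulation points of sequences $q_n\in\Theta_K(\Phi_K(s_n,y))$ lie in $\Theta_K(y)$ (and, with a bit more work that you omit, that those accumulation points are extremal), but this only constrains where $\Theta_K(\Phi_K(s,y))$ sits for small $s$ --- it does \emph{not} bound its SEB radius from below. A set contained in an arbitrarily small neighbourhood of the extremal points can still be a single point, with $\F_K=0$: the one-sided inclusion $\Theta_K(\Phi_K(s,y))\subset\bigl(\Theta_K(y)\bigr)^{\oplus\varepsilon}$ gives no control on $\operatorname{radius}\bigl(\Theta_K(\Phi_K(s,y))\bigr)$, since $\operatorname{radius}$ is not monotone under that kind of inclusion. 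So the step ``the SEB of $\Theta_K(\Phi_K(s,y))$ cannot shrink below $r$ in the limit'' is unjustified. Moreover, even if you \emph{did} establish $\liminf_{s\to 0^+}\F_K(\Phi_K(s,y))\ge r$, this is a statement about a limit at $s=0^+$, hence (combined with upper semi-continuity) proves \emph{right-continuity} of $t\mapsto\F_K(\Phi_K(t,x))$, not monotonicity: right-continuity and monotonicity are independent (the function $t\mapsto -t$ is right-continuous). The closing sentence ``the semigroup property propagates this monotonicity to every interval'' needs as input a pointwise local bound $\F_K(\Phi_K(s,y))\ge\F_K(y)$ valid for \emph{all} $s$ in some interval $(0,\varepsilon)$ uniformly in $y$ (or a Dini-derivative argument combined with upper semi-continuity), neither of which a $\liminf$ statement provides. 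A secondary issue: your distance computation is for the straight-line tangent $y+sv$, not for the actual flow $\Phi_K(s,y)=y+sv+o(s)$; the $o(s)$ error is harmless for a $\liminf$-as-$s\to 0$ statement, but it is another obstruction to obtaining the exact pointwise inequality $\F_K(\Phi_K(s,y))\ge r$ that you need.
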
 
Lemma 4.13 of \cite{LIEUTIERhomotopytype} immediately yields:
\begin{corollary} 
\begin{equation}
R_K(x_1), R_K(x_2) \geq \alpha \Rightarrow \|  \Phi_K (t,x_2 ) -  \Phi_K (t,x_1 ) \| \leq \| x_2 - x_1 \| \, e^{\frac{t}{\alpha}}.
\label{equation:BoundOnFlowExpansion}
\end{equation}
\end{corollary}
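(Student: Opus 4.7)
My approach is to reduce the statement to the cited Lemma 4.13 of \cite{LIEUTIERhomotopytype} by verifying that its hypothesis---namely a uniform lower bound on the distance function along the two trajectories---is preserved by the flow. The point is that Lemma 4.13 presumably gives an infinitesimal comparison estimate of the form
\[
\frac{d^+}{dt}\,\| \Phi_K(t,x_2) - \Phi_K(t,x_1) \| \;\leq\; \frac{1}{\min\bigl(R_K(\Phi_K(t,x_1)),\, R_K(\Phi_K(t,x_2))\bigr)}\,\| \Phi_K(t,x_2) - \Phi_K(t,x_1) \|,
\]
which is exactly the kind of one-sided Lipschitz bound that the generalized gradient $\nabla_K$ satisfies (the constant blowing up like $1/R_K$ reflects the usual behaviour near the boundary $\partial K$). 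Once this is in hand, the corollary is a straightforward Gr\"onwall-type integration.

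The first step is therefore to observe that the hypothesis $R_K(x_i) \geq \alpha$ propagates along the flow. This follows immediately from the identity \eqref{equation:DerivativeR},
\[
\frac{d}{dt^+} R_K\bigl(\Phi_K(t,x)\bigr) = \| \nabla_K(\Phi_K(t,x)) \|^{2} \;\geq\; 0,
\]
so that $t \mapsto R_K(\Phi_K(t,x))$ is non-decreasing. Consequently, for every $t \geq 0$,
\[
R_K\bigl(\Phi_K(t,x_1)\bigr) \geq R_K(x_1) \geq \alpha \quad\text{and}\quad R_K\bigl(\Phi_K(t,x_2)\bigr) \geq R_K(x_2) \geq \alpha.
\]

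The second step is to feed these two lower bounds into Lemma 4.13, yielding
\[
\frac{d^+}{dt}\,\| \Phi_K(t,x_2) - \Phi_K(t,x_1) \| \;\leq\; \frac{1}{\alpha}\,\| \Phi_K(t,x_2) - \Phi_K(t,x_1) \|
\]
for all $t \geq 0$. Gr\"onwall's inequality applied to the right upper Dini derivative of the (locally Lipschitz, hence absolutely continuous) function $t \mapsto \| \Phi_K(t,x_2) - \Phi_K(t,x_1) \|$ then integrates this pointwise bound to
\[
\| \Phi_K(t,x_2) - \Phi_K(t,x_1) \| \;\leq\; \| x_2 - x_1 \|\, e^{t/\alpha},
\]
which is the claimed estimate.

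The only subtlety---not really an obstacle---is that one must be careful that $\Phi_K$ is only differentiable from the right (cf.\ \eqref{equation:DefinitionFlow}) and that $\nabla_K$ is merely a generalized gradient rather than a genuine vector field. This is exactly why the estimate is phrased in terms of right Dini derivatives, but the Gr\"onwall argument goes through unchanged in this setting. All the real analytic content is packaged inside the cited Lemma 4.13, so the corollary reduces to the two observations above.
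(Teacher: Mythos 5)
Your proof is correct and takes essentially the same route as the paper; the paper simply asserts that the corollary is immediate from Lemma 4.13 of \cite{LIEUTIERhomotopytype}, and you supply the one observation that makes it immediate, namely that $t\mapsto R_K(\Phi_K(t,x))$ is non-decreasing (by \eqref{equation:DerivativeR}), so the hypothesis $R_K(x_i)\geq\alpha$ propagates along both trajectories. Whether Lemma 4.13 is phrased as the infinitesimal one-sided Lipschitz estimate you hypothesize (requiring a final Gr\"onwall integration) or already as an integrated $e^{t/\alpha}$ bound under a trajectory-wise lower bound on $R_K$, the only mathematical content you need to add is exactly this monotonicity step, which you get right.
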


\subsection{Critical function, \texorpdfstring{$\mu$}{mu}-reach and Weak Feature Size}\label{section:CriticalFunctionAndMuReach}


 
{ As we have seen in Section \ref{section:introCriticalLambdaMedialAxis}, the critical function $ \chi_K: (0 ,\infty) \to [0,1]$, is defined as 
\begin{equation} \tag{\ref{equation:CriticalFunctionFirstDefinition}}
 \chi_K (t) \defunder{=}  \inf_{R_K(x) = t} \sqrt{ 1 - \left( \frac{\F_K (x) }{ R_K (x)}\right)^2}.
\end{equation}
In this setting we say that the infimum over the empty set yields $1$.
We'll now discuss the intuition behind this function.}  {
The critical function provides us in a certain sense with some lower bound on the norm of the vector field $\nabla_K (x)$ whose flow ($\Phi_K$) we follow.}
Figure \ref{figure:CriticalFunction} illustrates the critical function when $K$ is a { (hollow)}  square in $\R^3$. 
The medial axis of the square is an infinite prism, which is the product of the squares's diagonals and 
 the line orthogonal to the square supporting plane. The { infimum} 
in \eqref{equation:CriticalFunctionFirstDefinition}
 is  first  { attained} along the square diagonals, where $\chi_K(t)= 1/\sqrt{2}$ since there $\F_K (x) = \frac{R_K(x)}{\sqrt{2}}$. Then when the offset reaches
 the square center we get $\F_K (x) = R_K(x)$ and therefore $\chi_K(t)= 0$. The topology of the offset then changes and,
 after this critical value,  the $\inf$ is then reached on the line through the square center and orthogonal to the square supporting plane.
 
The critical function enjoys some stability properties with respect to Hausdorff distance perturbation  \cite[Theorem 4.2]{chazal2009sampling}, illustrated in Figure \ref{figure:CriticalFunction}.
If two sets $K$ and $K'$ are close enough in Hausdorff distance, their critical functions $t\mapsto \chi_K(t)$ and  $t\mapsto \chi_{K'}(t)$ are close to each other, for $t$ large enough.

 \begin{figure}[ht!]
\centering
\includegraphics[width=0.5\textwidth]{}
\includegraphics[width=0.225\textwidth]{}
\caption{
(Adapted  from \cite{boissonnat2018geometric})).
 On the top, the critical function $\chi_K$ of a square $K$ in $\R^3$ together with the corresponding level sets  $R_K^{-1}(t)$ of the distance to $K$,
on which the $\inf$ is taken in equation \eqref{equation:CriticalFunctionFirstDefinition}.
The topology of offsets changes only when $\chi_K(t)= 0$, that is when $t=t_{crit.}$  which is the  half of the square side. 
For $t\in (0,t_{crit.})$, $\chi_K(t)= 1/\sqrt{2}$ since, for $0<t<t_{crit.}$, the
$\inf$ in \eqref{equation:CriticalFunctionFirstDefinition} is {For $t\in (0,t_{crit.})$ we have that $\chi_K(t)= 1/\sqrt{2}$. This can be seen as follows: 
For $0<t<t_{crit.}$ the $\inf$ in \eqref{equation:CriticalFunctionFirstDefinition} is attained on the intersection of $R_K^{-1}(t)$, the  supporting plane of the square, and the medial axis. This intersection equals the diagonals of the square. It follows immediately from Pythagoras that $\F_K (x) = \frac{R_K(x)}{\sqrt{2}}$.
} 
This value is indeed strictly smaller  than $1$ as the reach of the square is $0$.  \\
At the bottom, the critical function of a set $K'$ (here a finite set), close, in Hausdorff distance, to $K$. 
For large enough offset $t$, $\chi_{K'}$ is close to $\chi_K$. 
In particular, if $d_H(K',K)$ is small enough with respect to  $t_{crit.}$, 
Theorem  \cite[Theorem 4.2]{chazal2009sampling} provides a lower bound on the critical function of $K'$,
which is  then guaranteed to not vanish on some interval subset of $(0, t_{crit.})$.   }
\Description{The offset of a square (not filled in) and the critical function $\chi_K$.}
\label{figure:CriticalFunction}  
\end{figure}

The critical function has been introduced in Section \ref{section:introCriticalLambdaMedialAxis} and illustrated in Figure \ref{figure:CriticalFunction}.
In \cite{chazal2009sampling} the critical function $\chi_K : (0 , \infty)  \rightarrow [0,1]$ was defined for a compact set $K$.
We adapt it to a closed set $K$ with bounded complement.

On a closed set $K$ with bounded complement the function $R_K$ attains a maximum:
\begin{equation}\label{equation:DefinitionRMax}
R_{\max} (K) \defunder{=} \sup_{{x \in K^{c}}}  R_K(x) 
= \max_{x \in K^{c}}  
R_K(x). 
\end{equation}

The critical function of $K$ is the function $\chi_K :  (0, R_{\max} (K)] \rightarrow [0,1]$ defined as:
\[
\chi_K (t) \defunder{=}  \inf_{R_K(x) = t} \| \nabla_K(x) \|.
\]
As a consequence of \eqref{equation:DerivativeR}, one has $\chi_K ( R_{\max}) = 0$.
For $\mu  \in (0,1]$, the $\mu$-reach of $K$, denoted $r_\mu (K)$ is  defined as:
\begin{align*}
r_\mu (K) &\defunder{=} \inf \{t  \mid \chi_K (t) < \mu \} \\&
 = \, \sup \left\{ r \mid R_K(x) < r \Rightarrow \| \nabla_K(x) \| \geq \mu \right\}.
\end{align*}
For $\mu=1$, $r_1(K)$ is also known as the reach of $K$. 
The \textit{Weak Feature Size} of $K$, denoted $\wfs(K)$ is defined as the first critical value of the distance function, that is the first value at which $\chi_K $ vanishes
\begin{align} 
\wfs(K) \defunder{=} \inf \{t \mid \chi_K (t) = 0 \}. 
\label{eq:DefWeakFeatureSize} 
\end{align} 
Since $x \mapsto \F_K(x)$ is upper semi-continuous, see Corollary 4.7 of \cite{LIEUTIERhomotopytype}, $\chi_K$ is lower semi-continuous.
The lower semi-continuity of $\chi_K$ has a number of consequences:
\begin{itemize}
\item If $0<\wfs(K)$, then we have $\chi_K( \wfs(K) ) = 0$, so that 
\[
\wfs(K) >0 \Rightarrow \wfs(K) = \min \{t \mid \chi_K (t) = 0 \}. 
\] 
This follows by combining the lower semi-continuity with the definition of the weak feature size \eqref{eq:DefWeakFeatureSize}.
\item By definition of  $r_\mu (K)$ (and lower semi-continuity of $\chi_K$) we have that, for any $\mu>0$,  $r_\mu (K) < \wfs(K)$. 
\item  If $0< \sup \{ r_\mu (K) \mid \mu>0 \}  < \infty$, then $\chi_K(  \sup \{ r_\mu (K) \mid \mu>0 \} ) = 0$.
Because $\wfs (K)>0 \Rightarrow  \sup \{ r_\mu (K) \mid \mu>0 \} >0$, it follows that:
\begin{equation}\label{equation:WfsAndMuReach}
\wfs(K) >0 \Rightarrow \wfs(K) = \sup \{ r_\mu (K) \mid \mu>0 \} .
\end{equation}
\end{itemize}

\subsection{Gromov-Hausdorff distance}\label{section:GromovHausdorffDistance}
For $\alpha \geq 0$, the $\alpha$-offset of $K$ denoted $K^{\oplus \alpha}$
is the set of points lying at distance at most $\alpha$ from $K$:
\begin{align}
K^{\oplus \alpha}  \defunder{=} \{ x \in \R^n \mid d(x, K) \geq \alpha \} ,
\label{eqdef:Offset_Minkowski}
\end{align}
where $d(x,K) \defunder{=} \inf_{y\in K} d(x,y) = \min_{y\in K} d(x,y)$.
Alternatively, $K^{\oplus \alpha}$ can be defined as the Minkowski sum of $K$ 
with a ball of radius $\alpha$ centered at $0$, which motivates the $\oplus$ notation.

The Hausdorff distance between two compact subsets $A,B$ of the Euclidean  space is defined as (\cite[Section 5.30]{bridson2013metric}):
\begin{definition}[Hausdorff distance]\label{definition:HausdorffDistance}
The Hausdorff distance between closed sets $A$ and $B$ is defined as
\begin{equation}\label{equation:DefinitionHausdorffDistance}
d_H( A,B) = \inf \left\{ \varepsilon, \: A \subset B^{\oplus \varepsilon}\: \mathrm{and} \:  B \subset A^{\oplus \varepsilon}  \right\}.
\end{equation}
\end{definition}

One trivially has,
\begin{equation}\label{equation:HausdorffDistanceForNestedSets}
A \subset B  \quad \mathrm{and} \quad  B \subset A^{\oplus \varepsilon} \quad \Rightarrow \quad d_H(A,B) \leq \varepsilon.
\end{equation}

We now define the Gromov-Hausdorff distance between two metric spaces $X_1$ and $X_2$. We follow \cite[Section 5.33]{bridson2013metric}, with minor modifications in the formulation for compatibility.
\begin{definition}[Gromov-Hausdorff distance]\label{definition:GromovHausdorffDistance}
An $\varepsilon$-relation between two metric spaces $X_1$ and  and $X_2$ is a subset $\mathcal{R} \subset X_1 \times X_2$ such that:
\begin{enumerate}
\item[(1)] For $i=1,2$, the projection of $\mathcal{R}$ to $X_i$ is  surjective.
\item[(2)] If $(x_1, x_2), (x_1', x_2')   \in \mathcal{R}$ then,
\[
 \left|  d_{X_1} (x_1, x_1') - d_{X_2} (x_2, x_2') \right| < \varepsilon.
 \]
\end{enumerate}
If there exists an $\varepsilon$-relation between  $X_1$ and  and $X_2$ then we write $X_1 \simeq_{\varepsilon} X_2$.
We define the Gromov-Hausdorff distance between $X_1$ and $X_2$ to be
\begin{equation}\label{equation:DefinitionGromovHausdorffDistance}
d_{\textrm{GH}} (X_1, X_2) = \inf  \left\{ \varepsilon, \: X_1 \simeq_{\varepsilon} X_2 \right\}.
\end{equation}
\end{definition}

\subsection{The topology of the \texorpdfstring{$\lambda$}{lambda}-medial axis}
In \cite{cl2005lambda} the $\lambda$-medial axis was introduced for $\lambda >0$ as:
\[
\ax_\lambda(K) \defunder{=} \{ x \in \R^n \mid \F_K (x) \geq \lambda \}.
\]
By definition the $\lambda$-medial axis has the following properties:
\[
\lambda_1  \geq \lambda_2 \Rightarrow \ax_{\lambda_1}(K)  \subset \ax_{\lambda_2}(K),
\]
and
\[
 \bigcup_{\lambda >0} \ax_\lambda(K) = \ax(K).
\]
It follows from Lemma \ref{lemma:FIsUpperSemiContinuous} that:
\begin{lemma}\label{lemma:LmabdaMedialAxisIsCompact}
$\ax_\lambda(K)$ is closed  and therefore, since $K^c$ is bounded, it is compact.
\end{lemma}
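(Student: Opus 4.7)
The plan has two short steps: closedness first, then compactness via Heine--Borel.

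First, to establish closedness, I would invoke Lemma \ref{lemma:FIsUpperSemiContinuous}, which gives that $\F_K$ is upper semi-continuous on $\R^n$. The standard characterization of upper semi-continuity states that a function $f : \R^n \to \R$ is upper semi-continuous if and only if every super-level set $\{x \mid f(x) \geq c\}$ is closed (equivalently, every strict sub-level set $\{x \mid f(x) < c\}$ is open). Specializing to $f = \F_K$ and $c = \lambda$ gives immediately that
$$\ax_\lambda(K) \;=\; \{x \in \R^n \mid \F_K(x) \geq \lambda\} \;=\; \F_K^{-1}([\lambda, \infty))$$
is closed in $\R^n$.

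Second, to promote closedness to compactness, I would use the observation already recorded in the text that $\Theta_K(x) = \{x\}$ for every $x \in K$, so that $\F_K(x) = 0$ on $K$. Since $\lambda > 0$ in the definition of the $\lambda$-medial axis, this forces $\ax_\lambda(K) \subset K^c$. The standing assumption of the paper that $K^c$ is bounded then implies $\ax_\lambda(K)$ is bounded, and the Heine--Borel theorem upgrades closed and bounded in $\R^n$ to compact.

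There is no real obstacle here; both steps are immediate applications of facts already gathered earlier in Section \ref{section:MedialAxisAndFlow}, combined with a textbook characterization of upper semi-continuity and Heine--Borel. The one ingredient that is not stated verbatim in the paper is the closed super-level set characterization of upper semi-continuity, but this is a classical fact that requires no argument specific to the setting of the medial axis.
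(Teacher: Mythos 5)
Your proof is correct and follows exactly the route the paper intends: the paper's own justification is simply the remark "It follows from Lemma \ref{lemma:FIsUpperSemiContinuous} that," and your two steps (super-level sets of an upper semi-continuous function are closed; $\ax_\lambda(K) \subset K^c$ bounded, hence compact by Heine--Borel) are precisely the unpacking of that remark.
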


As a consequence of Lemma \ref{lemma:FIsINotDecreasing}, 
using Definition \ref{definition:HomotopyEquivalenceDefinition},
the flow $\Phi_K$ can be used to show that: 
\begin{theorem}[Theorem 2 of \cite{cl2005lambda}]
If $\lambda  < \wfs(K)$, then $\ax_\lambda (K)$ has the homotopy type of $K^c$.
\end{theorem}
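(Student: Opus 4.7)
The plan is to construct a weak deformation retract (Definition \ref{definition:HomotopyEquivalenceDefinition}) $H:[0,1]\times K^c\to K^c$ of $K^c$ onto $\ax_\lambda(K)$ by suitably reparameterizing the flow $\Phi_K$. Two facts drive the argument: by Lemma \ref{lemma:FIsINotDecreasing} the set $\ax_\lambda(K)$ is \emph{absorbing} along the flow (once a flow line enters it, it never leaves), and, as I will show, under $\lambda<\wfs(K)$ every flow line enters $\ax_\lambda(K)$ in finite time. A continuous time reparameterization then packages these into $H$.

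First I would define, for $x\in K^c$, the entry time
\[
\tau(x)\defunder{=}\inf\{t\geq 0:\F_K(\Phi_K(t,x))\geq\lambda\},
\]
and prove $\tau(x)<\infty$. By \eqref{equation:DerivativeR}, $t\mapsto R_K(\Phi_K(t,x))$ is non-decreasing and bounded by $R_{\max}(K)$, hence converges to some $L$, and $\int_0^\infty\|\nabla_K(\Phi_K(s,x))\|^2\,ds=L-R_K(x)<\infty$. If $L<\wfs(K)$, lower semi-continuity and positivity of $\chi_K$ on $(0,\wfs(K))$ provide $\chi_K\geq c>0$ on the compact interval $[R_K(x),L]$, forcing $\|\nabla_K(\Phi_K(t,x))\|\geq c$ for all $t$ and contradicting the integrability above; hence $L\geq\wfs(K)>\lambda$. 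Integrability also forces a sequence $t_n\to\infty$ with $\|\nabla_K(\Phi_K(t_n,x))\|\to 0$, and \eqref{equation:NormGradientRelatedToFandR} then gives
\[
\F_K(\Phi_K(t_n,x))^2 = R_K(\Phi_K(t_n,x))^2\bigl(1-\|\nabla_K(\Phi_K(t_n,x))\|^2\bigr) \longrightarrow L^2 > \lambda^2,
\]
so $\F_K(\Phi_K(t_n,x))\geq\lambda$ for large $n$ and $\tau(x)<\infty$.

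Next I would verify that $\tau$ is upper semi-continuous: if $x_n\to x$ with $\tau(x_n)\to T<\infty$, choosing $t_n\leq\tau(x_n)+1/n$ with $\F_K(\Phi_K(t_n,x_n))\geq\lambda$, continuity of $\Phi_K$ yields $\Phi_K(t_n,x_n)\to\Phi_K(T,x)$, which lies in the closed set $\ax_\lambda(K)$ (closed by Lemma \ref{lemma:FIsUpperSemiContinuous}), so $\tau(x)\leq T$. Being a finite upper semi-continuous function on the metric space $K^c$, $\tau$ is locally bounded, and a subordinate partition of unity applied to an open cover on which $\tau$ has local upper bounds then produces a continuous $T:K^c\to[0,\infty)$ with $T\geq\tau$. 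Finally, I would set
\[
H(s,x)\defunder{=}\Phi_K\bigl(s\,T(x),x\bigr),
\]
which is continuous (the flow is), satisfies $H(0,x)=x$, and has $H(1,x)\in\ax_\lambda(K)$ since $T(x)\geq\tau(x)$ and $\F_K\circ\Phi_K(\cdot,x)$ is non-decreasing; for $y\in\ax_\lambda(K)$ the same monotonicity keeps $H(s,y)\in\ax_\lambda(K)$. This is precisely a weak deformation retract, whence the homotopy equivalence.

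The main difficulty I anticipate is the finite entry time step: passing from the $L^2$-in-time bound on $\|\nabla_K\|$ supplied by \eqref{equation:DerivativeR} to the pointwise statement that $\F_K$ crosses $\lambda$ in finite time requires combining the monotonicity of $R_K$, the identity \eqref{equation:NormGradientRelatedToFandR}, and the positivity of $\chi_K$ below $\wfs(K)$, and is the only place where the hypothesis $\lambda<\wfs(K)$ is genuinely used.
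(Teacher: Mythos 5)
Your overall strategy --- flow by $\Phi_K$, then reparameterize by a continuous time function so the deformation retraction terminates at time $1$ --- is the right idea, and it is the approach taken in \cite{LIEUTIERhomotopytype,cl2005lambda}. Your finite-entry-time argument (showing $L\geq\wfs(K)>\lambda$, then extracting from the $L^2$-in-time bound a sequence of times with small gradient norm and hence $\F_K$ close to $L>\lambda$) is correct and cleanly establishes $\tau(x)<\infty$.

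There is, however, a genuine gap at the semicontinuity step. The argument you sketch proves the wrong inequality: from $x_n\to x$ with $\tau(x_n)\to T$ you deduce $\Phi_K(T,x)\in\ax_\lambda(K)$, hence $\tau(x)\leq T$, i.e.\ $\tau(x)\leq\liminf_n\tau(x_n)$. That is \emph{lower} semicontinuity, not upper. This is exactly what closedness of $\ax_\lambda(K)$ buys you: it prevents the entry time from dropping in the limit, but it says nothing about the entry time of \emph{nearby} flow lines that may overshoot and take much longer to re-enter the closed absorbing set. A finite lower semicontinuous function need not be locally bounded above (take $\tau(0)=0$ and $\tau(x)=1/|x|$), so the partition-of-unity construction of a continuous $T\geq\tau$ does not go through. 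Upper semicontinuity, or at least local boundedness of $\tau$, is precisely the quantitative content of the theorem, and it needs an argument.

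The repair, and what the cited proofs actually do, is to obtain a \emph{uniform} time bound rather than a pointwise one. Since $\lambda<\wfs(K)=\sup_{\mu>0}r_\mu(K)$ by \eqref{equation:WfsAndMuReach}, pick $\mu>0$ with $r_\mu(K)>\lambda$. For any $x\notin\ax_\lambda(K)$: if $R_K(x)<r_\mu(K)$ then $\|\nabla_K(x)\|\geq\mu$ by definition of the $\mu$-reach, while if $R_K(x)\geq r_\mu(K)>\lambda$ then $\F_K(x)<\lambda$ together with \eqref{equation:NormGradientRelatedToFandR} gives $\|\nabla_K(x)\|\geq\sqrt{1-(\lambda/r_\mu(K))^2}$. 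Thus $\|\nabla_K\|\geq\tilde\mu>0$ outside $\ax_\lambda(K)$, and by \eqref{equation:DerivativeR} every trajectory enters $\ax_\lambda(K)$ within time $R_{\max}/\tilde\mu^2$, a constant. Setting $H(s,x)=\Phi_K\bigl(s\,R_{\max}/\tilde\mu^2,\,x\bigr)$ then yields the weak deformation retraction directly, and the semicontinuity/partition-of-unity machinery becomes unnecessary. This is the $\alpha=0$ analogue of what the present paper proves in Lemma \ref{lemma:KCFlowsToLambdaPlusDelta}.
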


\subsection{Fundamental Theorem of calculus for Lipschitz functions}\label{section:FundamentalThmCalculus}

As in \cite{LIEUTIERhomotopytype, cl2005lambda} we apply  the fundamental theorem of
calculus to Lipschitz functions while it is usually stated in the context of differentiable functions. 

We recall that it follows trivially from the definitions that Lipschitz functions are in particular absolutely continuous and:
\begin{theorem}[Adapted from Theorem 6.4.2 in \cite{heil2019introduction}]\label{theorem:FundamentalThmCalculusACFunctions}
If $f:[a,b] \rightarrow \R$, then the following two statements are equivalent.
\begin{itemize}
\item[(a)] $f$ is absolutely continuous.
\item[(b)] $f$ is differentiable almost everywhere on $[a,b]$, $f'\in L^1[a,b]$, and
\[
f(x) - f(a) = \int_a^x f'(t) dt
\]
\end{itemize}
\end{theorem}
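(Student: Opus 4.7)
The plan is to prove the equivalence by treating each direction separately, with essentially all the work concentrated in $(a) \Rightarrow (b)$. Since the statement is a standard result in real analysis, the proof would largely assemble three classical ingredients: the absolute continuity of the Lebesgue integral, Jordan's decomposition of BV functions as differences of monotone functions, and Lebesgue's differentiation theorem.

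For the easy direction $(b) \Rightarrow (a)$, I would invoke absolute continuity of the Lebesgue integral: since $f' \in L^1[a,b]$, for every $\epsilon > 0$ there exists $\delta > 0$ such that $\int_E |f'|\,dt < \epsilon$ whenever the measurable set $E$ has measure less than $\delta$. Applying this with $E = \bigcup_i (a_i, b_i)$, a finite disjoint union of subintervals, gives $\sum_i |f(b_i) - f(a_i)| = \sum_i \bigl|\int_{a_i}^{b_i} f'\,dt\bigr| \leq \int_E |f'|\,dt < \epsilon$ whenever $\sum_i (b_i - a_i) < \delta$, which is precisely the definition of absolute continuity.

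For $(a) \Rightarrow (b)$, I would proceed in three steps. First, show that absolutely continuous functions are of bounded variation: choosing $\delta$ corresponding to $\epsilon = 1$ and covering $[a,b]$ by $\lceil (b-a)/\delta \rceil$ consecutive subintervals each of length less than $\delta$, the total variation over each is bounded by $1$, so the total variation of $f$ is finite. Second, write $f$ as a difference of two nondecreasing functions via Jordan's decomposition; Lebesgue's theorem for monotone functions then gives that $f$ is differentiable almost everywhere and $f' \in L^1[a,b]$, together with the one-sided bound $\int_a^x f'(t)\,dt \leq f(x) - f(a)$ (obtained by applying Fatou's lemma to the nonnegative difference quotients $n[f(t + 1/n) - f(t)]$, after extending $f$ constantly past $b$). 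Third, upgrade this inequality to equality.

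The main obstacle is the last step, which is where absolute continuity is indispensable (the identity fails for monotone singular functions such as the Cantor staircase). The cleanest route is to prove a key lemma: if $g$ is absolutely continuous on $[a,b]$ and $g' = 0$ almost everywhere, then $g$ is constant. Given $\epsilon > 0$ and the corresponding $\delta$ from the AC condition, one covers the full-measure set $\{t : g'(t) = 0\}$ by a Vitali family of intervals on each of which $|g(s) - g(t)| < \epsilon (s-t)$, then extracts a finite disjoint subfamily covering all but a set of measure less than $\delta$. On these intervals the total oscillation of $g$ is controlled by $\epsilon(b-a)$, and on the complementary set of measure $< \delta$ it is controlled by $\epsilon$ via absolute continuity. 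Letting $\epsilon \to 0$ forces $g(x) = g(a)$. Applying this lemma to $g(x) = f(x) - f(a) - \int_a^x f'(t)\,dt$, which is itself absolutely continuous with derivative zero almost everywhere, yields the desired equality.
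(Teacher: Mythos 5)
The paper does not prove this statement at all: it is quoted as a standard textbook result (Theorem 6.4.2 of Heil's \emph{Introduction to Real Analysis}), merely ``adapted'' in formulation and used as a black box, so there is no internal proof to compare against. Your sketch is the classical textbook argument (absolute continuity of the Lebesgue integral for (b)~$\Rightarrow$~(a); AC~$\Rightarrow$~BV, Jordan decomposition, Lebesgue differentiation, and the Vitali-covering lemma that an AC function with vanishing derivative a.e.\ is constant for (a)~$\Rightarrow$~(b)), and it is correct and complete in outline. One small imprecision: the one-sided Fatou bound $\int_a^x g'\,dt \leq g(x)-g(a)$ holds for the monotone pieces of the Jordan decomposition, not directly for $f$ itself (whose difference quotients need not be nonnegative); since your final step derives the equality from the key lemma applied to $g(x)=f(x)-f(a)-\int_a^x f'\,dt$ together with the Lebesgue differentiation theorem for indefinite integrals, this intermediate claim is not load-bearing and the argument stands.
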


\subsection{Volterra integral inequalities} \label{sec:Volterra} 

Apart from a generalized fundamental theorem of calculus we'll also need a estimates on Volterra integrals. These are related to the solution of differential equations of \v{C}aplygin type. The simplest version of the result we recall first, see for example \cite[Theorem 2, Section 2, Chapter XI]{mitrinovic1991inequalities},   
\begin{theorem}[\v{C}aplygin]
Let $F(x,t)$ be a Lipschitz function and $x=x(t)$ a differentiable function such that $x(0) =x_0$, and
\[ 
\frac{d}{dt} x(t) \leq F(t,x(t) ) ,
\] 
then $x(t) \leq y(t)$, where $y(t)$ is the solution of the initial value problem
$y(0) = x_0$ and $\frac{d}{dt} y(t)  = F(t,y(t) )$. 
\end{theorem}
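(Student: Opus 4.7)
The plan is to use a standard comparison argument for ordinary differential equations, combined with a Grönwall-type estimate, applied to the difference $z(t) = x(t) - y(t)$. The aim is to show $z(t) \le 0$ for all $t \ge 0$ in the common interval of existence.

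First I would set $z(t) = x(t) - y(t)$ and note that $z(0) = 0$ and $z$ is (absolutely) differentiable, so one may use the fundamental theorem of calculus as recalled in Section \ref{section:FundamentalThmCalculus}. Using the hypothesis $\dot x(t) \le F(t, x(t))$ and the identity $\dot y(t) = F(t, y(t))$, I compute
\begin{equation*}
\dot z(t) \;\le\; F(t, x(t)) - F(t, y(t)).
\end{equation*}
Let $L$ be the Lipschitz constant of $F$ in its second argument. Then
\begin{equation*}
\dot z(t) \;\le\; L\,|x(t) - y(t)| \;=\; L\,|z(t)|.
\end{equation*}

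Second, I would restrict attention to the positive part $z_+(t) = \max(z(t), 0)$. On the open set $\{t : z(t) > 0\}$ we have $z_+ = z$ and hence $\dot z_+(t) \le L\,z_+(t)$; on the complement $z_+$ is identically zero and has zero derivative almost everywhere. Since $z_+(0) = 0$ and $z_+$ is absolutely continuous, integrating gives
\begin{equation*}
z_+(t) \;\le\; L \int_0^t z_+(s)\, \mathrm{d}s,
\end{equation*}
and the standard (integral form of) Grönwall's inequality then forces $z_+(t) = 0$ for all $t \ge 0$. This yields $z(t) \le 0$, i.e.\ $x(t) \le y(t)$, as desired.

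As an alternative route, one can use the classical perturbation trick: for $\varepsilon > 0$ let $y_\varepsilon$ solve $\dot y_\varepsilon = F(t, y_\varepsilon) + \varepsilon$ with $y_\varepsilon(0) = x_0 + \varepsilon$, so that $y_\varepsilon(0) > x(0)$ and $\dot y_\varepsilon > F(t, y_\varepsilon)$ strictly. A simple first-crossing-time argument shows $x(t) < y_\varepsilon(t)$ everywhere (if $t_0$ were the first equality point, then $\dot x(t_0) \ge \dot y_\varepsilon(t_0)$ would contradict the strict inequality), and continuous dependence on initial data in $\varepsilon$ gives the result in the limit $\varepsilon \to 0$. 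I expect the main obstacle to be a purely notational one: handling the almost-everywhere differentiability coming from the Lipschitz/absolutely continuous setting, so that the passage from $\dot z \le L|z|$ to the integral inequality on $z_+$ is justified by Theorem \ref{theorem:FundamentalThmCalculusACFunctions} rather than by classical smooth ODE theory.
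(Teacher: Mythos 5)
Your argument is essentially correct, but note that the paper does not prove this statement at all: it is recalled as a classical result of \v{C}aplygin (and Peano) with a citation to Mitrinovi\'c's book, and the paper's own proof effort goes into the integral analogue, Theorem \ref{VolterraInequality}, which is established by a first-crossing argument exploiting monotonicity of the kernel. Of your two routes, the second (perturbing to $\dot y_\varepsilon = F(t,y_\varepsilon)+\varepsilon$, $y_\varepsilon(0)=x_0+\varepsilon$, and running a first-crossing-time contradiction, then letting $\varepsilon\to 0$ via continuous dependence) is precisely the same kind of pointwise comparison argument the paper uses for Theorem \ref{VolterraInequality}, so it fits the paper's toolkit most closely and is fully rigorous under the stated hypotheses. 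Your primary route buys a quantitative Gr\"onwall estimate but needs a small patch: the hypothesis only gives $\dot x(t) \le F(t,x(t))$, i.e.\ an upper bound on $\dot x$, so $x$ (hence $z=x-y$) need not be Lipschitz or absolutely continuous, and Theorem \ref{theorem:FundamentalThmCalculusACFunctions} cannot be invoked to pass from $\dot z_+ \le L z_+$ to the integral inequality without an additional regularity assumption. This is easily repaired without integration: on any maximal interval where $z>0$, which must start at a zero of $z$ since $z(0)=0$, the everywhere-differentiable function $t\mapsto e^{-Lt}z(t)$ has non-positive derivative, hence is non-increasing by the mean value theorem, forcing $z\le 0$ there --- a contradiction. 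With that pointwise fix (or by simply adopting your second route), the proof is complete.
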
 
We note that even though this result is ascribed to \v{C}aplygin the result was already known to Peano, we refer to \cite[page 316]{mitrinovic1991inequalities} and the reference mentioned there for more information. 

However because we deal with functions that are not differentiable we need an integral version of the statement. For this we'll adapt a number of definitions and results on Volterra integral inequalities from \cite[Chapter I]{walter2012differential}, see also \cite{Alekseev, Perov}.
We first make the following definitions: 
Integral equations of the form
\begin{align}
x(t) = g(t) +\int_0^t k(x(\tau)) \ud \tau  
\nonumber
\end{align}
are called Volterra integral equations and $k(x)$ its kernel. In general these kernels are allowed to depend on $t$, but we don't need this in our context. The class $Z_c(k)$ of admissible function for the kernel $k$ are the functions $\phi(\tau)$ such that $k(\phi(\tau))$ exists and is integrable. We say that the kernel is monotone increasing if $k(x) \leq k(\bar{x})$ for all $x \leq \bar{x}$ in its domain, and strictly monotone if this still holds when both bounds are replaced by strict inequalities. 
We have,
\begin{theorem}\label{VolterraInequality} 
Suppose that $k(x)$ is a monotone increasing kernel, $x(t),y(t) \in Z_c(K)$, and $C_1$ a constant. Further assume that  
\begin{align}
y(t) &= C_1 + \int_0^t k (y(\tau) ) \ud \tau 
\nonumber
\\
x(t) &\geq C_1+ \int_0^t k (x(\tau) ) \ud \tau, 
\nonumber
\end{align} 
for all $t$ in the domain, where equality in the second equation only occurs for $t=0$. Moreover assume that there exists a $\delta'>0$ such that for all $t' \in (0, \delta')$, we have $y(t') < x(t ')$. 
 Then for all $t$ in the domain,
\[ 
y(t) \leq x(t),
\] 
where equality occurs only for $t=0$. 
\end{theorem}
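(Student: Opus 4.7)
The plan is a standard ``first crossing time'' argument by contradiction. I set $S := \{t > 0 : y(t) \geq x(t)\}$ and assume, for a contradiction, that $S \neq \emptyset$. Let $t_0 := \inf S$. The hypothesis that $y(t') < x(t')$ for $t' \in (0, \delta')$ gives immediately $t_0 \geq \delta' > 0$. Since $y$ is absolutely continuous (being $C_1$ plus the integral of an $L^1$ function), and $x$ is continuous (the standard regularity under which \v{C}aplygin-type results are formulated, and automatic in every application of this lemma later in the paper where $x$ arises as $R_K$ evaluated along the flow $\Phi_K$ and is thus locally Lipschitz), the difference $x - y$ is continuous. Since $t_0 \in \overline{S}$, continuity gives $y(t_0) \geq x(t_0)$; on the other hand for every $\tau \in (0,t_0)$ one must have $y(\tau) < x(\tau)$ (else $\tau \in S$ contradicts $t_0 = \inf S$), so passing to the limit $\tau \nearrow t_0$ yields $y(t_0) \leq x(t_0)$. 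Hence $y(t_0) = x(t_0)$.

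The contradiction then comes from combining the monotonicity of the kernel with the strict-inequality hypothesis on $x$. From $y(\tau) < x(\tau)$ on $(0, t_0)$ and monotonicity of $k$ I obtain $k(y(\tau)) \leq k(x(\tau))$ pointwise on that interval, whence
\[
y(t_0) \;=\; C_1 + \int_0^{t_0} k(y(\tau))\, d\tau \;\leq\; C_1 + \int_0^{t_0} k(x(\tau))\, d\tau.
\]
Because $t_0 > 0$, the strict-inequality clause of the hypothesis on $x$ gives
\[
C_1 + \int_0^{t_0} k(x(\tau))\, d\tau \;<\; x(t_0),
\]
and chaining these two bounds produces $y(t_0) < x(t_0)$, contradicting $y(t_0) = x(t_0)$ established above. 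Therefore $S = \emptyset$, which is exactly the asserted strict inequality $y(t) < x(t)$ for all $t > 0$; combined with the boundary value $y(0) = C_1 = x(0)$ this is precisely the conclusion.

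The only delicate step is the continuity argument at $t_0$. The statement does not spell out the regularity of $x$ beyond $x \in Z_c(k)$ and the one-sided integral inequality, but continuity of $x$ is the standard implicit assumption for \v{C}aplygin-type comparison theorems and is automatic in the paper's intended applications, where $x$ will be a distance-like quantity evaluated along the flow, hence locally Lipschitz. Under weaker regularity one would replace pointwise values at $t_0$ by essential limits, but no new idea is required: the heart of the argument remains the one-line chain of inequalities that compares $y(t_0)$ to the integral representation of $x(t_0)$ and exploits the fact that the integral representation is strict below $x(t_0)$ for every $t_0 > 0$.
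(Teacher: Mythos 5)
Your proposal is correct and follows essentially the same route as the paper's own proof: a contradiction at the first point $t_0>0$ where $x(t_0)=y(t_0)$, using monotonicity of the kernel to compare the integrals and the strict integral inequality for $x$ to obtain $y(t_0)<x(t_0)$. The only difference is that you spell out the continuity/first-crossing details (and flag the implicit regularity of $x$) that the paper leaves tacit.
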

\begin{proof}
For any $t' \in (0, \delta')$, the result follows from the hypothesis. If the assertion would be false, there would be a $t_0>0$ such that $x(t_0) =y(t_0)$. However because $k$ is assumed to be monotone,
\begin{align}
y(t_0) &= C_1 + \int_0^{t_0}  k (y(\tau) ) \ud \tau \leq   C_1+ \int_0^{t_0} k (x(\tau) ) \ud \tau < x(t_0).
\nonumber
\end{align} 
The result now follows. 
\end{proof}


\section{The \texorpdfstring{$(\lambda,\alpha)$}{(lambda, alpha)}-medial axis}\label{section:LambdaAlphaMedialAxis}
In this section, we define the $(\lambda,\alpha)$-medial axis and prove its stability with respect to both $\lambda$ and $\alpha$.

\subsection{The definition}

It is easy to check that the subset of $\ax (K)$ that lies \ at distance greater than $\alpha$ from $K$ coincides with the medial axis  $\ax (K^{\oplus \alpha})$ 
of the $\alpha$-offset of $K$,
\[
\ax (K^{\oplus \alpha}) = \{ x \in \ax(K)\mid R_K(x) > \alpha \}.
\]
Moreover, we observe that 
\begin{equation}\label{equation:ROffsetK}
 R_K(x) > \alpha \Rightarrow  R_{K^{\oplus \alpha} } (x) = d \left( x, K^{\oplus \alpha} \right) = d(x, K) - \alpha = R_K(x) - \alpha.
\end{equation}
We further note that when $y \in  \Theta_K(x)$ one has ${\|y-x\|} = R_K(x)$, by definition of $ \Theta_K$, see \eqref{equation:DefinitionTheta}.
Using these two observation and the definition of $ \Theta_K$ we see that $\Theta_K(x) $ and $\Theta_{ K^{\oplus \alpha}}$ are related as follows,
\[
y \in \Theta_K(x) \iff x +  R_{K^{\oplus \alpha} } (x)  \frac{ y-x}{R_K(x)}  \in \Theta_{ K^{\oplus \alpha}}(x) ,
\]
assuming that $R_K(x) > \alpha$. 
So that, under the same condition, 
\begin{equation}\label{equation:ThetaOffsetK}
 \left(  \Theta_{ K^{\oplus \alpha}}(x)  - x \right) =  \frac{ R_{K^{\oplus \alpha} } (x)}{R_K(x)}  \left( \Theta_K(x) - x \right),
\end{equation}
which yields
\begin{equation}\label{equation:ExpressionFOfsset}
\F_{K^{\oplus \alpha}}(x) = \radius  \left( \Theta_{ K^{\oplus \alpha}}(x) \right) =  \frac{ R_{K^{\oplus \alpha} } (x)}{R_K(x)}  \radius  \left( \Theta_K(x) \right),
\end{equation}
and thus 
\[
\cent \left( \Theta_{ K^{\oplus \alpha}}(x)\right)   -x  =  \frac{ R_{K^{\oplus \alpha} } (x)}{R_K(x)}  \left( \cent \left( \Theta_K \right) -x \right).
\]
Thanks to the definitions \eqref{equation:DefinitionF} and \eqref{equation:DefinitionNabla}, we find that
\[
  R_K(x) > \alpha \Rightarrow \nabla_{K^{\oplus \alpha}} (x) =  \nabla_K(x),
\] 
which in turn implies that the flows $\Phi_K$ and $\Phi_{K^{\oplus \alpha}}$ coincide in $\left( K^{\oplus \alpha} \right)^c$.

 \begin{figure}[h!]
\centering
\includegraphics[width=.5\textwidth]{}
\caption{Pictorial overview of the definitions and notation. The set $K^{\oplus \alpha}$ is indicated in grey.  
}
\Description{The same pentagon and medial axis as in Figures \ref{figure:DefinitionsNotations81} and \ref{figure:FlowExample} with an $\alpha$-offset and with notation indicated. }
\label{figure:DefinitionsNotations91}
\end{figure}	

We introduce the map $\F_K^\alpha : \left( K^{\oplus \alpha} \right)^c \rightarrow { \R_{\geq 0}}$ 
as
\begin{align} 
\F_K^\alpha \defunder{=} \F_{K^{\oplus \alpha}}.
\label{eq:def:FKalpha}
\end{align} 
Therefore  \eqref{equation:DefinitionF}, \eqref{equation:ROffsetK} , \eqref{equation:ThetaOffsetK} and \eqref{equation:ExpressionFOfsset} give us
that for $x \in  \left( K^{\oplus \alpha} \right)^c$, that is  $R_K(x) > \alpha$, one has,
\begin{equation}\label{equation:ExpressionF_Alpha}
\F_K^\alpha(x) = \frac{R_K(x)- \alpha } {R_K(x)} \F_K(x).
\end{equation}

The $(\lambda,\alpha)$-medial axis  of $K$, denoted $\ax_\lambda^\alpha(K)$ is the $\lambda$-medial axis of the $\alpha$-offset of $K$
\begin{align}
\ax_\lambda^\alpha (K) & \defunder{=} \ax_\lambda (K^{\oplus \alpha})  
\nonumber
\\
&\, = \{x \in \R^n \mid \F_K^\alpha(x) \geq \lambda \}. 
 & & \textrm{(by \eqref{eq:def:FKalpha})} 
\label{Eq:defAxAlphaLambdaK}
\end{align} 
We note that Lemma \ref{lemma:LmabdaMedialAxisIsCompact} extends to $(\lambda,\alpha)$-medial axis  of $K$ 
which is therefore compact as well.

Since, for $r\geq \alpha$, the map  $r \mapsto \frac{r-\alpha}{r}$ is increasing, we get from \eqref{equation:DerivativeR},
 Lemma \ref{lemma:FIsINotDecreasing}, and \eqref{equation:ExpressionF_Alpha} that
\begin{equation}\label{equation:FAlphaIsINotDecreasing}
t \mapsto \F_K^\alpha( \Phi_K (t,x )) \: \text{ is not decreasing.}
\end{equation}
In fact, when $\alpha>0$, $t\mapsto R_K( \Phi_K (t,x ))$ is strictly increasing as long as $\| \nabla_K(\Phi_K (t,x )) \| >0$, moreover 
the map  $t \mapsto \F_K^\alpha( \Phi_K (t,x )) $ 
 is  strictly increasing, which will be quantified in \eqref{equation:FAlphaIncreaseRate} below.
  
{The monotonicity in} \eqref{equation:FAlphaIsINotDecreasing}, together with the definition of $\ax_\lambda^\alpha (K)$, implies that $\ax_\lambda^\alpha (K)$
 is mapped to itself under the action of the flow:
\begin{equation}\label{equation:AXLambdaAlphaStableUnderFlow}
\forall t\geq 0, \: \Phi_K (t, \ax_\lambda^\alpha (K) ) \subset \ax_\lambda^\alpha (K).
\end{equation}
Observe that for $\lambda > \alpha$ one has 
\begin{align} 
x \in \ax_\lambda (K)  \Rightarrow  R_K(x) \geq \F_K(x) \geq \lambda.
\label{ShortLOWbndRKandFK} 
\end{align}
So if $x \in \ax_\lambda (K)$, we get from \eqref{equation:ExpressionF_Alpha} 
that
\[
\F_K^\alpha (x) = \frac{R_K(x)- \alpha } {R_K(x)} \F_K(x) \geq \frac{\lambda- \alpha }{\lambda} \F_K(x).
\]
So that,
\begin{align*} 
x \in \ax_\lambda (K)  &\Rightarrow \F_K(x) \geq \lambda \Rightarrow \F_K^\alpha (x) \geq  \frac{\lambda- \alpha }{\lambda} \lambda = \lambda- \alpha 
\\
 &\Rightarrow x \in \ax_{\lambda-\alpha}^\alpha  (K).
\end{align*} 
This in turn implies that if $\lambda >  \alpha$, one has,
\begin{align}
 \ax_{\lambda} (K)
\subset &\ax_{\lambda-\alpha}^\alpha (K)
\subset \ax_{\lambda-\alpha} (K) 
\label{equation:InclusionLambdaAlphaMedialAxe_A} 
\\
 \ax_{\lambda}^{\alpha} (K)
 \subset&  \ax_{\lambda} (K)
\subset \ax_{\lambda-\alpha}^{\alpha} (K) ,
\label{equation:InclusionLambdaAlphaMedialAxe_B}
\end{align}
and for any $c$ such that $0<c<1$ one has: 
\begin{equation}\label{equation:UnionOfLambdaAlphaMedialAxesGivesMedialAxis}
\bigcup _{\lambda >0} \: \bigcup _{0< \alpha < \lambda}   \ax_{\lambda}^\alpha (K) = \bigcup _{\lambda >0}   \ax_{\lambda}^{c \lambda} (K) = \ax(K).
\end{equation}

{ It follows  from \eqref{equation:ExpressionF_Alpha} and \eqref{Eq:defAxAlphaLambdaK} 
that $(\lambda, \alpha) \mapsto  \ax_{\lambda}^\alpha (K)$ is decreasing for inclusion order:
\begin{equation}\label{equation:LambdaAlphaMWDecreasing}
\lambda_1 \leq \lambda_2, \alpha_1 \leq \alpha_2 \Rightarrow \ax_{\lambda_2}^{\alpha_2}(K) \subset \ax_{\lambda_1}^{\alpha_1} (K).
\end{equation}
}

\subsection{The \texorpdfstring{$(\lambda,\alpha)$}{(lambda, alpha)}-medial axis is Hausdorff-stable under \texorpdfstring{$\lambda$}{lambda}  perturbation}

The purpose of this section is to show that the $(\lambda,\alpha)$-medial axis does not suffer much from instabilities. 
In this subsection we treat the stability with respect to $\lambda$ (Lemma \ref{corollary:LambdaMapstoAXLambdaAlphaIsHausdorffLipchitz} ),
 the following subsection is dedicated to the stability with respect to $\alpha$ (Lemma \ref{lemma:corollary:AlphaMapstoAXLambdaAlphaIsHausdorffLipchitz}).

We start with introducing the $(\alpha, \mu)$-reach.
\begin{definition}
For $\mu  \in (0,1]$ and $\alpha\geq 0$, the $(\alpha, \mu)$-reach of $K$, denoted by $r_\mu^\alpha (K)$, is defined as:
\[
r_\mu^\alpha (K) \defunder{=} \inf \{t> \alpha, \: \chi_K (t) < \mu \} = \alpha + r_\mu(K^{\oplus \alpha}).
\]
\end{definition} 

We need an easy lemma that gives a lower bound on the norm of $\| \nabla_K (x)\|$
for $x \in  \left( K^{\oplus \alpha'} \right)^c \setminus \ax_\lambda^\alpha$, for $0< \alpha'  \leq \alpha$.
To be able to state the result we define,
\begin{align}\label{equation:DefinitionBeta}
\tilde{\mu}_{\mu, \lambda}^{\alpha, \alpha'}(K) \defunder{=}  \min \left( \mu, \sqrt{1- \left( \frac{\lambda}{r_\mu^{\alpha'} (K)- \alpha}\right)^2 } \right).
\end{align}

\begin{lemma}\label{lemma:NablaLowerBoundedByBeta}
Let $K\subset \R^n$ be the complement of a bounded open  set $K^c$ and 
$\alpha, \lambda > 0$, $\mu \in (0,1]$ and $\alpha'$ such that $0< \alpha' \leq \alpha$
and  $r_\mu^{\alpha'} (K) > \alpha + \lambda $.
\newline For any $x  \in  \left( K^{\oplus \alpha'} \right)^c \setminus \ax_\lambda^\alpha$, one has
\[
\| \nabla_K(x) \| \geq \tilde{\mu},
\]
 where
\[
\tilde{\mu} = \tilde{\mu}_{\mu, \lambda}^{\alpha, \alpha'}(K) > 0.
\]
\end{lemma}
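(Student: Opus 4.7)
The plan is to dispatch the bound by splitting on where $R_K(x)$ sits relative to $r_\mu^{\alpha'}(K)$, then using one of the two lower bounds appearing inside the $\min$ defining $\tilde\mu$.

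First I would verify that $\tilde\mu > 0$. By hypothesis $r_\mu^{\alpha'}(K) > \alpha + \lambda$, hence $\lambda/(r_\mu^{\alpha'}(K) - \alpha) < 1$, so $\sqrt{1 - (\lambda/(r_\mu^{\alpha'}(K)-\alpha))^2}$ is strictly positive; combined with $\mu > 0$ this gives $\tilde\mu > 0$. This also shows that in particular $r_\mu^{\alpha'}(K) > \alpha$, which I will use in the second case below.

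Next I would split into two cases for $x \in (K^{\oplus\alpha'})^c \setminus \ax_\lambda^\alpha(K)$. In the first case, assume $R_K(x) < r_\mu^{\alpha'}(K)$. Since $x \in (K^{\oplus\alpha'})^c$ we have $R_K(x) > \alpha'$, so $R_K(x) \in (\alpha', r_\mu^{\alpha'}(K))$; by definition of $r_\mu^{\alpha'}(K) = \inf\{t > \alpha' : \chi_K(t) < \mu\}$, this forces $\chi_K(R_K(x)) \geq \mu$. Then \eqref{equation:CriticalFunctionFirstDefinition} (taking the infimum bound pointwise) together with \eqref{equation:NormGradientRelatedToFandR} gives $\|\nabla_K(x)\| \geq \chi_K(R_K(x)) \geq \mu \geq \tilde\mu$, which finishes this case. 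Note that the sub-case $R_K(x) \leq \alpha$ is automatically covered here, since then $R_K(x) \leq \alpha < r_\mu^{\alpha'}(K)$.

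In the second case, assume $R_K(x) \geq r_\mu^{\alpha'}(K) > \alpha$, so $x \in (K^{\oplus\alpha})^c$ and the quantity $\F_K^\alpha(x)$ is well defined. The assumption $x \notin \ax_\lambda^\alpha(K)$ then reads $\F_K^\alpha(x) < \lambda$. Using \eqref{equation:ExpressionF_Alpha}, this rearranges to
\[
\frac{\F_K(x)}{R_K(x)} < \frac{\lambda}{R_K(x) - \alpha} \leq \frac{\lambda}{r_\mu^{\alpha'}(K) - \alpha},
\]
where the last inequality uses $R_K(x) \geq r_\mu^{\alpha'}(K)$. Plugging into \eqref{equation:NormGradientRelatedToFandR} yields
\[
\|\nabla_K(x)\|^2 = 1 - \Bigl(\tfrac{\F_K(x)}{R_K(x)}\Bigr)^{\!2} > 1 - \Bigl(\tfrac{\lambda}{r_\mu^{\alpha'}(K) - \alpha}\Bigr)^{\!2} \geq \tilde\mu^2,
\]
which closes this case. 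Both cases together give the lemma.

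There is no real obstacle; the only mild subtlety is handling the possibility that $R_K(x) \leq \alpha$ (so $\F_K^\alpha(x)$ is undefined and $x \notin \ax_\lambda^\alpha(K)$ is vacuous), which is absorbed by the first case via the ordering $\alpha < r_\mu^{\alpha'}(K)$ implied by the standing hypothesis $r_\mu^{\alpha'}(K) > \alpha + \lambda$.
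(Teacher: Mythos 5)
Your proof is correct and follows essentially the same route as the paper: the same case split on whether $R_K(x)$ is below or at least $r_\mu^{\alpha'}(K)$, using the definition of the $(\alpha',\mu)$-reach in the first case and the bound $\F_K^\alpha(x)<\lambda$ together with \eqref{equation:NormGradientRelatedToFandR} in the second (you phrase the second case via \eqref{equation:ExpressionF_Alpha} for $\F_K^\alpha$ rather than via $\nabla_{K^{\oplus\alpha}}=\nabla_K$, which is an equivalent reformulation). Your explicit handling of the sub-case $R_K(x)\leq\alpha$ is a welcome clarification that the paper leaves implicit.
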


\begin{proof}
Because the lower bound $\tilde{\mu}$ is defined as a minimum over two values we distinguish two cases:
\begin{itemize}
\item If $R_K\left(x \right) < r_\mu^{\alpha'} (K)$ then  by definition of $ r_\mu^{\alpha'}$ one has \newline $\| \nabla_K \left(x \right) \| \geq \mu$.
\item 
If $R_K\left(x \right) \geq r_\mu^{\alpha'} (K)$ then  $R_{K^{\oplus \alpha}}(x) \geq r_\mu^{\alpha'} (K) - \alpha$.
Since $x\notin  \ax_\lambda^\alpha$, one has $ \F_{K^{\oplus \alpha}} (x) = \F_\lambda^\alpha(x) < \lambda$.
Combining this with \eqref{equation:NormGradientRelatedToFandR} we get,
\begin{align*}
\| \nabla_K \left(x \right) \| &= \| \nabla_{K^{\oplus \alpha}}  \left( x \right) \|  \\
& =  \sqrt{ 1 -  \left( \frac{ \F_{K^{\oplus \alpha}} (x)}{R_{K^{\oplus \alpha}}(x)}\right)^2 } \\
 &>  \sqrt{ 1 -   \left( \frac{ \lambda} {r_\mu^{\alpha'} (K) - \alpha} \right)^2 } >0.
\end{align*}
\end{itemize} 
So that in both cases, we get:
\begin{equation}\label{equation:LowerBoundOnNabla}
\| \nabla_K\left(x \right) \|  \geq \tilde{\mu}_{\mu, \lambda}^{\alpha, \alpha'}(K) >0.
\end{equation}
\end{proof}


As observed in \cite{cl2005lambda}, the 
$\lambda$-medial axis  seen as a function $\lambda \mapsto \ax_\lambda (K)$
is not continuous: $\ax_\lambda (K)$ may ``increase'' abruptly at some ``singular'' values of $\lambda$, 
 even when $\lambda$ is small with respect to $\wfs(K)$. 
 This is illustrated in Figures \ref{figure:MLambdaDiscontinuous} and \ref{figure:MLambdaContinuous}.
This is related to the fact that, for $x \in \ax_\lambda(K)$, the map $t \mapsto \F_K(\Phi_K(t,x))$ may remain constant on some intervals.
   
 \begin{figure}[ht!]
\centering
\includegraphics[width=.5\textwidth]{}
\caption{$\lambda \mapsto \ax_\lambda (K)$ is not continuous, because $K$ is non-smooth at the points $a$ and $b$. 
}
\Description[A curve that is smooth with the exception of two points.]{A curve that is smooth with the exception of two points, $a$ and $b$, that lie directly above each other and a distance $2\tilde{\lambda}$ apart. This gives that the $\lambda$-medial axis is not continuous at $\tilde{\lambda}$.}
\label{figure:MLambdaDiscontinuous}
\end{figure}
 \begin{figure}[h!]
\centering
\includegraphics[width=.4\textwidth]{}
\caption{$\lambda \mapsto \ax_\lambda^\alpha (K)$ is  continuous.}
\label{figure:MLambdaContinuous}
\Description[The same curve as in the previous figure (Figure \ref{figure:MLambdaDiscontinuous}), but with an $\alpha$ offset. ]{The same curve as in the previous figure (Figure \ref{figure:MLambdaDiscontinuous}), but with an $\alpha$ offset. Taking the offset removes the discontinuity. }
\end{figure}	
	
In contrast, for $\alpha>0$, the map  $\lambda \mapsto \ax^\alpha_\lambda (K)$  is continuous with respect to the Hausdorff distance, 
as long as $\alpha+ \lambda < \wfs(K)$, 
or, more generally, as long as  $\alpha+ \lambda < r_\mu^\alpha(K)$ for some $\mu>0$.
The continuity follows from the fact that 
the rate of increase of the map $t \mapsto \F^\alpha_K(\Phi_K(t,x))$ is lower bounded as soon as $\F^\alpha_K(x) >0$.
More precisely, we have:
\begin{lemma}\label{lemma:BoundOnLengthToLambdaPlusDelta}
Let $K\subset \R^n$ be the complement of a bounded open  set $K^c$ and 
$\alpha,  \delta, \lambda > 0$, $\mu \in (0,1]$
such that $r_\mu^\alpha (K) > \alpha + \lambda + \delta$.
One has:
\begin{equation}\label{equation:FlowSendsInLargerLambda}
\Phi_K \left( T, \ax_\lambda^\alpha (K) \right) \subset  \ax_{\lambda+\delta}^\alpha (K),
\end{equation}
with
\[
T = \frac{R_{\max}^2}{\alpha \lambda \tilde{\mu}^2} \delta,
\]
{where $R_{\max} = R_{\max}(K)  < \infty$ has been defined in \eqref{equation:DefinitionRMax}.}
For any $x \in \ax_\lambda^\alpha (K)$, the path
\[
[0,T] \ni t \rightarrow \Phi_K(t, x) \in  \ax_{\lambda}^\alpha (K),
\]
has length $S$ upper bounded by
\[
S \leq \frac{R_{\max}^2}{\alpha \lambda \tilde{\mu}^2} \delta,
\]
where 
$\tilde{\mu} = \tilde{\mu}_{\mu,\lambda+ \delta}^{\alpha, \alpha}(K) > 0$,
and
\begin{equation}\label{equation:InclusionOffsetLambdaMedialInLargerLambdaMedial}
\ax_\lambda^\alpha (K)  \subset  \left(       \ax_{\lambda+\delta}^\alpha (K)  \right)^{\oplus S} .
\end{equation}

\end{lemma}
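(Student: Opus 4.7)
The plan is to track the scalar $\F_K^\alpha(\Phi_K(t,x))$ along a trajectory starting at $x\in\ax_\lambda^\alpha(K)$ and show that it grows at a quantified rate until it crosses the threshold $\lambda+\delta$. The two driving facts are: (i) by Lemma \ref{lemma:FIsINotDecreasing} the map $t\mapsto \F_K(\Phi_K(t,x))$ is non-decreasing, and (ii) by \eqref{equation:DerivativeR} the distance function increases at rate $\|\nabla_K\|^2$, which by Lemma \ref{lemma:NablaLowerBoundedByBeta} (applied with $\alpha'=\alpha$ and the threshold $\lambda+\delta$, which is admissible thanks to the hypothesis $r_\mu^\alpha(K)>\alpha+\lambda+\delta$) is at least $\tilde\mu^2$ as long as the trajectory has not entered $\ax_{\lambda+\delta}^\alpha(K)$.

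Write $u_t=R_K(\Phi_K(t,x))$, $v_t=\F_K(\Phi_K(t,x))$, and $w_t=\F_K^\alpha(\Phi_K(t,x))=(1-\alpha/u_t)v_t$ using \eqref{equation:ExpressionF_Alpha}. A short algebraic identity gives
\begin{align*}
w_t-w_0=(v_t-v_0)\left(1-\frac{\alpha}{u_t}\right)+\alpha\,\frac{v_0(u_t-u_0)}{u_t u_0}.
\end{align*}
Both terms are non-negative: the first because $v$ is non-decreasing and $u_t>\alpha$, the second because $u$ is non-decreasing. Dropping the first term, using $v_0\geq w_0\geq\lambda$ (since $(u_0-\alpha)/u_0<1$) and $u_t u_0\leq R_{\max}^2$, one obtains the key estimate
\begin{align*}
w_t-w_0\geq\frac{\alpha\lambda}{R_{\max}^2}(u_t-u_0).
\end{align*}

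Next, let $\tau=\inf\{t\geq0:\Phi_K(t,x)\in\ax_{\lambda+\delta}^\alpha(K)\}$. For $t<\tau$ we have $\|\nabla_K(\Phi_K(t,x))\|^2\geq\tilde\mu^2$, so integrating \eqref{equation:DerivativeR} via the Lipschitz fundamental theorem of calculus (Theorem \ref{theorem:FundamentalThmCalculusACFunctions}) gives $u_t-u_0\geq \tilde\mu^2 t$, whence $w_t-w_0\geq\alpha\lambda\tilde\mu^2 t/R_{\max}^2$. If $\tau>T$, then evaluating at $t=T$ would yield $w_T\geq\lambda+\delta$, contradicting the definition of $\tau$; hence $\tau\leq T$. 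By \eqref{equation:AXLambdaAlphaStableUnderFlow} applied at the threshold $\lambda+\delta$, the trajectory then stays inside $\ax_{\lambda+\delta}^\alpha(K)$ for all $t\geq\tau$, proving \eqref{equation:FlowSendsInLargerLambda}.

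Finally, the length of the flow segment $[0,T]\ni t\mapsto\Phi_K(t,x)$ is $\int_0^T\|\nabla_K(\Phi_K(t,x))\|\,dt\leq T$ since $\|\nabla_K\|\leq 1$, so $S\leq T=R_{\max}^2\delta/(\alpha\lambda\tilde\mu^2)$. The path stays in $\ax_\lambda^\alpha(K)$ by \eqref{equation:AXLambdaAlphaStableUnderFlow}, and its endpoint lies in $\ax_{\lambda+\delta}^\alpha(K)$, so $\|x-\Phi_K(T,x)\|\leq S$, giving \eqref{equation:InclusionOffsetLambdaMedialInLargerLambdaMedial}. The only subtle point is checking that the chain of comparisons between $u_t$, $v_t$, $w_t$ goes through despite the fact that $v_t$ is only right-continuous and non-decreasing—this is the reason for the pointwise algebraic identity above, which bypasses any need to differentiate $\F_K$ and relies only on monotonicity together with the differentiable control of $u_t$.
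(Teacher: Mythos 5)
Your proof is correct and follows essentially the same approach as the paper: use monotonicity of $\F_K$ along the flow and the lower bound on $\|\nabla_K\|$ (from Lemma~\ref{lemma:NablaLowerBoundedByBeta}) to show that $\F_K^\alpha$ increases at rate at least $\alpha\lambda\tilde\mu^2/R_{\max}^2$ as long as the trajectory stays outside $\ax_{\lambda+\delta}^\alpha(K)$, then bound the path length by $T$ since $\|\nabla_K\|\leq 1$. The only cosmetic difference is that you bound the increment $\frac{u_T-\alpha}{u_T}-\frac{u_0-\alpha}{u_0}=\alpha\frac{u_T-u_0}{u_Tu_0}$ via a direct algebraic identity, whereas the paper differentiates $t\mapsto (R_K-\alpha)/R_K$ and integrates (cf.\ \eqref{equation:FirstExpressionOfDerivativeRKOnRkMinusAlpha}).
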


\begin{proof}
For $r\geq \alpha$ the map  $t\mapsto \frac{r-\alpha}{r}$ is increasing, this together with \eqref{equation:DerivativeR}, Lemma \ref{lemma:FIsINotDecreasing} and \eqref{equation:ExpressionF_Alpha} yields
\begin{equation}\label{equation:FirstBoundOnFAlphaPhiOfTandX}
 \F^\alpha_K\left(\Phi_K(t,x)\right) \geq \frac{R_K\left(\Phi_K(t,x)\right)- \alpha } {R_K\left(\Phi_K(t,x)\right)} \F_K(x).
\end{equation}
Observe that, using \eqref{equation:DerivativeR} again,
\begin{equation}\label{equation:FirstExpressionOfDerivativeRKOnRkMinusAlpha}
\frac{d}{dt^+} \frac{R_K\left(\Phi_K(t,x)\right)- \alpha } {R_K\left(\Phi_K(t,x)\right)}  = \frac{\alpha}{R_K\left(\Phi_K(t,x)\right)^2} \nabla_K\left(\Phi_K(t,x)\right)^2.
\end{equation}
Consider now $T$ such that  $\Phi_K( T,x) \notin \ax_{\lambda+\delta}^\alpha (K)$, and then $ \F^\alpha_K\left(\Phi_K(T,x)\right)   < \lambda+\delta$.
From \eqref{equation:FAlphaIsINotDecreasing} this gives:
\begin{equation*}
\forall t\in [0, T], \:  \F^\alpha_K\left(\Phi_K(t,x)\right)   < \lambda+\delta.
\end{equation*}
Lemma \ref{lemma:NablaLowerBoundedByBeta} gives a lower bound on $\| \nabla_K(\Phi_K(t,x)) \|$, that is, 
\[
\| \nabla_K(\Phi_K(t,x)) \| \geq \tilde{\mu} = \tilde{\mu}_{\mu,\lambda+ \delta}^{\alpha, \alpha}(K) > 0
\]
and \eqref{equation:FirstExpressionOfDerivativeRKOnRkMinusAlpha} then gives,
\begin{equation}\label{equation:LowerBoundOnDerivativeRKOnRkMinusAlpha}
t\in [0, T] \Rightarrow \frac{d}{dt^+} \frac{R_K\left(\Phi_K(t,x)\right)- \alpha } {R_K\left(\Phi_K(t,x)\right)}  > \frac{\alpha \tilde{\mu}^2}{R_{\max}^2}.
\end{equation}
This, thanks to Theorem \ref{theorem:FundamentalThmCalculusACFunctions}, leads us to the following bound, 
\begin{align}
\F^\alpha_K & \left(\Phi_K(T,x)\right)  - \F^\alpha_K(x) 
\nonumber
\\ 
& \geq  \F_K(x)  \left( \frac{R_K\left(\Phi_K(T,x)\right)- \alpha } {R_K\left(\Phi_K(T,x)\right)}  - \frac{R_K(x )- \alpha } {R_K(x)} \right) 
\tag{by \eqref{equation:FirstBoundOnFAlphaPhiOfTandX}}
\\
& =  \F_K(x)  \int_0^{T}  \left( \frac{d}{dt^+} \frac{R_K\left(\Phi_K(t,x)\right)- \alpha } {R_K\left(\Phi_K(t,x)\right)} \right) dt   \nonumber
\\
& > \F_K(x) T   \frac{\alpha \tilde{\mu}^2}{R_{\max}^2}.
\tag{by \eqref{equation:LowerBoundOnDerivativeRKOnRkMinusAlpha}}  
\\
\label{equation:FAlphaIncreaseRate}
\end{align}
Because $x \in \ax_\lambda^\alpha (K)$, $\F^\alpha_K(x) \geq \lambda$, 
and \eqref{equation:InclusionLambdaAlphaMedialAxe_B} we find that $x \in \ax_\lambda (K)$. The fact that  $x \in \ax_\lambda (K)$ in turn implies that $\F_K(x) \geq \lambda$. This then yields,
\[
 \F^\alpha_K\left(\Phi_K(T,x)\right) > \lambda + \frac{\lambda \alpha \tilde{\mu}^2}{R_{\max}^2} T. 
\]
We have shown that:
\begin{eqnarray*}
x \in \ax_\lambda^\alpha (K) \: & \text{and} & \: \Phi_K( T,x) \notin \ax_{\lambda+\delta}^\alpha (K) \\
& \Rightarrow& \:\lambda + \frac{\lambda \alpha \tilde{\mu}^2}{R_{\max}^2} T < \lambda+ \delta  
\Rightarrow  T <  \frac{\delta R_{\max}^2} {\lambda \alpha \tilde{\mu}^2}.
\end{eqnarray*}
By contraposition we have:
\[
x \in \ax_\lambda^\alpha (K)\: \text{and} \: T \geq  \frac{\delta R_{\max}^2} {\lambda \alpha \tilde{\mu}^2}
 \Rightarrow \Phi_K( T,x) \in \ax_{\lambda+\delta}^\alpha (K).
\]

Since $\forall y,\, \| \nabla_K (y)\| \leq 1$, one has
\[
S =  \length  \left(\Phi_K([0,T], x \right)  =  \int_0^T  \| \nabla_K \left(\Phi_K(t,x)\right) \| dt \leq T  \leq \frac{\delta R_{\max}^2} {\lambda \alpha \tilde{\mu}^2}.
\]
Since the Euclidean distance between $x \in \ax_\lambda^\alpha (K)$ and  $ \Phi_K\left(T, x \right)  \in  \ax_{\lambda+\delta}^\alpha (K) $
cannot be larger than the length of the path connecting them we get \eqref{equation:InclusionOffsetLambdaMedialInLargerLambdaMedial}.

\end{proof}

{
Since for $\delta \geq 0$  one has 
$  \ax_{\lambda+\delta}^\alpha (K)   \subset    \ax_{\lambda}^\alpha (K)   $, 
\eqref{equation:InclusionOffsetLambdaMedialInLargerLambdaMedial} and  the upper bound
 \eqref{equation:HausdorffDistanceForNestedSets} immediately give us:
}
\begin{lemma}\label{corollary:LambdaMapstoAXLambdaAlphaIsHausdorffLipchitz}
Let $K\subset \R^n$ be the complement of a bounded open  set $K^c$ and 
$\alpha,   \lambda_{\max} > 0$, $\mu \in (0,1]$
such that $r_\mu^\alpha (K) > \alpha +  \lambda_{\max}$.
The map
\[
\lambda \mapsto \ax_{\lambda}^\alpha (K)
\]
is $\left( \frac{R_{\max}^2}{\alpha \lambda_{\min} \tilde{\mu}^2} \right)$-Lipschitz in the interval $[\lambda_{\min}, \lambda_{\max}]$
 for  Hausdorff distance,
where $R_{\max} = R_{\max}(K) < \infty$ and $\tilde{\mu} = \tilde{\mu}_{\mu,  \lambda_{\max}}^{\alpha, \alpha}> 0$.
\end{lemma}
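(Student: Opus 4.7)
The plan is to reduce the Lipschitz statement directly to Lemma \ref{lemma:BoundOnLengthToLambdaPlusDelta}, exploiting monotonicity in $\lambda$ to get one side of the Hausdorff-distance bound for free and the flow-based inclusion \eqref{equation:InclusionOffsetLambdaMedialInLargerLambdaMedial} for the other.

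Concretely, fix $\lambda_1, \lambda_2 \in [\lambda_{\min}, \lambda_{\max}]$ with $\lambda_1 \le \lambda_2$, and set $\delta = \lambda_2 - \lambda_1 \ge 0$. First, monotonicity \eqref{equation:LambdaAlphaMWDecreasing} immediately yields $\ax_{\lambda_2}^\alpha(K) \subset \ax_{\lambda_1}^\alpha(K)$, so one of the two one-sided Hausdorff inclusions required by \eqref{equation:HausdorffDistanceForNestedSets} is trivial. Second, the assumption $r_\mu^\alpha(K) > \alpha + \lambda_{\max} \ge \alpha + \lambda_1 + \delta$ places us in the hypotheses of Lemma \ref{lemma:BoundOnLengthToLambdaPlusDelta} with parameters $(\lambda, \delta) = (\lambda_1, \lambda_2 - \lambda_1)$. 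That lemma gives
\[
\ax_{\lambda_1}^\alpha(K) \subset \bigl(\ax_{\lambda_2}^\alpha(K)\bigr)^{\oplus S}, \qquad S \le \frac{R_{\max}^2}{\alpha\, \lambda_1\, \bigl(\tilde{\mu}_{\mu,\lambda_2}^{\alpha,\alpha}\bigr)^2}\, \delta.
\]

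It then only remains to replace $\lambda_1$ and $\tilde{\mu}_{\mu,\lambda_2}^{\alpha,\alpha}$ by their worst-case values on the interval $[\lambda_{\min},\lambda_{\max}]$. Clearly $\lambda_1 \ge \lambda_{\min}$. For the gradient lower bound, inspecting the definition
\[
\tilde{\mu}_{\mu,\lambda}^{\alpha,\alpha} = \min\left(\mu,\ \sqrt{1-\left(\tfrac{\lambda}{r_\mu^{\alpha}(K)-\alpha}\right)^2}\right)
\]
shows that $\lambda \mapsto \tilde{\mu}_{\mu,\lambda}^{\alpha,\alpha}$ is nonincreasing on $(0,\, r_\mu^\alpha(K)-\alpha)$, and the assumption guarantees $\lambda_{\max}$ lies strictly inside this interval. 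Hence $\tilde{\mu}_{\mu,\lambda_2}^{\alpha,\alpha} \ge \tilde{\mu}_{\mu,\lambda_{\max}}^{\alpha,\alpha} = \tilde{\mu}$, so
\[
S \le \frac{R_{\max}^2}{\alpha\, \lambda_{\min}\, \tilde{\mu}^2}\, (\lambda_2 - \lambda_1).
\]
Combining the two inclusions with \eqref{equation:HausdorffDistanceForNestedSets} yields the claimed Lipschitz estimate.

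There is essentially no obstacle: the statement is packaged as a corollary and the work has already been done in Lemma \ref{lemma:BoundOnLengthToLambdaPlusDelta}. The only thing to watch is the mild monotonicity-in-$\lambda$ argument for $\tilde{\mu}$, needed so that a uniform constant over the whole interval $[\lambda_{\min},\lambda_{\max}]$ can be extracted; without this observation one would only obtain a pointwise rather than Lipschitz bound.
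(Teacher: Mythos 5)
Your proof is correct and follows essentially the same route as the paper: the paper's own argument is exactly the combination of the monotonicity inclusion $\ax_{\lambda+\delta}^\alpha(K)\subset\ax_\lambda^\alpha(K)$, the flow-based inclusion \eqref{equation:InclusionOffsetLambdaMedialInLargerLambdaMedial} from Lemma \ref{lemma:BoundOnLengthToLambdaPlusDelta}, and the nested-set bound \eqref{equation:HausdorffDistanceForNestedSets}. Your explicit observation that $\lambda\mapsto\tilde{\mu}_{\mu,\lambda}^{\alpha,\alpha}$ is nonincreasing, so the constant can be taken uniformly with $\tilde{\mu}=\tilde{\mu}_{\mu,\lambda_{\max}}^{\alpha,\alpha}$ and $\lambda_{\min}$, is a detail the paper leaves implicit, and it is stated correctly.
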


\subsection{The \texorpdfstring{$(\lambda,\alpha)$}{(lambda, alpha)}-medial axis is {Hausdorff-}stable under \texorpdfstring{$\alpha$}{ alpha} perturbation}
In this subsection we complete the proof of the stability of the $(\lambda,\alpha)$-medial axis under perturbations of $\lambda$ and $\alpha$. 
It is not difficult to see that $\ax_\lambda^{\alpha+\delta}(K) \subset \ax_\lambda^{\alpha}(K)$ for $\delta\geq 0$. We establish that $\ax_\lambda^{\alpha}(K)$ is not very far from $\ax_\lambda^{\alpha+\delta}(K) $ by proving that $\ax_\lambda^{\alpha}(K)$ flows (fast enough) into $\ax_\lambda^{\alpha+\delta}(K)$. This is done by determining that the radius of the enclosing ball of the closest points on $K$ increases sufficiently fast.  

\begin{lemma}\label{lemma:corollary:AlphaMapstoAXLambdaAlphaIsHausdorffLipchitz}
Let $K\subset \R^n$ be the complement of a bounded open  set $K^c$ and 
$\alpha,  \lambda > 0$, $\mu \in (0,1]$
if $r_\mu^{\alpha_{\min}} (K) > \alpha_{\max} + \lambda$.
The map 
\[
\alpha \mapsto \ax_{\lambda}^\alpha (K)
\]
is $\left(\frac{R_{\max}}{\alpha_{\min} \tilde{\mu}^2}\right)$-Lipschitz in the interval $[\alpha_{\min} , \alpha_{\max} ]$
 for Hausdorff  distance,
 where  $\tilde{\mu} =  \tilde{\mu}_{\mu, \lambda}^{\alpha_{\max}, \alpha_{\min}} > 0$.
 \end{lemma}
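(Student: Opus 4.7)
The plan is to adapt the flow-based argument of Lemma \ref{lemma:BoundOnLengthToLambdaPlusDelta} from $\lambda$-perturbations to $\alpha$-perturbations. By the monotonicity \eqref{equation:LambdaAlphaMWDecreasing}, for $\alpha_{\min} \leq \alpha \leq \alpha + \delta \leq \alpha_{\max}$ one has $\ax_\lambda^{\alpha+\delta}(K) \subset \ax_\lambda^\alpha(K)$, so by \eqref{equation:HausdorffDistanceForNestedSets} it suffices to bound, for every $x \in \ax_\lambda^\alpha(K)$, the distance from $x$ to $\ax_\lambda^{\alpha+\delta}(K)$ by $\tfrac{R_{\max}}{\alpha_{\min}\tilde\mu^2}\,\delta$. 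To do so, I would follow the flow $\Phi_K(\cdot,x)$ for a time $T$ to be chosen, show that $\Phi_K(T,x) \in \ax_\lambda^{\alpha+\delta}(K)$, and then bound $\|x - \Phi_K(T,x)\|$ by the arc length of the flow path, which is itself at most $T$ since $\|\nabla_K\|\leq 1$.

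The crux is an algebraic reduction. By \eqref{equation:ExpressionF_Alpha}, the condition $x \in \ax_\lambda^\alpha(K)$ is equivalent to $\F_K(x) \geq \lambda R_K(x)/(R_K(x) - \alpha)$. Combined with $\F_K(\Phi_K(t,x)) \geq \F_K(x)$ from Lemma \ref{lemma:FIsINotDecreasing}, an elementary computation using \eqref{equation:ExpressionF_Alpha} again (at $y=\Phi_K(T,x)$ with offset $\alpha+\delta$) shows that $\F_K^{\alpha+\delta}(\Phi_K(T,x)) \geq \lambda$ is implied by
\[
R_K(\Phi_K(T,x)) \;\geq\; R_K(x) + \frac{\delta}{\alpha}\,R_K(x).
\]
Thus the problem reduces to bounding the time required for the flow to raise $R_K$ by at least $\delta R_K(x)/\alpha$.

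For this, I would invoke \eqref{equation:DerivativeR}, which gives $\tfrac{d^+}{dt}R_K(\Phi_K(t,x)) = \|\nabla_K(\Phi_K(t,x))\|^2$, together with the pointwise lower bound of Lemma \ref{lemma:NablaLowerBoundedByBeta}. Applied with axis parameter $\alpha+\delta$ and offset parameter $\alpha_{\min}$ (the hypothesis $r_\mu^{\alpha_{\min}}(K) > \alpha_{\max} + \lambda$ makes the lemma applicable, and $R_K(\Phi_K(t,x)) \geq R_K(x) > \alpha \geq \alpha_{\min}$ keeps the trajectory outside $K^{\oplus\alpha_{\min}}$), this lemma yields $\|\nabla_K(\Phi_K(t,x))\| \geq \tilde{\mu}_{\mu,\lambda}^{\alpha+\delta,\alpha_{\min}} \geq \tilde{\mu}_{\mu,\lambda}^{\alpha_{\max},\alpha_{\min}} = \tilde\mu$ so long as $\Phi_K(t,x)\notin\ax_\lambda^{\alpha+\delta}(K)$. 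Integrating via Theorem \ref{theorem:FundamentalThmCalculusACFunctions} then gives $R_K(\Phi_K(T,x)) - R_K(x) \geq T\tilde\mu^2$, so $T = R_K(x)\delta/(\alpha\tilde\mu^2) \leq R_{\max}\delta/(\alpha_{\min}\tilde\mu^2)$ suffices, yielding the stated Lipschitz constant.

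The main subtle point, in my view, is verifying that the uniform constant $\tilde\mu = \tilde{\mu}_{\mu,\lambda}^{\alpha_{\max},\alpha_{\min}}$ appearing in the statement is truly the worst case across the interval. Reading off \eqref{equation:DefinitionBeta}, one sees that $\tilde{\mu}_{\mu,\lambda}^{\alpha+\delta,\alpha_{\min}}$ is decreasing in its first superscript (the denominator $r_\mu^{\alpha_{\min}}(K) - (\alpha+\delta)$ shrinks as $\alpha+\delta$ grows), so replacing $\alpha+\delta$ by its supremum $\alpha_{\max}$ only weakens the lower bound, and the advertised uniform $\tilde\mu$ is obtained.
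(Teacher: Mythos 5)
Your proposal is correct and follows essentially the same route as the paper's proof: the nested inclusion $\ax_\lambda^{\alpha+\delta}(K)\subset\ax_\lambda^\alpha(K)$ plus \eqref{equation:HausdorffDistanceForNestedSets}, the algebraic reduction via \eqref{equation:ExpressionF_Alpha} and Lemma \ref{lemma:FIsINotDecreasing} showing that raising $R_K$ by the factor $(\alpha+\delta)/\alpha$ forces entry into $\ax_\lambda^{\alpha+\delta}(K)$, the gradient lower bound of Lemma \ref{lemma:NablaLowerBoundedByBeta} integrated through \eqref{equation:DerivativeR}, and the arc-length bound $\leq T$. Your explicit monotonicity check of $\tilde{\mu}_{\mu,\lambda}^{\alpha+\delta,\alpha_{\min}}$ in the first superscript is a welcome clarification of a point the paper leaves implicit, but it is not a genuinely different argument.
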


\begin{proof}
From \eqref{equation:ExpressionF_Alpha} we conclude that,
\[
\delta \geq 0 \Rightarrow \forall x \in  \left(K^{\oplus \alpha + \delta} \right)^c, \,  \F_K^\alpha(x)  \geq \F_K^{\alpha + \delta} (x) . 
\]
It follows that
\begin{equation}\label{equation:AXLambdaAlphaDecreasingWithAlpha}
\delta \geq 0 \Rightarrow \ax_\lambda^{\alpha+\delta}(K) \subset \ax_\lambda^{\alpha}(K).
\end{equation}
Let us assume that for some $\mu>0$ one has  $r_\mu^\alpha(K) > \alpha+ \delta + \lambda$ and take $x \in \ax_\lambda^{\alpha} (K)$. 
Using Lemma \ref{lemma:NablaLowerBoundedByBeta}, we see that
\begin{align}
\Phi_K(t,x)  \notin \ax_\lambda^{\alpha+ \delta}(K)  \Rightarrow \| \nabla_K \left( \Phi_K(t,x)  \right) \| > \tilde{\mu} >0,
\label{eq:LowBoundNablaPhiKbyBeta}
\end{align}
where $\tilde{\mu} = \tilde{\mu}_{\mu, \lambda}^{\alpha+\delta, \alpha}(K) > 0$.

It then follows from \eqref{equation:DerivativeR} that,
\begin{align} 
R_K \left( \Phi_K \left(\frac{R_{\max}}{\alpha \tilde{\mu}^2} \delta, x \right)  \right) 
&= \int_{0}^{\frac{R_{\max}}{\alpha \tilde{\mu}^2} \delta } \frac{d}{dt'^+} R_K( \Phi_K (t',x )) \ud t'+ R_K(x) 
\nonumber
\\
& =  \int_{0}^{\frac{R_{\max}}{\alpha \tilde{\mu}^2} \delta } \left\| \nabla_K \left( \Phi_K (t',x ) \right) \right\|^2 \ud t'+ R_K(x)
\nonumber
\\
&> R_K(x) + \frac{R_{\max}}{\alpha} \delta 
\tag{by \eqref{eq:LowBoundNablaPhiKbyBeta}}
\\
&\geq R_K(x) \frac{\alpha + \delta}{\alpha} .
\nonumber
\end{align}
Using \eqref{equation:ExpressionF_Alpha}, we see
\begin{align}
\F_K^{\alpha+\delta} &\left( \Phi_K \left(\frac{R_{\max}}{\alpha \tilde{\mu}^2} \delta, x \right) \right) 
\nonumber
\\
&= \frac{R_K \left( \Phi_K \left(\frac{R_{\max}}{\alpha \tilde{\mu}^2} \delta, x \right) \right)- (\alpha+ \delta) } {R_K\left(\Phi_K \left(\frac{R_{\max}}{\alpha \tilde{\mu}^2} \delta, x \right)\right)} 
\F_K\left(\Phi_K \left(\frac{R_{\max}}{\alpha \tilde{\mu}^2} \delta, x \right) \right) 
\nonumber
\\
&> \frac{ R_K(x) \frac{\alpha + \delta}{\alpha}  - (\alpha+ \delta)}{R_K(x) \frac{\alpha + \delta}{\alpha} } \F_K(x)
\nonumber
\\
& = \frac{ R_K(x) -\alpha}{R_K(x)} \F_K(x) 
\nonumber
\\
&= \F_K^\alpha(x) 
\tag{by definition} 
\\
&\geq \lambda, \tag{because $x \in \ax_\lambda^{\alpha}$} 
\end{align}
where in the second line we used that for $r\geq \alpha + \delta$, the map $r\mapsto \frac{r-(\alpha+ \delta) }{r}$ is increasing, as well as the previous bound
and Lemma \ref{lemma:FIsINotDecreasing}.

With this we have shown that
\begin{equation}\label{equation:FlowSendsInLargerAlpha}
 r_\mu^\alpha(K) > \alpha+ \delta + \lambda \Rightarrow \Phi_K \left(\frac{R_{\max}}{\alpha \tilde{\mu}^2} \delta,   \ax_\lambda^{\alpha} \right) 
 \subset  \ax_\lambda^{\alpha+\delta}(K) .
\end{equation}
{
As for Lemma \ref{corollary:LambdaMapstoAXLambdaAlphaIsHausdorffLipchitz},
}
this together with \eqref{equation:AXLambdaAlphaStableUnderFlow} and \eqref{equation:AXLambdaAlphaDecreasingWithAlpha} completes the proof of the lemma. 
\end{proof}

\subsection{The \texorpdfstring{$(\lambda,\alpha)$}{(lambda, alpha)}-medial axis has the right homotopy type and a finite geodesic diameter} \label{sec:finiteDiam} 

In this section, we recover that the $(\lambda,\alpha)$-medial axis preserves the homotopy type, as for the usual medial axis \cite{LIEUTIERhomotopytype}. We then prove that there exists a path of finite length between any two points in the $(\lambda,\alpha)$-medial axis, that is it has a finite geodesic diameter. Both results are again based on manipulations of the flow $\Phi_K$. 

The next lemma gives an upper bound on the time needed for the flow $\Phi_K$ to map $K^c$ inside $\ax_{\lambda}^\alpha (K)$.
It is instrumental for characterizing the homotopy type of $\ax_{\lambda}^\alpha (K)$ (Theorem \ref{theorem:AxLambdaAlphaHasRightHomotopyType})
and establishing that it has finite geodesic diameter (Theorem \ref{theorem:AXLambdaAlphaIsGeodesicAndFiniteDiameter}).

\begin{lemma}\label{lemma:KCFlowsToLambdaPlusDelta}
Let $K\subset \R^n$ be the complement of a bounded open  set $K^c$ and 
$\alpha \geq 0$, $\lambda > 0$, $\mu \in (0,1] $ and $\alpha' $ such that $0< \alpha' \leq \alpha$ and
$r_\mu^{\alpha'} (K) > \alpha + \lambda$.
Then,
$ 
\Phi_K \left( \frac{R_{\max}}{\tilde{\mu}^2}, \left( K^{\oplus \alpha'}\right)^c \right) \subset \ax_\lambda^\alpha (K), 
$ 
where $R_{\max} = R_{\max} (K) < \infty$ and $\tilde{\mu} =  \tilde{\mu}_{\mu, \lambda}^{ \alpha, \alpha'} > 0$.
\end{lemma}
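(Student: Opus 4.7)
The plan is a short contradiction argument, parallel in spirit to the proof of Lemma \ref{lemma:BoundOnLengthToLambdaPlusDelta} but now exploiting the growth of $R_K$ rather than that of $\F_K^\alpha$ along the flow.

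Fix $x \in (K^{\oplus \alpha'})^c$, set $T = R_{\max}/\tilde{\mu}^2$, and suppose toward a contradiction that $\Phi_K(t,x) \notin \ax_\lambda^\alpha(K)$ for every $t \in [0,T]$. First I would check that the hypotheses of Lemma \ref{lemma:NablaLowerBoundedByBeta} apply uniformly along the trajectory. By \eqref{equation:DerivativeR}, the function $t \mapsto R_K(\Phi_K(t,x))$ is non-decreasing, so $R_K(\Phi_K(t,x)) \geq R_K(x) > \alpha'$, i.e.\ the trajectory stays inside $(K^{\oplus \alpha'})^c$. Combined with the standing assumption $\Phi_K(t,x) \notin \ax_\lambda^\alpha$, Lemma \ref{lemma:NablaLowerBoundedByBeta} yields
\begin{equation*}
\|\nabla_K(\Phi_K(t,x))\| \geq \tilde{\mu}_{\mu,\lambda}^{\alpha,\alpha'}(K) = \tilde{\mu} > 0 \qquad \text{for all } t \in [0,T].
\end{equation*}

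Next I would integrate using \eqref{equation:DerivativeR} and Theorem \ref{theorem:FundamentalThmCalculusACFunctions}:
\begin{equation*}
R_K(\Phi_K(T,x)) - R_K(x) = \int_0^T \|\nabla_K(\Phi_K(t,x))\|^2\, dt \geq T\,\tilde{\mu}^2 = R_{\max}.
\end{equation*}
This forces $R_K(\Phi_K(T,x)) \geq R_K(x) + R_{\max} > R_{\max}$, which contradicts the definition \eqref{equation:DefinitionRMax} of $R_{\max}$. Hence there must be some $t^\star \in [0,T]$ with $\Phi_K(t^\star,x) \in \ax_\lambda^\alpha(K)$, and by \eqref{equation:AXLambdaAlphaStableUnderFlow} the trajectory then remains in $\ax_\lambda^\alpha(K)$ for all later times; in particular $\Phi_K(T,x) \in \ax_\lambda^\alpha(K)$, which is what we want.

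I do not expect serious obstacles: the only subtle point is ensuring the trajectory does not escape $(K^{\oplus \alpha'})^c$ so that Lemma \ref{lemma:NablaLowerBoundedByBeta} really applies uniformly in $t$, but this is immediate from the monotonicity of $R_K \circ \Phi_K$. The argument is essentially a ``gradient-descent runtime'' bound: a uniform lower bound $\tilde{\mu}$ on $\|\nabla_K\|$ off the target set forces the convex, bounded quantity $R_K$ to grow at rate at least $\tilde{\mu}^2$, so the target must be reached in time at most $R_{\max}/\tilde{\mu}^2$.
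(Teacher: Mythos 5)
Your proof is correct and takes essentially the same approach as the paper: lower-bound $\|\nabla_K\|$ via Lemma \ref{lemma:NablaLowerBoundedByBeta} off the target set, integrate \eqref{equation:DerivativeR}, and contradict the bound $R_K \leq R_{\max}$. The only cosmetic difference is that the paper phrases it as a contraposition (assume $\Phi_K(T,x)\notin\ax_\lambda^\alpha(K)$ and deduce $T < R_{\max}/\tilde\mu^2$, using \eqref{equation:AXLambdaAlphaStableUnderFlow} at the start) whereas you run a contradiction on the whole trajectory and invoke \eqref{equation:AXLambdaAlphaStableUnderFlow} at the end; these are equivalent, and your explicit check that the trajectory stays in $(K^{\oplus\alpha'})^c$ is a welcome clarification left implicit in the paper.
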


\begin{proof} 
Consider $x \in \left(K^{\oplus \alpha'}\right)^c$. 
Using \eqref{equation:AXLambdaAlphaStableUnderFlow} and Lemma \ref{lemma:NablaLowerBoundedByBeta} again, one finds,
\[
\Phi_K(T,x) \notin \ax_\lambda^\alpha \Rightarrow \forall t \in [0,T],  \| \nabla_K \left( \Phi_K(t,x) \right)  \| > \tilde{\mu}.
\]
Applying \eqref{equation:DerivativeR} as before yields, 
\begin{align*} 
&\Phi_K(T,x) \notin \ax_\lambda^\alpha(K) 
\\
&\Rightarrow R_K(\Phi_K(T,x) )  =
\int_{0}^{T} \frac{d}{dt^+} R_K( \Phi_K (t,x )) \ud t + R_K(\Phi_K(0,x) ) > T \tilde{\mu}^2. 
\end{align*} 
Since $R_K(\Phi_K(T,x) )  \leq R_{\max}$ one gets,
\[
\Phi_K(T,x) \notin \ax_\lambda^\alpha(K)  \Rightarrow T \tilde{\mu}^2 < R_{\max}  \Rightarrow T < \frac{R_{\max}}{\tilde{\mu}^2}.
\]
The contraposition gives,
\[
 \Phi_K \left( \frac{R_{\max}}{\tilde{\mu}^2}, x \right) \in  \ax_\lambda^\alpha(K).
\]
\end{proof}

We now define the map 
\begin{align}
H : [0,1] \times \left( K^{\oplus \alpha'}\right)^c  &\rightarrow \left( K^{\oplus \alpha'}\right)^c 
\nonumber
\\
(t,x)  &\mapsto   \Phi_K \left( \frac{R_{\max}}{\tilde{\mu}^2} t, x \right) =H(t,x).  
\nonumber
\end{align}
Lemma \ref{lemma:KCFlowsToLambdaPlusDelta}, \eqref{equation:AXLambdaAlphaStableUnderFlow}, and the natural inclusion $\ax_\lambda^\alpha(K) \subset \left( K^{\oplus \alpha'}\right)^c $ together imply that $H$ gives a homotopy equivalence between $\left( K^{\oplus \alpha'}\right)^c$ and $\ax_\lambda^\alpha(K)$,
as it satisfies Definition \ref{definition:HomotopyEquivalenceDefinition}.
%
Since, by \eqref{equation:WfsAndMuReach},
\[r < \wfs(K) \Rightarrow \exists \mu>0, \, \textrm{such that } r_\mu^\alpha (K) \geq r,\] 
one sees that:
\begin{theorem}\label{theorem:AxLambdaAlphaHasRightHomotopyType}
Let $K\subset \R^n$ be the complement of a bounded open  set $K^c$, $\alpha \geq 0$, $\lambda > 0$ and  $\alpha'$ 
such that $0< \alpha' \leq \alpha$ and $r_\mu^{\alpha'} (K) > \alpha + \lambda$  
then  $\ax_\lambda^\alpha(K)$ has the same homotopy type as
$\left( K^{\oplus \alpha'}\right)^c$.

\end{theorem}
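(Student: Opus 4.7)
The plan is to verify that the map $H(t,x) = \Phi_K(t\, R_{\max}/\tilde{\mu}^2,\, x)$, with $\tilde{\mu} = \tilde{\mu}_{\mu,\lambda}^{\alpha,\alpha'}$, is a weak deformation retract from $X = (K^{\oplus \alpha'})^c$ onto $Y = \ax_\lambda^\alpha(K)$ in the sense of Definition \ref{definition:HomotopyEquivalenceDefinition}. By that definition, the existence of such a weak deformation retract immediately yields that $X$ and $Y$ have the same homotopy type, which is the conclusion of the theorem.

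I would begin by observing that the hypothesis $r_\mu^{\alpha'}(K) > \alpha + \lambda$, together with Lemma \ref{lemma:NablaLowerBoundedByBeta}, guarantees that $\tilde{\mu}_{\mu,\lambda}^{\alpha,\alpha'} > 0$, so the rescaled time $R_{\max}/\tilde{\mu}^2$ appearing in $H$ is a finite positive real number; here $R_{\max} = R_{\max}(K)$ is itself finite because $K^c$ is bounded. Continuity of $H$ on $[0,1] \times X$ then follows from the continuity of the flow $\Phi_K$ noted just after its definition in \eqref{equation:DefinitionFlow}.

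Next I would check the inclusion $Y \subset X$: any $x \in \ax_\lambda^\alpha(K) = \ax_\lambda(K^{\oplus \alpha})$ satisfies $\F_{K^{\oplus \alpha}}(x) \geq \lambda > 0$, hence $R_K(x) > \alpha \geq \alpha'$, so $x \in (K^{\oplus \alpha'})^c$. With this in hand, the three bullets of Definition \ref{definition:HomotopyEquivalenceDefinition} are immediate: first, $H(0,x) = \Phi_K(0,x) = x$ for every $x \in X$, by the initial condition of the flow; second, $H(1,x) = \Phi_K(R_{\max}/\tilde{\mu}^2,x) \in \ax_\lambda^\alpha(K)$ for every $x \in X$, by Lemma \ref{lemma:KCFlowsToLambdaPlusDelta} applied with the same parameters $\alpha, \alpha', \lambda, \mu$; third, for $y \in Y$ and any $t \in [0,1]$, flow-invariance \eqref{equation:AXLambdaAlphaStableUnderFlow} applied at the nonnegative time $t\, R_{\max}/\tilde{\mu}^2$ gives $H(t,y) \in \ax_\lambda^\alpha(K)$.

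There is essentially no obstacle in this argument beyond unpacking definitions; all the analytic work has already been carried out in Lemma \ref{lemma:KCFlowsToLambdaPlusDelta} and in the flow-invariance property \eqref{equation:AXLambdaAlphaStableUnderFlow}. In particular, the theorem does not need to be phrased in terms of the weak feature size: the hypothesis $r_\mu^{\alpha'}(K) > \alpha + \lambda$ stated in the theorem is exactly what is required for Lemmas \ref{lemma:NablaLowerBoundedByBeta} and \ref{lemma:KCFlowsToLambdaPlusDelta} to apply, so no further strengthening is needed. If a formulation involving $\wfs$ is desired, it can be deduced afterwards via \eqref{equation:WfsAndMuReach}, which shows that $\wfs(K) > \alpha' + \lambda$ (with $\alpha = \alpha'$) produces some $\mu > 0$ satisfying the hypothesis.
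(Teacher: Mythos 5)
Your proposal is correct and follows essentially the same route as the paper: the paper also defines $H(t,x)=\Phi_K\bigl(\tfrac{R_{\max}}{\tilde{\mu}^2}t,x\bigr)$ and concludes via Lemma \ref{lemma:KCFlowsToLambdaPlusDelta}, the flow-invariance \eqref{equation:AXLambdaAlphaStableUnderFlow}, and the inclusion $\ax_\lambda^\alpha(K)\subset\left(K^{\oplus\alpha'}\right)^c$ that $H$ is a weak deformation retract in the sense of Definition \ref{definition:HomotopyEquivalenceDefinition}. Your extra remarks (positivity of $\tilde{\mu}$, continuity of $H$, and the optional reformulation via $\wfs$ through \eqref{equation:WfsAndMuReach}) just make explicit what the paper leaves implicit.
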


Pushing a path $\gamma$ along the flow $\Phi$, while keeping its endpoints fixed, will be instrumental in several subsequent proofs.
We introduce therefore a specific notation for such pushed paths.

\begin{definition}[Pushed paths]\label{definition:PushedPaths}
Let $K\subset \R^n$ be a closed  set,  $\gamma:[0,1] \rightarrow K^c$  a rectifiable  path and $T\geq0$.
The path $\gamma$ pushed along a time $T$ by the flow $\Phi_K$ with fixed end points,  denoted  $\PushedPath{\gamma}{T}$
is  the path $\PushedPath{\gamma}{T} :[0,1] \rightarrow K^c$ from $\gamma(0)$ to $\gamma(1)$ defined by:
\begin{equation}
\PushedPath{\gamma}{T} (t) = 
\begin{cases}
	\Phi_K(3 t T, \gamma(0)) &\textrm{if} \quad t \in [0, 1/3] \\
	\Phi_K(T, \gamma(3t -1 )) &\textrm{if}\quad t \in [1/3, 2/3] \\
	\Phi_K((3(1- t) T, \gamma(1)) &\textrm{if} \quad t \in [2/3, 1] .
     \end{cases}
\nonumber
\end{equation}
\end{definition}
The next simple lemma says that $\PushedPath{\gamma}{T} (t)$ is rectifiable and allows to upper bound path lengths in several subsequent proofs (Theorems \ref{theorem:AXLambdaAlphaIsGeodesicAndFiniteDiameter} and \ref{theorem:GromovHausdorffStability}, Lemmas \ref{lemma:AxLambdaAlphaGHStableWRTLambda} 
and \ref{lemma:AxLambdaAlphaGHStableWRTAlpha} ). 

\begin{lemma}\label{lemma:PushedPathLengthExpansion}
Let $K\subset \R^n$ be a closed  set 
and $\alpha > 0$. Let $\gamma:[0,1] \rightarrow \left(K^{\oplus \alpha}\right)^c$ be a rectifiable path with length $L(\gamma)$ and $T\geq0$.
$\PushedPath{\gamma}{T}$ is rectifiable and has length $L\left(\PushedPath{\gamma}{T}  \right)$ upper bounded by:
\[
L\left(\PushedPath{\gamma}{T}  \right) \leq 2 \, T + L(\gamma) e^{\frac{T}{\alpha}}.
\]
\end{lemma}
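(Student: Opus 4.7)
The plan is to bound each of the three pieces of $\PushedPath{\gamma}{T}$ separately, using the unit upper bound $\|\nabla_K\|\leq 1$ for the two flow-line pieces at the endpoints and the exponential bound \eqref{equation:BoundOnFlowExpansion} for the middle piece.

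First, the two ``cap'' pieces on $t\in[0,1/3]$ and $t\in[2/3,1]$ are reparametrisations of flow lines $s\mapsto \Phi_K(s,\gamma(0))$ and $s\mapsto \Phi_K(s,\gamma(1))$ on the time interval $[0,T]$. Since $\Phi_K$ is a flow of $\nabla_K$ (see \eqref{equation:DefinitionFlow}) and $\|\nabla_K(y)\|\leq 1$ for every $y\in K^c$, applying Theorem \ref{theorem:FundamentalThmCalculusACFunctions} gives that each of these pieces is rectifiable with length at most $\int_0^T 1\,\ud s = T$. Together, they contribute at most $2T$ to $L(\PushedPath{\gamma}{T})$.

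Next, for the middle piece on $t\in[1/3,2/3]$, namely $\tau\mapsto \Phi_K(T,\gamma(\tau))$ with $\tau\in[0,1]$, the idea is to apply \eqref{equation:BoundOnFlowExpansion} to pairs of points $\gamma(\tau_i),\gamma(\tau_{i+1})$ of any partition of $\gamma$. Here I need that $R_K(\gamma(\tau))\geq \alpha$ along the whole path, which is immediate from the hypothesis $\gamma([0,1])\subset (K^{\oplus\alpha})^c$. Then \eqref{equation:BoundOnFlowExpansion} gives
\[
\|\Phi_K(T,\gamma(\tau_{i+1}))-\Phi_K(T,\gamma(\tau_i))\|\leq \|\gamma(\tau_{i+1})-\gamma(\tau_i)\|\,e^{T/\alpha}.
\]
Summing over a partition of $[0,1]$ and taking the supremum of both sides shows that the middle piece is rectifiable and has length at most $L(\gamma)\,e^{T/\alpha}$.

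Summing the three contributions yields the claimed bound
\[
L\!\left(\PushedPath{\gamma}{T}\right) \leq 2T + L(\gamma)\,e^{T/\alpha}.
\]
The only mildly delicate point is checking that the Lipschitz-type inequality \eqref{equation:BoundOnFlowExpansion}, stated for pairs of points, upgrades to a length bound for the pushed middle piece; this is standard once one uses the definition of length as the supremum over polygonal inscribed approximations.
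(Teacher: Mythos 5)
Your proof is correct and follows essentially the same route as the paper: bounding the two endpoint flow-line ``caps'' by $T$ each via $\|\nabla_K\|\leq 1$, and bounding the middle pushed piece by $L(\gamma)e^{T/\alpha}$ via \eqref{equation:BoundOnFlowExpansion} applied to arbitrarily fine partitions of $\gamma$. The only cosmetic difference is that the paper phrases the cap bound as a $3$-Lipschitz statement for the reparametrised path, whereas you compute it directly as $\int_0^T 1\,\ud s$; these are the same argument.
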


\begin{proof} 
Since the norm of $\nabla_K(x)$ is upper bounded by $1$, $\PushedPath{\gamma}{T} $ is $3$-Lipschitz 
on the first and third intervals $[0, 1/3]$ and $[2/3, 1]$ so that the length of each of these parts is upper bounded by  $T$.

Because $\Image(\gamma)$ is included in  $\left(K^{\oplus \alpha}\right)^c$, the length of 
$\PushedPath{\gamma}{T} \left(  [1/3, 2/3]   \right)$  is bounded by 
\[ 
 L(\gamma) e^{\frac{T}{\alpha}}.
 \]
This follows, because \eqref{equation:BoundOnFlowExpansion} can be applied to arbitrary small subdivisions of $\mathbf{\gamma}$,
the bound on the expansion factor extends to the length of the curve through the definition of the length of rectifiable curves.
\end{proof}

The next theorem  shows that connected components of $(\lambda,\alpha)$-medial axes have finite geodesic diameter.
In particular, geodesic distances inside connected components of $(\lambda,\alpha)$-medial axes are finite, and are realized by minimal paths.

{Before we can go into the precise statement we need to introduce some notation. } 
For a Borel set $X\subset \R^n$ we  denote by $\vol (X)$ its $n$-Lebesgue measure, or volume. We write $V_n = \vol(\B^n(0,1))$ for the volume of the unit $n$-ball,
$ 
V_n = \frac{\pi^{\frac{n}{2}}}{\Gamma\left(\frac{n}{2} +1\right)},
$ 
where $\Gamma$ is the Euler Gamma function, see for example \cite[page 622]{DuistermaatKolkIntegration}.

\begin{remark}
Thanks to Theorem  \ref{theorem:AxLambdaAlphaHasRightHomotopyType},  connected components of 
 $\ax_\lambda^\alpha(K)$ are in one to one correspondance with connected components of  $\left(K^{\oplus \alpha}\right)^c$.
 For this reason, the  next statements related to geodesic diameters and Gromov-Hausdorff distance  assumes $\left(K^{\oplus \alpha}\right)^c$ 
 connected without real loss of generality.
 \end{remark}


While  any connected open  subset of $\R^n$ is path-wise connected \cite [Proposition 12.25]{sutherland2009introduction}, some 
connected open bounded subset may not have a finite geodesic diameter. 
We'll illustrate this in the following example. Let 
%
{
\[ S=\left \{ (x,y) \in \mathbb{R}^2   \big\vert \:  0 < x< 1, \: -\frac{1}{2} +  \sin \frac{1}{x} < y <  \frac{1}{2}  +   \sin \frac{1}{x}  \right \} 
\]}
Observe that $S$ is indeed a connected, bounded, open subset of $\R^2$ with infinite geodesic diameter.
Moreover, for some $\alpha >1$, taking
\[ K = \Big\{ (x,y,z) \in \R^3 \big\vert \: (x,y) \notin S \:\:\:  \text{and} \:\:\: d\big( (x,y,z) , \:  S \times \{0\} \big) \geq \alpha \Big\},
\]
we get that the open bounded set $\left(K^{\oplus \alpha}\right)^c$ deform retracts by vertical projection on $S \times \{0\}$
along trajectories of small length $h < 1$.
Thanks to this vertical deformation retract we have that $d_{GH} (\left(K^{\oplus \alpha}\right)^c, S) \leq 2h$. 
More precisely the argument goes as follows: 
We have that
\[
\left(K^{\oplus \alpha}\right)^c = \left\{ (x,y,z) \middle | (x,y) \in S \textrm{ and } |z| < \bar{h}(x,y) \right \}, 
\]  
with $\bar{h}(x,y) \leq h \leq 1$. It follows by concatenating a geodesic $S$ with vertical line segments, that 
\[
d( (x_1,y_1,z_1), (x_2,y_2,z_2)) \leq d( (x_1,y_1), (x_2,y_2)) +2 h.
\] 
Here we wrote $(x_1,y_1,z_1), (x_2,y_2,z_2)$ for elements of $\left(K^{\oplus \alpha}\right)^c$ and $(x_1,y_1), (x_2,y_2)$ for the elements of $S$ with the same first two coordinates. Conversely, $d( (x_1,y_1), (x_2,y_2)) \leq d( (x_1,y_1,z_1), (x_2,y_2,z_2))$ because for any rectifiable curve $\gamma(t)\subset \R^3$ we have $L(\pi_{\R^2}(\gamma(t))) \leq L(\gamma(t))$, where $\pi_{\R^2}$ denotes the projection onto $\R^2$. 
It follows from the bound $d_{GH} (\left(K^{\oplus \alpha}\right)^c, S) \leq 2h$ that $\left(K^{\oplus \alpha}\right)^c$ also has infinite geodesic diameter.
So to be the complement of an offset does not imply  finite geodesic diameter either. 

However, we have the following:
\begin{lemma}\label{lemma:OpenSetWithGeodesicFiniteDiameter}
Let $K\subset \R^n$ be the complement of a bounded, open  set $K^c$.
Assume there are $\mu>0$ and $\alpha > \alpha' > 0$  such that $r_\mu^{\alpha'}(K) > \alpha$.

Then, if   $\left(K^{\oplus \alpha}\right)^c$ is connected it has finite geodesic diameter.
In particular, if for some $z\in \R^n$ and $R_{\operatorname{bound}}>0$, one has $K^c \subset \B (z, R_{\operatorname{bound}} )$, then: 
\begin{align*}
\Gdiam \left(\left(K^{\oplus \alpha}\right)^c\right) \leq &2   \frac{\alpha-\alpha'}{\mu} \\& + 2\left( \left( \frac{ 4 R_{\operatorname{bound}} } {\alpha- \alpha' }\right)^n+1\right) (\alpha - \alpha')e^{\frac{\alpha-\alpha'}{\alpha \mu}}.
\end{align*} 

\end{lemma}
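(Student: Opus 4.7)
My plan is to reduce the geodesic-diameter bound to (i) a volume-packing argument producing a chain of ball centers linking any two points of the relevant connected component, from which a polyline of controlled Euclidean length arises, and (ii) a flow-based deformation converting this polyline into a path inside $(K^{\oplus \alpha})^c$ whose length is controlled via Lemma \ref{lemma:PushedPathLengthExpansion}. Fix $x_1, x_2 \in (K^{\oplus \alpha})^c$. Since $(K^{\oplus \alpha'})^c \subset K^c \subset \B(z, R_{\operatorname{bound}})$, I would pick a maximal packing of pairwise disjoint open balls of radius $(\alpha-\alpha')/4$ with centers $c_1,\ldots,c_N$ in the connected component of $(K^{\oplus \alpha'})^c$ that contains $(K^{\oplus \alpha})^c$ (a unique such component exists, since $(K^{\oplus \alpha})^c$ is connected and sits inside $(K^{\oplus \alpha'})^c$). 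A volume comparison in $\B(z, R_{\operatorname{bound}})$ yields $N \leq (4R_{\operatorname{bound}}/(\alpha-\alpha'))^n$ (up to a boundary-effect correction accounting for the ``$+1$''), and by maximality the closed balls $\bar B(c_i, (\alpha-\alpha')/2)$ cover this component. Declaring two centers adjacent when their Euclidean distance is at most $\alpha - \alpha'$, the resulting intersection graph is connected on the relevant component, yielding a chain of at most $N$ centers linking $x_1$'s ball to $x_2$'s. The straight-line polyline $\gamma$ through $x_1, c_{i_0},\ldots,c_{i_k}, x_2$ has at most $N+1$ edges of length $\leq \alpha - \alpha'$, hence $L(\gamma) \leq \bigl((4R_{\operatorname{bound}}/(\alpha-\alpha'))^n + 1\bigr)(\alpha-\alpha')$; and by $1$-Lipschitzness of $R_K$ it lies in $(K^{\oplus \alpha'})^c$.

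I would then apply the pushed-path construction $\PushedPath{\gamma}{T}$ of Definition \ref{definition:PushedPaths} for a time $T$ large enough that $\Phi_K(T, \gamma(\cdot)) \subset (K^{\oplus \alpha})^c$. The assumption $r_\mu^{\alpha'}(K) > \alpha$ gives $\|\nabla_K(x)\| \geq \mu$ whenever $R_K(x) \in (\alpha', \alpha]$, so by \eqref{equation:DerivativeR} the flow raises $R_K$ at rate at least $\mu^2$ in that region, and the arc length of any flow segment traversing it is at most $\mu^{-1}$ times the corresponding $R_K$-increase (since $\|\nabla_K\|^2 \geq \mu\|\nabla_K\|$). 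Because $x_1, x_2 \in (K^{\oplus \alpha})^c$ and $R_K$ is non-decreasing along $\Phi_K$, the two vertical arcs of $\PushedPath{\gamma}{T}$ stay in $(K^{\oplus \alpha})^c$ and can each be bounded in length by $(\alpha-\alpha')/\mu$, giving the summand $2(\alpha-\alpha')/\mu$. The horizontal part $\Phi_K(T, \gamma(\cdot))$ also lies in $(K^{\oplus \alpha})^c$, so the Euclidean expansion estimate \eqref{equation:BoundOnFlowExpansion} applies with offset $\alpha$ rather than $\alpha'$, producing the factor $e^{(\alpha-\alpha')/(\alpha\mu)}$ once the pushing time is traded for path length via the $\mu$ factor. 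Summing the two contributions yields the claimed inequality.

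The main obstacle is to correctly synchronize the pushing time $T$ with the gradient lower bound $\mu$ so that the sharp vertical-arc constant $(\alpha-\alpha')/\mu$ and the sharp expansion exponent $(\alpha-\alpha')/(\alpha\mu)$ both appear. A naive application of Lemma \ref{lemma:PushedPathLengthExpansion} directly to $\gamma \subset (K^{\oplus \alpha'})^c$ with $T = (\alpha-\alpha')/\mu^2$ would instead yield the cruder values $2(\alpha-\alpha')/\mu^2$ and $e^{(\alpha-\alpha')/(\alpha'\mu^2)}$. Obtaining the stated bound likely requires adapting the proof of Lemma \ref{lemma:PushedPathLengthExpansion} so that the expansion estimate \eqref{equation:BoundOnFlowExpansion} is applied to the flowed image $\Phi_K(T, \gamma(\cdot))$, which sits in $(K^{\oplus \alpha})^c$, rather than to $\gamma$ itself, and to bound vertical-arc lengths using the identity $dR_K/dt = \|\nabla_K\|^2$ from \eqref{equation:DerivativeR} together with $\|\nabla_K\|^2 \geq \mu\|\nabla_K\|$ in the relevant range.
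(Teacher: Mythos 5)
Your overall route is the same as the paper's: cover $\left(K^{\oplus \alpha}\right)^c$ by finitely many balls whose quarter-radius shrinks form a packing inside $\B(z, R_{\operatorname{bound}})$, use the resulting (connected) intersection graph to build a polyline of bounded length through the centers, and then convert that polyline to a path in $\left(K^{\oplus \alpha}\right)^c$ via the pushed path $\PushedPath{P}{T}$ of Definition~\ref{definition:PushedPaths}, invoking Lemma~\ref{lemma:PushedPathLengthExpansion} for the length bound. The paper also chooses the pushing time so that the whole flowed polyline crosses into $\left(K^{\oplus \alpha}\right)^c$ using the gradient lower bound $\|\nabla_K\| \geq \mu$ on $\left(K^{\oplus \alpha'}\right)^c \setminus \left(K^{\oplus \alpha}\right)^c$.

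Where you and the paper diverge is precisely the point you flagged: the paper's proof \emph{does} apply Lemma~\ref{lemma:PushedPathLengthExpansion} directly to $P$, which lies in $\left(K^{\oplus \alpha'}\right)^c$, and does not implement the two refinements you speculate about. Your speculated refinements do not actually go through. The vertical arcs start at points of $\left(K^{\oplus \alpha}\right)^c$, where $R_K > \alpha$ may exceed $r_\mu^{\alpha'}(K)$, so the $\mu$-lower bound on $\|\nabla_K\|$ that you use to convert $dR_K/ds \geq \mu$ into an arc-length bound $(\alpha - \alpha')/\mu$ is not available there; the only safe estimate is $\mathrm{length} \leq T$ via $\|\nabla_K\| \leq 1$. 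Likewise, the expansion estimate \eqref{equation:BoundOnFlowExpansion} measures the stretching of the flow map over the \emph{entire} time interval, and the relevant lower bound on $R_K$ holds at the start of the trajectory; since $\gamma \subset \left(K^{\oplus \alpha'}\right)^c$, the exponent is $T/\alpha'$, and there is no way to "apply it to the flowed image" to replace $\alpha'$ by $\alpha$. Finally, the pushing time $T$ that guarantees $\Phi_K(T, P) \subset \left(K^{\oplus \alpha}\right)^c$ must satisfy $T\mu^2 \geq \alpha - \alpha'$ because $\frac{d}{dt^+}R_K = \|\nabla_K\|^2 \geq \mu^2$ in the crossing zone, i.e., $T = (\alpha-\alpha')/\mu^2$, exactly as in your "naive" computation. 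So your naive version, giving $2(\alpha-\alpha')/\mu^2 + L(P)\,e^{(\alpha-\alpha')/(\alpha'\mu^2)}$, is what the argument actually produces; the exponents $(\alpha-\alpha')/(\alpha\mu)$ and the coefficient $(\alpha-\alpha')/\mu$ in the displayed bound appear to be misprints in the lemma statement rather than something you were expected to reproduce.
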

This diameter is bounded by constructing a very dense graph inside 
 the set $\left(K^{\oplus \alpha'}\right)^c$ 
and bounding the distance between any two points in the graph. The result then follows by pushing this path inside   $\left(K^{\oplus \alpha}\right)^c$. 

\begin{proof}[of  Lemma \ref{lemma:OpenSetWithGeodesicFiniteDiameter}]
Since $\left(K^{\oplus \alpha}\right)^c$ is compact, 
we can cover it by a finite number of open balls $\left( \B^\circ (y_i, \frac{\alpha- \alpha'}{2}) \right)_{i \in I_0}$, where $I_0$ is a finite set  and
 $y_i \in \left(K^{\oplus \alpha}\right)^c$.
It is possible to extract an $\varepsilon$-net $\{ y_i, i\in I\}$ from $\{ y_i, i\in I_0\}$, that is $I \subset I_0$ and
\begin{equation} \label{equation:coveringA}
 \left(K^{\oplus \alpha}\right)^c \subset \bigcup_{i \in I} \B^\circ \left(y_i, \alpha- \alpha'\right) \subset  \left(K^{\oplus \alpha'}\right)^c ,
\end{equation}
with 
\begin{equation} \label{equation:packingA}
i,j \in I \: \text{and} \: i\neq j \Rightarrow \B \left(y_i, \frac{\alpha- \alpha'}{4} \right) \cap  \B \left(y_i,\frac{\alpha- \alpha'}{4} \right) =   \emptyset.
\end{equation}
Let us denote the cardinality  of $I$ by $N$.
Since  by \eqref{equation:packingA} the balls $\B \left(y_i,\frac{\alpha- \alpha'}{4} \right)$ 
are disjoint and $\cup_{i\in I} \B \left(y_i,\frac{\alpha- \alpha'}{4} \right)  \subset K^c \subset \B (z, R_{\operatorname{bound}} )$ one has
\begin{equation} \label{equation:UpperBoundN}
N \leq \frac{\vol (\B (z, R_{\operatorname{bound}}) )}{\vol \left( \B \left(0,\frac{\alpha- \alpha'}{4} \right) \right) } 
\leq \left( \frac{ 4 R_{\operatorname{bound}} } {\alpha- \alpha' }\right)^n.
\end{equation}

Let $G$ be the graph with one vertex for each $y_i\in \{1,\ldots,N\}$ and one edge for each pair $(i,j)$  such that 
$\B^\circ (y_i,  \alpha- \alpha')\cap \B^\circ (y_j,  \alpha- \alpha') \neq \emptyset$.
This graph is connected as otherwise, from \eqref{equation:coveringA},
$\left(K^{\oplus \alpha}\right)^c$ would be the disjoint union of two non-empty disjoint open sets, which would contradict its connectedness.

Consider $x,x' \in \left(K^{\oplus \alpha}\right)^c$ and let  $k,k'$ be such that $x\in \B^\circ (y_{k}, \alpha- \alpha')$ and  $x'\in \B^\circ (y_{k'}, \alpha- \alpha')$.
We define a piecewise linear path  $\mathbf{P}$ from $x$ to $x'$ as follows.
First let $y_{k}= y_{i_0} \ldots y_{i_M}= y_{k'}$ a shortest path between $y_k$ and $y_{k'}$ in the graph $G$.
We define the piecewise linear path $\mathbf{P}$ in  $\left(K^{\oplus \alpha'}\right)^c$ as the concatenation  of the segments 
\[
[x y_{i_0}], [y_{i_0} y_{i_1}],\ldots, [y_{i_{M-1}} y_{i_M}],[y_{i_M},x'].
\]
The length of $P$ is then lower bounded: $L(\mathbf{P}) \leq 2(N+1) (\alpha - \alpha')$.
By definition of $r_\mu^{\alpha'}(K)$, since $r_\mu^{\alpha'}(K) > \alpha$ we have that 
$\| \nabla_K (x) \| \geq \mu$ for $x \in \left(K^{\oplus \alpha'}\right)^c \setminus \left(K^{\oplus \alpha} \right)^c$,
so that $\PushedPath{P}{\frac{\alpha-\alpha'}{\mu}} \subset  \left(K^{\oplus \alpha} \right)^c$.
Applying Lemma \ref{lemma:PushedPathLengthExpansion} we get:
\[
L\left(\PushedPath{P}{\frac{\alpha-\alpha'}{\mu}}  \right) \leq 2   \frac{\alpha-\alpha'}{\mu} + L(P) e^{\frac{\alpha-\alpha'}{\alpha \mu}}.
\]
\end{proof}
%

\begin{theorem}\label{theorem:AXLambdaAlphaIsGeodesicAndFiniteDiameter}
Let $K\subset \R^n$ be the complement of a bounded, open  set $K^c$
and $\alpha, \lambda > 0$ such that $r_\mu^{\alpha}(K) > \alpha + \lambda$ 
and assume $\left(K^{\oplus \alpha}\right)^c$ to be connected with finite geodesic diameter:
 \[
 \Gdiam \left(\left(K^{\oplus \alpha}\right)^c \right) < \infty.
 \]

Then $\ax_\lambda^\alpha(K)$ is a geodesic space with finite geodesic diameter given by \eqref{equation:DiameterAXLambdaAlpha} below. 
More precisely,
if $x,x' \in \ax_\lambda^\alpha(K)$, then there exists a path $\gamma : [0,1] \rightarrow \ax_\lambda^\alpha(K)$ of minimal length 
 such that $\gamma(0) = x$ and $\gamma(1) = x'$. 
This path satisfies, 
\begin{align}
 \length(\gamma) 
&\leq \Gdiam( \ax_\lambda^\alpha(K)) 
\nonumber \\  &
\leq 2 \frac{R_{\max}}{\tilde{\mu}^2} +  \Gdiam \left(\left(K^{\oplus \alpha}\right)^c\right) e^{\frac{R_{\max}}{\alpha \tilde{\mu}^2}}, \label{equation:DiameterAXLambdaAlpha}
\end{align}
where $R_{\max} = R_{\max}(K)$, $\tilde{\mu} = \tilde{\mu}_{\mu,\lambda}^{\alpha, \frac{\alpha}{2}}$.
The length $\length(\gamma)$ is the geodesic distance between $x$ and $x'$ in $\ax_\lambda^\alpha(K)$.
\end{theorem}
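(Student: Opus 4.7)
The plan is twofold: first, given any two points $x, x' \in \ax_\lambda^\alpha(K)$, construct an explicit rectifiable path in $\ax_\lambda^\alpha(K)$ joining them whose length respects the right-hand side of \eqref{equation:DiameterAXLambdaAlpha}; second, upgrade this to an actual length-minimiser via an Arzel\`a--Ascoli compactness argument. The building block is the pushed-path construction of Definition \ref{definition:PushedPaths}, which exploits both the flow-invariance \eqref{equation:AXLambdaAlphaStableUnderFlow} of $\ax_\lambda^\alpha(K)$ and the fact that the flow funnels the complement of the offset into $\ax_\lambda^\alpha(K)$.

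\textbf{Building a short connecting path.} Since $\ax_\lambda^\alpha(K) = \ax_\lambda(K^{\oplus\alpha}) \subset (K^{\oplus\alpha})^c$, both endpoints live inside the connected set $(K^{\oplus\alpha})^c$, whose geodesic diameter is finite by hypothesis. For any $\varepsilon > 0$ pick a rectifiable path $\gamma_\varepsilon : [0,1] \to (K^{\oplus\alpha})^c$ from $x$ to $x'$ with $L(\gamma_\varepsilon) \leq \Gdiam\bigl((K^{\oplus\alpha})^c\bigr) + \varepsilon$, and push it along the flow for time $T := R_{\max}/\tilde{\mu}^2$, producing $\PushedPath{\gamma_\varepsilon}{T}$. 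The middle segment, being the image of $\gamma_\varepsilon$ under $\Phi_K(T,\cdot)$, lands in $\ax_\lambda^\alpha(K)$ by Lemma \ref{lemma:KCFlowsToLambdaPlusDelta} applied with the appropriate $\alpha'$ matching the $\tilde{\mu}$ of the statement; the two outer segments are flow orbits of $x$ and $x'$ themselves, and remain in $\ax_\lambda^\alpha(K)$ by \eqref{equation:AXLambdaAlphaStableUnderFlow}. Hence $\PushedPath{\gamma_\varepsilon}{T}$ is a rectifiable path joining $x$ and $x'$ entirely inside $\ax_\lambda^\alpha(K)$, and Lemma \ref{lemma:PushedPathLengthExpansion} gives
\[
L\bigl(\PushedPath{\gamma_\varepsilon}{T}\bigr)
\leq 2T + L(\gamma_\varepsilon)\, e^{T/\alpha}
\leq \frac{2 R_{\max}}{\tilde{\mu}^2} + \bigl(\Gdiam((K^{\oplus\alpha})^c) + \varepsilon\bigr)\, e^{R_{\max}/(\alpha \tilde{\mu}^2)}.
\]
Letting $\varepsilon \downarrow 0$ yields the bound \eqref{equation:DiameterAXLambdaAlpha} on the geodesic distance from $x$ to $x'$, and hence on the geodesic diameter of $\ax_\lambda^\alpha(K)$.

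\textbf{Extracting a minimiser.} Compactness of $\ax_\lambda^\alpha(K)$ (Lemma \ref{lemma:LmabdaMedialAxisIsCompact} applied to $K^{\oplus\alpha}$) together with the finiteness just established means the induced geodesic distance $d_g(x,x')$ is a well-defined finite number. Choose a minimising sequence $(\gamma_k)$ of rectifiable paths from $x$ to $x'$ in $\ax_\lambda^\alpha(K)$ with $L(\gamma_k) \to d_g(x,x')$, and reparametrise each by constant speed on $[0,1]$; the resulting family is equi-Lipschitz (with uniformly bounded constants $L(\gamma_k)$) and valued in the compact set $\ax_\lambda^\alpha(K)$. Arzel\`a--Ascoli extracts a uniformly convergent subsequence whose limit $\gamma : [0,1] \to \ax_\lambda^\alpha(K)$ still joins $x$ and $x'$, and lower semicontinuity of length under uniform convergence of equi-Lipschitz curves gives $L(\gamma) \leq \liminf_k L(\gamma_k) = d_g(x,x')$, so $\gamma$ is a minimising geodesic. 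This also shows $\ax_\lambda^\alpha(K)$ is a geodesic space.

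\textbf{Anticipated difficulty.} The real content is in the first step, where the flow time $T$ and the gradient lower bound $\tilde{\mu}$ must be coordinated so that the interior of $\gamma_\varepsilon$ is driven into $\ax_\lambda^\alpha(K)$ by Lemma \ref{lemma:KCFlowsToLambdaPlusDelta} while the outer segments stay there by flow-invariance; bookkeeping the correct choice of $\alpha'$ so the $\tilde{\mu}$ here matches the one in the statement is where care is needed, but once these ingredients line up the length estimate follows at once from Lemma \ref{lemma:PushedPathLengthExpansion}, and the subsequent Arzel\`a--Ascoli step is standard.
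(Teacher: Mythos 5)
Your proposal is correct and follows essentially the same route as the paper: push a near-optimal path in $\left(K^{\oplus\alpha}\right)^c$ along the flow for time $R_{\max}/\tilde{\mu}^2$, use \eqref{equation:AXLambdaAlphaStableUnderFlow} for the two orbit segments and Lemma \ref{lemma:KCFlowsToLambdaPlusDelta} for the middle segment, and bound the length with Lemma \ref{lemma:PushedPathLengthExpansion}. The only cosmetic differences are your $\varepsilon$-approximation of the diameter and your Arzel\`a--Ascoli argument for the existence of a minimiser, where the paper simply cites Menger's classical result for compact subsets of Euclidean space.
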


The proof follows by pushing paths into the medial axis, using Lemma \ref{lemma:PushedPathLengthExpansion}. 

\begin{proof} 
Consider $x,x' \in \ax_\lambda^\alpha(K) \subset \left(K^{\oplus \alpha}\right)^c$ and $\Gamma$ a path in $\left(K^{\oplus \alpha}\right)^c$
from $x$ to $x'$ with length upper bounded by \newline $\Gdiam \left(\left(K^{\oplus \alpha}\right)^c\right)$. 

As in Lemma \ref{lemma:BoundOnLengthToLambdaPlusDelta} we set $\tilde{\mu} = \tilde{\mu}_{\mu,\lambda}^{\alpha, \alpha}$.
Using Definition \ref{definition:PushedPaths} we consider now the path  $\PushedPath{\Gamma}{ \frac{R_{\max}}{\tilde{\mu}^2} }$
with  end points $x$ and $x'$.
Because $\Gamma$ is included in $\left(K^{\oplus \alpha}\right)^c$, Lemma \ref{lemma:PushedPathLengthExpansion} gives: 
\begin{equation}\label{equation:BoundOnDiameter}
 L\left(\PushedPath{\Gamma}{ \frac{R_{\max}}{\tilde{\mu}^2} }\right)  \leq  2 \frac{R_{\max}}{\tilde{\mu}^2} +  L(\Gamma) e^{\frac{R_{\max}}{\alpha \tilde{\mu}^2}}
\end{equation}

Due to \eqref{equation:AXLambdaAlphaStableUnderFlow}, $\PushedPath{\Gamma}{ \frac{R_{\max}}{\tilde{\mu}^2} }\left( [0, 1/3]\right)$
 and $\PushedPath{\Gamma}{ \frac{R_{\max}}{\tilde{\mu}^2} }\left( [2/3, 1]\right)$  are inside $\ax_\lambda^\alpha(K)$.
As a consequence of Lemma \ref{lemma:KCFlowsToLambdaPlusDelta}, $\PushedPath{\Gamma}{ \frac{R_{\max}}{\tilde{\mu}^2} }\left( [1/3, 2/3]\right)$
also lies in $\ax_\lambda^\alpha(K)$.

This upper bound \eqref{equation:BoundOnDiameter} on the length is independent of $x,x' \in \ax_\lambda^\alpha(K)$ and it therefore gives an upper bound on the 
geodesic diameter of $\ax_\lambda^\alpha(K)$.

It is known that  if a finite length path exists between two points in a compact subset of Euclidean space, then there exists a path of minimal length, see the second paragraph of Part III, Section 1: {\it Die Existenz geodätischer Bogen in metrischen R{\"a}umen},  in \cite{menger1930untersuchungen}.
  \end{proof}

As a consequence of Lemma \ref{lemma:OpenSetWithGeodesicFiniteDiameter} and Theorem \ref{theorem:AXLambdaAlphaIsGeodesicAndFiniteDiameter}
we get:
\begin{corollary}\label{corollary:AXLambdaAlphaIsGeodesicAndFiniteDiameter}
Let $K\subset \R^n$ be the complement of a bounded, open  set $K^c$,
 $\alpha> \alpha'>0$ and $\lambda > 0$ such that $r_\mu^{\alpha'}(K) > \alpha + \lambda$ and  assume $\left(K^{\oplus \alpha}\right)^c$ to be connected. 

Then $\ax_\lambda^\alpha(K)$ is a geodesic space with finite geodesic diameter.
\end{corollary}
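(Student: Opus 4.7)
The plan is to observe that this corollary is essentially a direct composition of Lemma \ref{lemma:OpenSetWithGeodesicFiniteDiameter} and Theorem \ref{theorem:AXLambdaAlphaIsGeodesicAndFiniteDiameter}: the former supplies the missing hypothesis that $(K^{\oplus \alpha})^c$ has finite geodesic diameter, while the latter then transports the finite geodesic diameter from $(K^{\oplus \alpha})^c$ down to $\ax_\lambda^\alpha(K)$ by pushing paths along the flow.

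More concretely, I would first verify the hypothesis of Lemma \ref{lemma:OpenSetWithGeodesicFiniteDiameter}. That lemma requires $r_\mu^{\alpha'}(K) > \alpha$, which is immediate from our assumption $r_\mu^{\alpha'}(K) > \alpha + \lambda > \alpha$ (using $\lambda > 0$), together with $\alpha > \alpha' > 0$ and connectedness of $(K^{\oplus \alpha})^c$. Applying the lemma yields the explicit bound
\[
\Gdiam\bigl((K^{\oplus \alpha})^c\bigr) \leq 2\,\frac{\alpha-\alpha'}{\mu} + 2\Bigl(\bigl(\tfrac{4 R_{\operatorname{bound}}}{\alpha - \alpha'}\bigr)^n + 1\Bigr)(\alpha - \alpha')\, e^{\frac{\alpha - \alpha'}{\alpha \mu}},
\]
which is in particular finite.

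Next I would feed this finiteness into Theorem \ref{theorem:AXLambdaAlphaIsGeodesicAndFiniteDiameter}. That theorem's hypotheses require $r_\mu^\alpha(K) > \alpha + \lambda$ and connectedness of $(K^{\oplus \alpha})^c$. Since $r_\mu^\alpha(K) = \alpha + r_\mu(K^{\oplus \alpha})$ is monotone non-decreasing in $\alpha$ (the infimum defining it is taken over a smaller set as $\alpha$ grows), the assumed inequality $r_\mu^{\alpha'}(K) > \alpha + \lambda$ together with $\alpha' < \alpha$ gives $r_\mu^\alpha(K) \geq r_\mu^{\alpha'}(K) > \alpha + \lambda$, as needed. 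The theorem then produces, for every pair $x, x' \in \ax_\lambda^\alpha(K)$, a minimising path in $\ax_\lambda^\alpha(K)$ joining them, whose length is bounded by \eqref{equation:DiameterAXLambdaAlpha} with the diameter of $(K^{\oplus \alpha})^c$ finite by the previous step. This simultaneously shows that $\ax_\lambda^\alpha(K)$ is a geodesic space and that its geodesic diameter is finite, completing the proof.

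There is no real obstacle: both ingredients are already established, and the only small verification is the monotonicity $r_\mu^{\alpha'}(K) \leq r_\mu^\alpha(K)$ in $\alpha' \leq \alpha$, which is immediate from the definition of the $(\alpha,\mu)$-reach.
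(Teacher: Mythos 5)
Your proposal is correct and follows exactly the same route as the paper, which states the corollary as an immediate consequence of Lemma~\ref{lemma:OpenSetWithGeodesicFiniteDiameter} and Theorem~\ref{theorem:AXLambdaAlphaIsGeodesicAndFiniteDiameter}. You additionally spell out the (correct) monotonicity check $r_\mu^{\alpha'}(K) \leq r_\mu^{\alpha}(K)$ for $\alpha' \leq \alpha$, which the paper leaves implicit.
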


\subsection{The \texorpdfstring{$(\lambda,\alpha)$}{(lambda, alpha)}-medial axis is Gromov-Hausdorff-stable under \texorpdfstring{$\lambda$}{lambda}  perturbation}\label{section:GHStableUnderLamdaPerturbation}

Thanks to Theorem \ref{theorem:AXLambdaAlphaIsGeodesicAndFiniteDiameter}, $\ax_\lambda^\alpha(K)$ is a
geodesic space with finite length inside connected components. Equipped with this geodesic distance  $\ax_\lambda^\alpha(K)$
is a metric space which is stable under $\lambda$ perturbation in the following sense:

\begin{lemma}\label{lemma:AxLambdaAlphaGHStableWRTLambda}

Let  $K\subset \R^n$ be the complement of a bounded,  open  set $K^c$ and 
$\alpha,  \delta, \lambda > 0$, $\mu \in (0,1]$
such that, for some $\alpha' < \alpha$ one has $r_\mu^{\alpha'} (K) > \alpha + \lambda + \delta$ and  $\left(K^{\oplus \alpha}\right)^c$ is connected.
%

{Then, the map
$
\lambda \mapsto \ax_{\lambda}^\alpha (K)
$
is locally Lipschitz 
 for the Gromov-Hausdorff  distance with respect to the intrinsic metric.

 More precisely, }one has
\[
d_{\textrm{GH}} (\ax_{\lambda+\delta}^\alpha (K), \ax_{\lambda}^\alpha (K)) \leq 2 T + D \left( e^\frac{ T }{\alpha}  - 1 \right) = O\left( \delta\right),
\]
 with
\[
T = \frac{R_{\max}^2}{\alpha \lambda \tilde{\mu}^2} \delta.
\]
 and $D<\infty$ is the geodesic diameter of $\ax_{\lambda}^\alpha (K)$.

\end{lemma}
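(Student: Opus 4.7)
The plan is to construct an explicit $\varepsilon$-relation between the metric spaces $\ax_\lambda^\alpha(K)$ and $\ax_{\lambda+\delta}^\alpha(K)$ using the flow $\Phi_K$. Setting $T = \frac{R_{\max}^2}{\alpha\lambda\tilde{\mu}^2}\delta$, Lemma \ref{lemma:BoundOnLengthToLambdaPlusDelta} guarantees that $\Phi_K(T,x)\in\ax_{\lambda+\delta}^\alpha(K)$ whenever $x\in\ax_\lambda^\alpha(K)$. The natural candidate relation is $\mathcal{R}_0 = \{(x,\Phi_K(T,x)) : x\in\ax_\lambda^\alpha(K)\}$. The first projection is obviously surjective; to make the second projection surjective onto $\ax_{\lambda+\delta}^\alpha(K)$ I augment $\mathcal{R}_0$ with the identity pairs $(y,y)$ for $y\in\ax_{\lambda+\delta}^\alpha(K)\subset\ax_\lambda^\alpha(K)$.

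The core of the proof is to bound $|d_{\ax_\lambda^\alpha(K)}(x_1,x_2)-d_{\ax_{\lambda+\delta}^\alpha(K)}(y_1,y_2)|$ for two related pairs $(x_1,y_1),(x_2,y_2)$ with $y_i=\Phi_K(T,x_i)$. For the first inequality $d_{\ax_\lambda^\alpha(K)}(x_1,x_2) \leq d_{\ax_{\lambda+\delta}^\alpha(K)}(y_1,y_2) + 2T$, I concatenate three segments that all lie in $\ax_\lambda^\alpha(K)$: the forward flow orbit $t\mapsto \Phi_K(t,x_1)$ from $x_1$ to $y_1$, a length-minimising path from $y_1$ to $y_2$ inside $\ax_{\lambda+\delta}^\alpha(K)\subset\ax_\lambda^\alpha(K)$, and the reversed flow orbit from $y_2$ back to $x_2$. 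The flow segments have length at most $T$ each because $\|\nabla_K\|\leq 1$, and they remain in $\ax_\lambda^\alpha(K)$ thanks to \eqref{equation:AXLambdaAlphaStableUnderFlow}.

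For the reverse inequality $d_{\ax_{\lambda+\delta}^\alpha(K)}(y_1,y_2) - d_{\ax_\lambda^\alpha(K)}(x_1,x_2) \leq D(e^{T/\alpha}-1)$, I take a length-minimising path $\Gamma\colon[0,1]\to\ax_\lambda^\alpha(K)$ from $x_1$ to $x_2$ (which exists by Theorem \ref{theorem:AXLambdaAlphaIsGeodesicAndFiniteDiameter}) and consider the flowed path $s\mapsto \Phi_K(T,\Gamma(s))$ joining $y_1$ to $y_2$. By Lemma \ref{lemma:BoundOnLengthToLambdaPlusDelta} this path stays inside $\ax_{\lambda+\delta}^\alpha(K)$. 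Applying the expansion bound \eqref{equation:BoundOnFlowExpansion} to finer and finer subdivisions of $\Gamma$, and using that $R_K(\Gamma(s))>\alpha$ because $\Gamma\subset(K^{\oplus\alpha})^c$, yields $L(\Phi_K(T,\Gamma(\cdot)))\leq L(\Gamma)\,e^{T/\alpha}\leq D\,e^{T/\alpha}$. Subtracting $d_{\ax_\lambda^\alpha(K)}(x_1,x_2)\leq D$ gives the claimed error term. The mixed and identity-pair cases are handled identically, since for $y\in\ax_{\lambda+\delta}^\alpha(K)$ the forward orbit $\Phi_K(\cdot,y)$ also remains in $\ax_{\lambda+\delta}^\alpha(K)$ by \eqref{equation:AXLambdaAlphaStableUnderFlow}.

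Combining the two inequalities shows that $\mathcal{R}_0$ (augmented with identity pairs) is a $(2T+D(e^{T/\alpha}-1))$-relation, which is the asserted bound; the asymptotics $O(\delta)$ follow from $e^{T/\alpha}-1\sim T/\alpha$ and $T$ linear in $\delta$. The main subtlety is the surjectivity requirement of the $\varepsilon$-relation: without the identity augmentation, $\Phi_K(T,\ax_\lambda^\alpha(K))$ need not exhaust $\ax_{\lambda+\delta}^\alpha(K)$. The augmentation costs nothing because $\ax_{\lambda+\delta}^\alpha(K)$ is forward-invariant under $\Phi_K$, so the same two arguments—a flow concatenation in one direction and a flow-expansion of a geodesic in the other—yield the identical bound for every case.
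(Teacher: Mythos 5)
Your proposal is correct and follows essentially the same route as the paper: the paper's relation is $\mathcal{R}=\{(x_1,x_2): x_2=\Phi_K(t,x_1),\ t\in[0,T]\}$ (which contains your $T$-flow pairs and, via $t=0$, your identity pairs), and the distortion is bounded exactly as you do — one direction by concatenating flow orbits of length at most $T$ with a geodesic in $\ax_{\lambda+\delta}^\alpha(K)\subset\ax_\lambda^\alpha(K)$, the other by pushing a geodesic of $\ax_\lambda^\alpha(K)$ with the flow and invoking the expansion bound \eqref{equation:BoundOnFlowExpansion} (packaged in the paper as Lemma \ref{lemma:PushedPathLengthExpansion} applied to the pushed path of Definition \ref{definition:PushedPaths}), together with Lemma \ref{lemma:BoundOnLengthToLambdaPlusDelta} and the forward invariance \eqref{equation:AXLambdaAlphaStableUnderFlow}. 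The only cosmetic difference is that the paper's fixed-endpoint pushed-path construction treats all cases (including your ``mixed'' identity pairs) uniformly, whereas you handle them case by case with the same ingredients and the same resulting bound $2T+D\left(e^{T/\alpha}-1\right)$.
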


{The core of the proof consists of pushing the geodesics in $\ax_{\lambda}^\alpha$ into  $\ax_{\lambda +\delta}^\alpha$, which is again achieved by the flow. The other properties we need to verify to establish a bound on the Gromov-Hausdorff distance are relatively straightforward.} 
\begin{proof} 
The value of the geodesic diameter $D$ is given by Theorem \ref{theorem:AXLambdaAlphaIsGeodesicAndFiniteDiameter}.
In order to upper bound the Gromov-Hausdorff distance between $\ax_{\lambda+\delta}^\alpha (K)$ and $\ax_{\lambda}^\alpha (K)$
we use Definition \ref{definition:GromovHausdorffDistance}.

Consider the  relation $\mathcal{R} \subset \ax_{\lambda}^\alpha (K) \times \ax_{\lambda+\delta}^\alpha (K)$ defined by:
\[
\mathcal{R}  = \left\{ (x_1, x_2) \in \ax_{\lambda}^\alpha (K) \times \ax_{\lambda+\delta}^\alpha (K),
\:  \exists t \in [0, T],\, x_2 = \Phi_K(t,x_1)  \right\},
\]
where $T  = \frac{R_{\max}^2}{\alpha \lambda \tilde{\mu}^2} \delta$.

We have to check the two conditions of Definition  \ref{definition:GromovHausdorffDistance}.
Recall that, by Lemma \ref{lemma:BoundOnLengthToLambdaPlusDelta}, one has
 \[
 \Phi_K \left( T, \ax_\lambda^\alpha (K) \right) \subset  \ax_{\lambda+\delta}^\alpha (K) \subset \ax_\lambda^\alpha (K).
 \]

{
\paragraph{Condition (1): $\mathcal{R}$ is surjective.} This condition follows because, if $x_1 \in \ax_{\lambda}^\alpha (K)$ then $\Phi_K(T, x_1)  \in \ax_{\lambda+\delta}^\alpha (K)$, thanks to
Lemma \ref{lemma:BoundOnLengthToLambdaPlusDelta}. 
This is in turn equivalent to  $\left(x_1 ,  \Phi_K(T, x_1) \right) \in \mathcal{R}$.
Conversely, if $x_2 \in \ax_{\lambda+\delta}^\alpha (K)$, then $x_2 \in \ax_{\lambda}^\alpha (K)$ and  $(x_2, x_2) \in \mathcal{R}$
since $\Phi_K(0, x_2) = x_2$.
}

\paragraph{Condition (2): The bound on the distance distortion.}  
Consider $(x_1, x_2), (x_1', x_2')   \in \mathcal{R}$, with $x_2 = \Phi_K(t, x_1)$ and
 $x_2' = \Phi_K(t', x_1')$ with $t, t' \in [0, T]$.
 Denote  by $d_1$ and $d_2$
the respective intrinsic distances in $\ax_{\lambda}^\alpha (K)$ and $\ax_{\lambda+\delta}^\alpha (K)$. 
{By Theorem \ref{theorem:AxLambdaAlphaHasRightHomotopyType},
since  $\left(K^{\oplus \alpha}\right)^c$ is connected (which is equivalent to be path-wise connected for an open set), $\ax_{\lambda}^\alpha (K)$ is path-wise connected 
and $x_1$ and $x_1'$ are in the same connected component of $\ax_{\lambda}^\alpha (K)$.
Thanks to Theorem \ref{theorem:AXLambdaAlphaIsGeodesicAndFiniteDiameter}, 
 $d_1(x_1,x_1') <\infty$ } and there is a path $\gamma_1: [0,1]\rightarrow \ax_{\lambda}^\alpha (K)$ such that $d_1(x_1,x_1') = L(\gamma_1)$.
Thanks to Lemma \ref{lemma:PushedPathLengthExpansion}, the path $\PushedPath{\gamma_1}{T}$
has length upper bounded by $2 \, T + L(\gamma_1) e^{\frac{T}{\alpha}}$. 

{ 
By }Definition \ref{definition:PushedPaths}, one has $\PushedPath{\gamma_1}{T}\left(\frac{t}{3T}\right) = x_2$
and  $\PushedPath{\gamma_1}{T}\left(1 - \frac{t'}{3T} \right)= x_2'$. 
{Moreover, \eqref{equation:FAlphaIsINotDecreasing} 
and \eqref{equation:FlowSendsInLargerLambda} yields}  
\[
\PushedPath{\gamma_1}{T} \left( \left[\frac{t}{3T}, 1 - \frac{t'}{3T}\right] \right) \subset \ax_{\lambda}^\alpha (K),
\]
so that $\PushedPath{\gamma_1}{T} \left( \left[\frac{t}{3T}, 1 - \frac{t'}{3T}\right] \right)$ 
is a path from $x_2$ to $x_2'$ inside $\ax_{\lambda+\delta}^\alpha (K)$
and its length is therefore lower bounded by $d_2(x_2,x_2')$. It follows that
\begin{align*}
d_2(x_2,x_2') &\leq L\left( \PushedPath{\gamma_1}{T}\left( \left[\frac{t}{3T}, 1 - \frac{t'}{3T}\right] \right)  \right)\\
& \leq L\left( \PushedPath{\gamma_1}{T} \right) \leq 2 \, T + d_1(x_1, x_1')  e^{\frac{T}{\alpha}},
\end{align*}
so that
\begin{equation}\label{equation:d2BoundedFromd1}
d_2(x_2,x_2')  - d_1(x_1, x_1') \leq 2 \, T +  d_1(x_1, x_1') \left( e^\frac{ T }{\alpha}  - 1 \right)
\leq 2 \, T +  D \left( e^\frac{ T }{\alpha}  - 1 \right).
\end{equation}
{Since $\ax_{\lambda+\delta}^\alpha (K)$ is connected, there is a path 
 $\gamma_2: [0,1]\rightarrow \ax_{\lambda+\delta}^\alpha (K)$ such that $d_2(x_2,x_2') = L(\gamma_2)$.}

Consider now the path $\Gamma: [0,1] \rightarrow \ax_{\lambda}^\alpha (K)$ defined as:
\begin{equation}
\Gamma (u) = 
\begin{cases}
	\Phi_K(3 u t, x_1) &\textrm{if} \quad u \in [0, 1/3] \\
	\gamma_2(3u-1 )) &\textrm{if}\quad u \in [1/3, 2/3] \\
	\Phi_K((3(1- u) t', x_1' ) &\textrm{if} \quad u \in [2/3, 1] .
     \end{cases}
\nonumber
\end{equation}
{ Here we used that $\ax_{\lambda+ \delta}^{\alpha} (K) \subset \ax_{\lambda}^\alpha$ for $u \in [1/3, 2/3]$.
}  
By \eqref{equation:FAlphaIsINotDecreasing} 
$\Gamma$ is  a path from $x_1$ to $x_1'$ inside $\ax_{\lambda}^\alpha (K)$ so that
 \[
d_1(x_1,x_1') \leq  L\left( \Gamma \right) \leq 2 \, T + L\left( \gamma_2 \right) = 2 \, T + d_2(x_2,x_2')
 \]
 and with \eqref{equation:d2BoundedFromd1} we get as { required} 
 \[
\left| d_2(x_2,x_2')  - d_1(x_1, x_1') \right| 
\leq 2 \, T +  D \left( e^\frac{ T }{\alpha}  - 1 \right) .\]
Note that by Corollary \ref{corollary:AXLambdaAlphaIsGeodesicAndFiniteDiameter} (using the assumption $r_\mu^{\alpha'} (K) > \alpha + \lambda + \delta$) the geodesic diameter $D$
of $\ax_{\lambda}^\alpha (K)$ is finite.
\end{proof}

\subsection{The \texorpdfstring{$(\lambda,\alpha)$}{(lambda, alpha)}-medial axis is Gromov-Hausdorff-stable 
under  \texorpdfstring{$\alpha$}{ alpha} perturbation}\label{section:GHStableUnderAlphaPerturbation}
{ Using almost identical arguments as in the previous section we also get the Gromov-Hausdorff stability with respect the offset $\alpha$. 
}

\begin{lemma}\label{lemma:AxLambdaAlphaGHStableWRTAlpha}
Let $K\subset \R^n$ be the complement of a bounded open  set $K^c$ and 
$\alpha,  \delta,  \lambda > 0$, $\mu \in (0,1]$
such that,  for some $\alpha' < \alpha$ one has $r_\mu^{\alpha'} (K) > \alpha + \lambda + \delta$  and $\left(K^{\oplus \alpha}\right)^c$ is connected.
Then, the map
\[
\alpha \mapsto \ax_{\lambda}^\alpha (K)
\]
is { locally Lipschitz }
 for the Gromov-Hausdorff  distance with respect to the intrinsic metric.

 More precisely, one has
\[
d_{\textrm{GH}} (\ax_{\lambda}^{\alpha+\delta} (K), \ax_{\lambda}^\alpha (K)) \leq 2 T + D \left( e^\frac{ T }{\alpha}  - 1 \right) = O\left( \delta\right),
\]
with
\[
T = \frac{R_{\max}}{\alpha  \tilde{\mu}^2} \delta,
\]
and $D< \infty$ is the geodesic diameter of $\ax_{\lambda}^\alpha (K)$.
 \end{lemma}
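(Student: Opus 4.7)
The plan is to mimic the structure of the proof of Lemma \ref{lemma:AxLambdaAlphaGHStableWRTLambda}, replacing the $\lambda$-based flow estimate of Lemma \ref{lemma:BoundOnLengthToLambdaPlusDelta} by its $\alpha$-based counterpart \eqref{equation:FlowSendsInLargerAlpha} obtained in Lemma \ref{lemma:corollary:AlphaMapstoAXLambdaAlphaIsHausdorffLipchitz}. Concretely, with $T=\frac{R_{\max}}{\alpha\tilde{\mu}^2}\delta$, relation \eqref{equation:FlowSendsInLargerAlpha} ensures that $\Phi_K(T, \ax_\lambda^\alpha(K)) \subset \ax_\lambda^{\alpha+\delta}(K)$, while \eqref{equation:AXLambdaAlphaDecreasingWithAlpha} gives the converse inclusion $\ax_\lambda^{\alpha+\delta}(K)\subset \ax_\lambda^\alpha(K)$. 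This is exactly the structural situation of the $\lambda$-case.

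Following Definition \ref{definition:GromovHausdorffDistance}, I would take as candidate relation
\[
\mathcal{R} = \left\{(x_1,x_2) \in \ax_\lambda^\alpha(K)\times \ax_\lambda^{\alpha+\delta}(K)\ \big|\ \exists t\in[0,T],\ x_2=\Phi_K(t,x_1) \right\}.
\]
Surjectivity onto the first factor follows from $\Phi_K(T,x_1)\in \ax_\lambda^{\alpha+\delta}(K)$ via \eqref{equation:FlowSendsInLargerAlpha}; surjectivity onto the second factor follows from $\ax_\lambda^{\alpha+\delta}(K)\subset \ax_\lambda^\alpha(K)$ together with $(x_2,x_2)\in\mathcal{R}$ via $t=0$. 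The geodesic distances $d_1$ on $\ax_\lambda^\alpha(K)$ and $d_2$ on $\ax_\lambda^{\alpha+\delta}(K)$ are both finite: for $d_1$ this is Corollary \ref{corollary:AXLambdaAlphaIsGeodesicAndFiniteDiameter} (using $\alpha'<\alpha$ and $r_\mu^{\alpha'}(K)>\alpha+\lambda+\delta>\alpha+\lambda$), and for $d_2$ one applies the same corollary with the offset $\alpha+\delta$ in place of $\alpha$, noting that the hypothesis $r_\mu^{\alpha'}(K)>(\alpha+\delta)+\lambda$ holds.

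The bound on distance distortion is the core of the argument, and it proceeds exactly as in Lemma \ref{lemma:AxLambdaAlphaGHStableWRTLambda}. Given $(x_1,x_2),(x_1',x_2')\in\mathcal{R}$ with $x_2=\Phi_K(t,x_1)$, $x_2'=\Phi_K(t',x_1')$ and $t,t'\in[0,T]$, take a minimizing geodesic $\gamma_1$ from $x_1$ to $x_1'$ inside $\ax_\lambda^\alpha(K)$. The pushed path $\PushedPath{\gamma_1}{T}$ has length at most $2T + L(\gamma_1) e^{T/\alpha}$ by Lemma \ref{lemma:PushedPathLengthExpansion}. Its middle portion $[t/(3T), 1-t'/(3T)]$ runs from $x_2$ to $x_2'$ and lies entirely in $\ax_\lambda^{\alpha+\delta}(K)$ thanks to \eqref{equation:FlowSendsInLargerAlpha} combined with \eqref{equation:FAlphaIsINotDecreasing} (monotonicity of $t\mapsto \F_K^{\alpha+\delta}(\Phi_K(t,\cdot))$ along the flow). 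Hence $d_2(x_2,x_2')\le 2T + d_1(x_1,x_1') e^{T/\alpha}$, yielding $d_2(x_2,x_2')-d_1(x_1,x_1')\le 2T + D(e^{T/\alpha}-1)$. For the reverse inequality, take a minimizing geodesic $\gamma_2$ in $\ax_\lambda^{\alpha+\delta}(K)\subset \ax_\lambda^\alpha(K)$ and concatenate two flow segments (from $x_1$ to $x_2$ and from $x_2'$ to $x_1'$) with $\gamma_2$; \eqref{equation:AXLambdaAlphaStableUnderFlow} ensures this concatenation stays in $\ax_\lambda^\alpha(K)$, giving $d_1(x_1,x_1')\le 2T + d_2(x_2,x_2')$.

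The only subtlety (and the step I expect to be the most delicate) is verifying that the middle portion of $\PushedPath{\gamma_1}{T}$ actually lies in $\ax_\lambda^{\alpha+\delta}(K)$ and not merely in $\ax_\lambda^\alpha(K)$: one must combine \eqref{equation:FlowSendsInLargerAlpha}, applied pointwise to each $\gamma_1(s)\in\ax_\lambda^\alpha(K)$, with the monotonicity \eqref{equation:FAlphaIsINotDecreasing} of $\F_K^{\alpha+\delta}$ along the flow, after observing that once the flow has pushed a point into $\ax_\lambda^{\alpha+\delta}(K)$ at time $T$, further flow of the endpoint keeps it there. Everything else is a routine translation of the $\lambda$-argument; the asymptotic estimate $2T + D(e^{T/\alpha}-1) = O(\delta)$ then follows from $T=O(\delta)$ and a Taylor expansion.
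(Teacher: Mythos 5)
Your proposal is correct and matches the paper's approach exactly: the paper's own proof of this lemma is a one-line reference to the $\lambda$-case (Lemma \ref{lemma:AxLambdaAlphaGHStableWRTLambda}), replacing \eqref{equation:FlowSendsInLargerLambda} with \eqref{equation:FlowSendsInLargerAlpha}, which is precisely what you have spelled out. Your resolution of the "subtle" step (that the middle segment of the pushed path lands in $\ax_\lambda^{\alpha+\delta}(K)$, using \eqref{equation:FlowSendsInLargerAlpha} pointwise together with flow-invariance of the target set) is the correct bookkeeping, and your verification that the hypotheses of Corollary \ref{corollary:AXLambdaAlphaIsGeodesicAndFiniteDiameter} hold for both $\alpha$ and $\alpha+\delta$ (with the same $\alpha'$) is also right.
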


\begin{proof}
This is similar to Section \ref{section:GHStableUnderLamdaPerturbation}
  using \eqref{equation:FlowSendsInLargerAlpha} instead of \eqref{equation:FlowSendsInLargerLambda}.
\end{proof}

\section{Hausdorff stability of the \texorpdfstring{$(\lambda,\alpha)$}{(lambda, alpha)}-medial 
axis under Hausdorff perturbation of \texorpdfstring{$K$}{K}}\label{section:HausdorffStability}

In this section, we prove one of the main stability theorems of this paper, namely stability in the Hausdorff sense of the $(\lambda,\alpha)$-medial axis under Hausdorff perturbations of $K$. This requires some further results on the flow that are proven in the first subsections, while the main result is proven in the final subsection.  
As in the previous section we use the flow to establish our main result, namely the Hausdorff stability. Intuitively this may seem straightforward, because near the medial axis the flow we follow points towards the medial axis. However, establishing that the flow is fast enough requires a number of technical estimates. Firstly we need that the distance to the closest points ($R_K$) increases sufficiently fast as we follow the flow. This is proven in Section \ref{sec:lowBndIncreaseRk}. Based on this result we can prove that
\begin{itemize} 
\item Points close to $\ax_{\lambda}^\alpha(K)$ flow inside it after a short amount of time (Section \ref{subsec:NearPointFlowIntoAxQuickly}).
\item If you perturb $K$ into $K'$ (near in Hausdorff distance) then $\ax^{\alpha}_{\lambda}(K')$ flows into $\ax_{\lambda-\delta}^{\alpha} (K)$ after a short amount of time (Section \ref{subsec:QuickFlow2}).   
\end{itemize}
The bound on the Hausdorff distance is finally established based on this and Lemma \ref{lemma:BoundOnLengthToLambdaPlusDelta}. 

\subsection{A lower bound on \texorpdfstring{$R_K$}{Rk} along the flow trajectories} \label{sec:lowBndIncreaseRk}

The technical result that underpins the lemma in this section is the Volterra integral inequality, as discussed in Section \ref{sec:Volterra}.  
\begin{lemma}\label{lemma:RadiusTightLowerBoundWhenTrajectoryNotInAXLambdaAlpha}
Let $K\subset \R^n$ be the complement of a bounded open  set $K^c$, 
$\alpha \geq 0$ and $\lambda > 0$.
Consider $y \in    \left( K^{\oplus \alpha} \right)^c \setminus \ax_\lambda^\alpha(K)$ and 
denote by $s\mapsto y(s)$ the trajectory of $t \mapsto \Phi_K \left( t, y \right)$, parametrized by arc length. We stress that $y(0)=y$. 
Assume that $R_K(y) -\alpha > \lambda$ and that for some $S>0$ one has
\[
y(S) \notin \ax_{\lambda}^\alpha (K),
\]  
then 
\begin{equation}
 \left( R_K(y(S)) - \alpha \right)^2  \geq   (S_0+S)^2 +  \lambda^2,
\nonumber
\end{equation}
 where $S_0 > 0$ is defined as
\begin{equation}
S_0^2 = \left(R_K(y)-\alpha \right)^2 - \lambda^2.
\nonumber
\end{equation}
\end{lemma}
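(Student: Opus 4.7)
\medskip

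\noindent\textbf{Proof plan.} The plan is to set $u(s) = R_K(y(s)) - \alpha = R_{K^{\oplus \alpha}}(y(s))$ and prove the sharper pointwise bound $u(s)^2 \geq (S_0 + s)^2 + \lambda^2$ for every $s \in [0,S]$; the statement then follows by evaluating at $s = S$. The heart of the argument is a differential inequality for $u$ which is tight precisely against the ODE whose solution has square $(S_0+s)^2+\lambda^2$.

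\smallskip

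\noindent\textbf{Step 1: Setting up the differential inequality.} Because the flow $\Phi_K$ is parametrised by arc length along $s\mapsto y(s)$ (so $\ud s/\ud t = \|\nabla_K\|$), Lemma 4.16 of \cite{LIEUTIERhomotopytype} (cited as \eqref{equation:DerivativeR}) together with Theorem \ref{theorem:FundamentalThmCalculusACFunctions} give
\[
u(s) = u(0) + \int_0^s \|\nabla_K(y(\sigma))\| \,\ud\sigma .
\]
Since $y(S) \notin \ax_\lambda^\alpha(K)$ means $\F_K^\alpha(y(S)) < \lambda$, the monotonicity \eqref{equation:FAlphaIsINotDecreasing} forces $\F_K^\alpha(y(\sigma)) < \lambda$ for all $\sigma \in [0,S]$. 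Combining \eqref{equation:ExpressionF_Alpha}, $\nabla_{K^{\oplus\alpha}} = \nabla_K$ on $(K^{\oplus\alpha})^c$, and \eqref{equation:NormGradientRelatedToFandR} applied to $K^{\oplus\alpha}$, I obtain
\[
\|\nabla_K(y(\sigma))\|^2 \;=\; 1 - \frac{\F_K^\alpha(y(\sigma))^2}{u(\sigma)^2} \;>\; 1 - \frac{\lambda^2}{u(\sigma)^2}
\qquad (\sigma \in [0,S]).
\]

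\smallskip

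\noindent\textbf{Step 2: Comparison with a model ODE.} I introduce the kernel $k(v) = \sqrt{1 - \lambda^2/v^2}$, which is monotone increasing for $v > \lambda$, and let $v(s)$ solve $v(s) = u(0) + \int_0^s k(v(\sigma))\,\ud\sigma$ with $v(0) = u(0) > \lambda$ (the strict inequality being the standing hypothesis $R_K(y)-\alpha > \lambda$). Separation of variables integrates explicitly: $v\,\ud v/\sqrt{v^2-\lambda^2} = \ud s$ yields $\sqrt{v(s)^2 - \lambda^2} = S_0 + s$, so $v(s)^2 = (S_0+s)^2 + \lambda^2$. It then suffices to show $u(s) \geq v(s)$ on $[0,S]$, which is exactly the content of the Volterra comparison result of Theorem \ref{VolterraInequality}.

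\smallskip

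\noindent\textbf{Step 3: Verifying the hypotheses of Theorem \ref{VolterraInequality}.} All integrability and monotonicity requirements are immediate; the only delicate point, and the main obstacle of the proof, is to check the strict initial gap: there exists $\delta' > 0$ such that $v(s') < u(s')$ for every $s' \in (0,\delta')$. At $s=0$, both integrals vanish and $u'(0^+) = \|\nabla_K(y)\|$ while $v'(0^+) = k(u(0))$; by Step 1 at $\sigma = 0$, $\|\nabla_K(y)\| > k(u(0))$ strictly. Using the upper semi-continuity of $\F_K$ (Lemma \ref{lemma:FIsUpperSemiContinuous}) and continuity of $R_K$, $\|\nabla_K\|$ is lower semi-continuous along the trajectory, so $\|\nabla_K(y(\sigma))\|$ stays uniformly above $k(u(0)) + \eta$ on some initial interval $[0, \delta']$ for a small $\eta > 0$; on the same interval $v(\sigma)$ stays close to $u(0)$ and therefore $k(v(\sigma))$ stays close to $k(u(0))$. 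Comparing the two integrals produces the required strict gap. Theorem \ref{VolterraInequality} then delivers $v(s) \leq u(s)$ on $[0,S]$.

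\smallskip

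\noindent\textbf{Conclusion.} From Steps 2--3, $u(S)^2 \geq v(S)^2 = (S_0 + S)^2 + \lambda^2$, which is precisely the claimed inequality. The only part requiring genuine care is the strict-initial-gap hypothesis of the Volterra inequality; everything else is either direct calculus or a direct citation of a previously established fact.
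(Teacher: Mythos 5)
Your proof is correct and takes essentially the same route as the paper's: both reduce the claim to the Volterra integral inequality with kernel $k(v)=\sqrt{1-\lambda^2/v^2}$, exhibit the model solution $v(s)^2=(S_0+s)^2+\lambda^2$, and invoke Theorem~\ref{VolterraInequality}. The only point of divergence is how the strict initial gap is verified: you argue via the lower semi-continuity of $\|\nabla_K\|$ inherited from the upper semi-continuity of $\F_K$, while the paper instead uses the right-continuity and monotonicity of $s\mapsto\F_K^\alpha(y(s))$ (Lemma~\ref{lemma:FIsINotDecreasing}) together with the $1$-Lipschitz control on $R_K$ and $R_0$ to compare kernel values explicitly; both arguments are sound and establish the same hypothesis.
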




\begin{proof} 
We define the map $R_0: [0,S]\rightarrow \R$ as
\[
R_0(s)  =  \sqrt{ (S_0+s)^2 + \lambda^2 }{  + \alpha}. 
\]
We need to prove that $R_K(y(s)) \geq R_0(s)$, which we'll do by means of Theorem \ref{VolterraInequality}. 
By definition of $S_0$, one has $R_0(0) = R_K(y)$.
For $s\in [0,S]$ we get 
\[
\frac{d}{ds} (R_0(s) - \alpha) ^2 = 2 (S_0+s).
\]
It follows that
\begin{align}
\frac{d}{ds} R_0(s)  &= \frac{ (S_0+s) }{R_0(s) - \alpha} 
\nonumber 
\\
&= \frac{\sqrt{(R_0(s) - \alpha) ^2 -   \lambda^2  }}{R_0(s) - \alpha}  \nonumber 
\\
&= \sqrt{ 1 - \left( \frac{\lambda }{ R_0(s) - \alpha } \right)^2 }
\nonumber
\end{align}
and thus we have the Volterra integral inequality, 
\begin{align}
R_0(s) = R_0(0) + \int_{0}^s k(R_0(\tau)) \ud \tau,
\nonumber
\end{align}
where the kernel $k$ is
\begin{align} 
k(x) = \sqrt{ 1 - \left( \frac{\lambda }{ x - \alpha } \right)^2 }.
\nonumber
\end{align} 
Combining \eqref{equation:DefinitionFlow}, and \eqref{equation:DerivativeR} 
gives
\begin{align}
\frac{d}{ds^+} R_K( y(s) )  &=  \left\|  \nabla_K \left( y(s) \right) \right\|   \nonumber  \\
&= \sqrt{ 1 -  \left( \frac{\F^\alpha(y(s)) }{ R_K(y(s)) - \alpha } \right)^2 },
\nonumber
\end{align} 
see also \cite[Equation (5)]{cl2005lambda}. Because $ R_K( y(s) ) $ is Lipschitz, Theorem~\ref{theorem:FundamentalThmCalculusACFunctions} yields that 
\begin{align} 
R_K( y(s) )&= R_K( y )+  \int_0^s \sqrt{ 1 -  \left( \frac{\F^\alpha(y(\tau)) }{ R_K(y(\tau)) - \alpha } \right)^2 } \ud \tau 
\nonumber 
\\
&= R_0(0)+  \int_0^s \sqrt{ 1 -  \left( \frac{\F^\alpha(y(\tau)) }{ R_K(y(\tau)) - \alpha } \right)^2 } \ud \tau.
\nonumber
\end{align} 
By assumption $y(S) \notin \ax_{\lambda}^\alpha$, so \eqref{equation:FAlphaIsINotDecreasing} implies 
that $s \leq S \Rightarrow \F^\alpha(y(s)) < \lambda$. 
Moreover, for sufficiently small $0<\delta'\leq S$ we can assume that $\F^\alpha(y(s)) \leq \lambda'<\lambda$, for $s\in [0,\delta]$, by Lemma \ref{lemma:FIsINotDecreasing}.
This means that $R_K( y(s) )$ satisfies the following integral inequality of Volterra type,
\begin{align} 
R_K( y(s) ) &\geq R_0(0)+  \int_0^s \sqrt{ 1 -  \left( \frac{ \lambda}{ R_K(y(\tau)) - \alpha } \right)^2 } \ud \tau 
\nonumber 
\\
& = R_0(0) + \int_{0}^s k(R_K(y(\tau))) \ud \tau,
\nonumber
\end{align} 
where the equality occurs only when $s=0$. We note that, because $R_K(y) -\alpha > \lambda>0$ and $R_K(y)$ is monotone, 
\begin{align} 
\frac{d}{dx} k(x) = \frac{\lambda ^2}{(x-\alpha )^3 \sqrt{1-\frac{\lambda ^2}{(x-\alpha )^2}}} \geq 0
\nonumber
\end{align} 
on the domain and thus the kernel is monotone. Because $R_K( y(s) )$ is 1-Lipschitz in $s$ and we can assume that there is some $\delta'\leq \delta$ such that for all $s\in (0,\delta')$, we have $R_K( y(s) )> R_0(s)$. In fact $\delta'$ is determined by the condition
\begin{align} 
\sqrt{ 1 -  \left( \frac{ \lambda'}{ R_0(0)-\delta' - \alpha } \right)^2 }&\geq \sqrt{ 1 -  \left( \frac{ \lambda}{ R_0(0)+\delta' - \alpha } \right)^2 }
\nonumber
\\
\frac{ \lambda'}{ R_0(0)-\delta' - \alpha } &\leq \frac{ \lambda}{ R_0(0)+\delta' - \alpha }.
\nonumber
\end{align} 
The result is now a direct consequence of the application of Theorem \ref{VolterraInequality}. 
\end{proof}


\subsection{Points close to \texorpdfstring{$\ax_\lambda^\alpha(K)$}{lambda alpha medial axis of K}  flow  inside it after a short time}
\label{subsec:NearPointFlowIntoAxQuickly} 
\begin{lemma}\label{lemma:EpsilonNeighborFlowsToLambdaMinusDelta}
Let $K\subset \R^n$ be the complement of a bounded open  set $K^c$ and 
$\alpha > 0$, $\mu >0$ and $\lambda > 0$,
such that $r_\mu^\alpha (K)  > \alpha + \lambda$.
Then, if $\delta <  \lambda$ and $\epsilon < \min \left(2 \alpha,   \frac{\left( 2 \lambda -\delta\right)  \delta}{8 R_{\max}} \right)$ one has
\[
\Phi_K \left(\frac{8 R_{\max}^2}{\left( 2 \lambda -\delta\right)  \delta \tilde{\mu}}  \:  \epsilon \, , \: \ax_\lambda^\alpha(K)^{\oplus \epsilon}  \right) \subset \ax_{\lambda-\delta}^\alpha (K) ,
\]
where $R_{\max} = R_{\max}(K) < \infty$ and $\tilde{\mu} = \tilde{\mu}_{\mu,\lambda}^{\alpha, \alpha}>0$.
Moreover, for $y \in \ax_\lambda^\alpha(K)^{\oplus \epsilon}$,
the length of the trajectory $\Phi_K \left( \left[0, \frac{8 R_{\max}^2}{ \left( 2 \lambda -\delta\right)  \delta \tilde{\mu}} \:  \epsilon \right] , y \right)$ is upper bounded by
\[
\frac{8 R_{\max}^2}{\left( 2 \lambda -\delta\right)  \delta}\:  \epsilon.
\]
\end{lemma}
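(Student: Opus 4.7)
My plan is to handle the trivial case $y \in \ax_{\lambda-\delta}^{\alpha}(K)$ first and then reduce the remaining case to a contradiction using Lemma~\ref{lemma:RadiusTightLowerBoundWhenTrajectoryNotInAXLambdaAlpha}, finally converting the resulting arc-length bound into a flow-time bound by means of the gradient lower bound of Lemma~\ref{lemma:NablaLowerBoundedByBeta}. If $y \in \ax_{\lambda-\delta}^{\alpha}(K)$, forward invariance \eqref{equation:AXLambdaAlphaStableUnderFlow} gives $\Phi_K(t,y) \in \ax_{\lambda-\delta}^{\alpha}(K)$ for all $t \geq 0$ and the length estimate is trivial. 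Otherwise, I would fix a witness $x \in \ax_\lambda^\alpha(K)$ with $\|y-x\| \leq \epsilon$; the chain $\lambda \leq \F_K^\alpha(x) \leq R_K(x) - \alpha$ combined with the triangle inequality gives $R_K(y) - \alpha \geq \lambda - \epsilon$, and the hypothesis $\epsilon < (2\lambda-\delta)\delta/(8R_{\max})$ together with $\lambda \leq R_{\max}$ forces $\epsilon < \delta/4$, so that $R_K(y) - \alpha > \lambda - \delta > 0$. In particular $y \in (K^{\oplus\alpha})^c$ and Lemma~\ref{lemma:RadiusTightLowerBoundWhenTrajectoryNotInAXLambdaAlpha} applies with $\lambda - \delta$ substituted for $\lambda$.

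Arguing by contradiction, I would suppose that the arc-length-parametrized trajectory $s \mapsto y(s)$ of $\Phi_K(\cdot,y)$ stays outside $\ax_{\lambda-\delta}^{\alpha}(K)$ on an interval $[0,S]$ with $S$ strictly larger than the target $L := 8R_{\max}^2\epsilon/((2\lambda-\delta)\delta)$. Lemma~\ref{lemma:RadiusTightLowerBoundWhenTrajectoryNotInAXLambdaAlpha} then produces the key lower bound $R_K(y(S)) - \alpha \geq \sqrt{(S_0+S)^2 + (\lambda-\delta)^2}$ with $S_0 = \sqrt{(R_K(y)-\alpha)^2 - (\lambda-\delta)^2}$. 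A complementary upper bound on $R_K(y(S))$ is obtained by pairing the flow-contraction estimate $\|\Phi_K(t,y)-\Phi_K(t,x)\|\leq \epsilon\, e^{t/\alpha}$ from \eqref{equation:BoundOnFlowExpansion} with the Volterra-type upper bound $R_K(\Phi_K(t,x))-\alpha \leq \sqrt{(X_0+s)^2+\lambda^2}$, derived by the same argument as Lemma~\ref{lemma:RadiusTightLowerBoundWhenTrajectoryNotInAXLambdaAlpha} but with the reversed inequality $\F_K^\alpha(\Phi_K(t,x))\geq\lambda$ valid along $\Phi_K(\cdot,x)\subset\ax_\lambda^\alpha(K)$ by monotonicity of $\F_K^\alpha$ and forward invariance. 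Rationalizing the resulting comparison of two square roots, exploiting the identity $\lambda^2-(\lambda-\delta)^2=(2\lambda-\delta)\delta$, and using the assumption $\epsilon < (2\lambda-\delta)\delta/(8R_{\max})$ to absorb the $O(\epsilon)$ error terms, one extracts the contradiction $S \leq L$; this algebraic consolidation is the main obstacle I anticipate.

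Finally, the arc-length bound $S \leq L$ is converted into the stated flow-time bound via Lemma~\ref{lemma:NablaLowerBoundedByBeta}: as long as $y(s) \notin \ax_{\lambda-\delta}^{\alpha}(K)$ one has $y(s) \notin \ax_\lambda^{\alpha}(K)$ as well, hence $\|\nabla_K(y(s))\| \geq \tilde{\mu}$, and the flow time required to accumulate arc length $L$ is at most $L/\tilde{\mu} = 8R_{\max}^2\epsilon/((2\lambda-\delta)\delta\tilde{\mu}) = T$. Forward invariance of $\ax_{\lambda-\delta}^{\alpha}(K)$ then yields $\Phi_K(T,y) \in \ax_{\lambda-\delta}^{\alpha}(K)$, and the arc-length bound $L$ is precisely the stated trajectory length bound.
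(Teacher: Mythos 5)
Your plan diverges from the paper's proof precisely in the choice of the upper bound on $R_K(y(S))$, and this is where a genuine gap appears. The paper compares $y(s)$ to the \emph{static} witness $a\in\ax_\lambda^\alpha(K)$ (your $x$) via Lemma~4.15 of \cite{LIEUTIERhomotopytype}, a purely geometric inequality
\[
\left(R_K(y(s))-\alpha\right)^2 \;\leq\; \left(R_K(a)-\alpha\right)^2 + 2\|y(s)-a\|\sqrt{\left(R_K(a)-\alpha\right)^2-\F_K^{\alpha}(a)^2}+\|y(s)-a\|^2,
\]
which, together with $\F_K^\alpha(a)\geq\lambda$ and $\|y(s)-a\|\leq s+\epsilon$, yields $(R_K(y(s))-\alpha)^2 \leq (S_a+s+\epsilon)^2+\lambda^2$ in the \emph{same} arc-length variable $s$ as the lower bound from Lemma~\ref{lemma:RadiusTightLowerBoundWhenTrajectoryNotInAXLambdaAlpha}. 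The quadratic $s^2$ terms then cancel, leaving a linear inequality in $s$.

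Your proposal instead tracks the \emph{trajectory} $\Phi_K(t,x)$ and combines the flow-expansion estimate \eqref{equation:BoundOnFlowExpansion} with a reversed Volterra bound. But \eqref{equation:BoundOnFlowExpansion} synchronizes the two trajectories by flow time $t$, while both Volterra-type bounds live naturally in arc length. The arc length $\sigma$ accrued along $\Phi_K(\cdot,x)$ up to flow time $T$ satisfies only $\sigma\leq T$, and the flow time $T$ at which $y(\cdot)$ has arc length $S$ satisfies only $T\leq S/\tilde{\mu}$ (since $\|\nabla_K\|\geq\tilde{\mu}$ outside $\ax_{\lambda-\delta}^\alpha(K)$, but $\|\nabla_K\|$ has no lower bound along $\Phi_K(\cdot,x)\subset\ax_\lambda^\alpha(K)$). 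So your upper bound, after squaring and absorbing the exponential corrections, has the form $(X_0+S/\tilde{\mu})^2+\lambda^2 + O(R_{\max}\epsilon)$, while the lower bound is $(S_0+S)^2+(\lambda-\delta)^2$. Expanding, the quadratic terms no longer cancel and one is left with an inequality of the shape
\[
2\bigl(S_0 - X_0/\tilde{\mu}\bigr)\,S + \bigl(1-1/\tilde{\mu}^2\bigr)\,S^2 \;\lesssim\; R_{\max}\,\epsilon,
\]
whose $S^2$-coefficient is nonpositive and whose $S$-coefficient $S_0-X_0/\tilde{\mu}$ can be negative — for instance when $R_K(x)-\alpha$ is close to $R_{\max}$, so $X_0$ is of order $R_{\max}$, while $\tilde{\mu}<1$. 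In that regime the left-hand side is nonpositive for every $S\geq 0$ and the inequality yields \emph{no} upper bound on $S$ at all; the algebraic consolidation you flagged as "the main obstacle" is not merely tedious but actually fails. To repair the argument you need a static geometric bound of the Lemma~4.15 type relating $R_K$ at two points in terms of the Euclidean distance between them (the same device is reused in the proof of Lemma~\ref{lemma:AXLambdaAlphaOfKPrimeFlowsToAXLambdaMinusDeltaAlphaOfK}), not the flow-expansion estimate. The rest of your plan — handling the trivial case by forward invariance, invoking Lemma~\ref{lemma:RadiusTightLowerBoundWhenTrajectoryNotInAXLambdaAlpha} with $\lambda-\delta$, and the final arc-length to flow-time conversion via Lemma~\ref{lemma:NablaLowerBoundedByBeta} — does match the paper and is fine.
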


\begin{proof} 
Consider $y \in    \ax_\lambda^\alpha(K)^{\oplus \epsilon}$ and $a \in \ax_\lambda^\alpha(K)$ such that $\| y-a\| \leq \epsilon$.
Denote by $s\mapsto y(s)$ the trajectory of $t \mapsto \Phi_K \left( t, y \right)$, parametrized by arc length, 
so that $y(0)=y$. Let us now assume that for some $s>0$ one has
\[
y(s) \notin \ax_{\lambda-\delta}^\alpha (K).
\]
{ Since $a \in \ax_\lambda^\alpha(K)$ \eqref{equation:InclusionLambdaAlphaMedialAxe_B} yields that $a \in \ax_\lambda(K)$. The bound \eqref{ShortLOWbndRKandFK} gives that $R_K (a) \geq F_K(a) $. This together with \eqref{Eq:defAxAlphaLambdaK} and \eqref{equation:ExpressionF_Alpha} gives
\begin{equation}\label{equation:R_KOfALowerBounded}
R_K(a) -\alpha \geq \lambda.
\end{equation}
} 
By the conditions of the theorem we have $\epsilon < \delta$ and $\| y-a\| \leq \epsilon$, \eqref{equation:R_KOfALowerBounded} therefore implies 
\begin{equation}\label{equation:RYMinusAlphaGreaterThanLambdaMinusDelta}
R_K(y) -\alpha > \lambda - \delta,
\end{equation}
by the triangle inequality. This means that the condition of  Lemma \ref{lemma:RadiusTightLowerBoundWhenTrajectoryNotInAXLambdaAlpha} are satisfied. 

We can then apply Lemma \ref{lemma:RadiusTightLowerBoundWhenTrajectoryNotInAXLambdaAlpha} with $\lambda $ replaced by $\lambda - \delta$,
which gives us
 \begin{equation}\label{equation:LowerBoundRK}
\left( R_K(y(s)) - \alpha \right)^2  \geq   (S_0+s)^2 +  (\lambda - \delta)^2,
 \end{equation}
where
\begin{equation}\label{equation:DefinitionS0}
S_0^2 = \left(R_K(y)-\alpha \right)^2 - (\lambda - \delta)^2.
\end{equation}

Since $R_K(a) -\alpha  \geq \lambda$  there is  $S_a \geq 0$
such that
 \begin{equation}\label{equation:DefinitionSA}
S_a^2 + \lambda^2 = \left( R_K(a) -\alpha \right)^2.
 \end{equation}
%

  We recall Lemma 4.15 of \cite{LIEUTIERhomotopytype}. The correspondence between the notation is the following: 

 
We pick $\mathcal{O}$ to be equal to $\left(K^{\oplus \alpha}\right)^c$,
 $x$ to be $a$ and $y$ to be $y(s)$, then $\mathscr{R}(x)$ is $R_K(a) -\alpha$, 
 $\mathscr{R}(y)$ is $R_K(y(s)) -\alpha$, and $\mathscr{F}(x)$ is $ \F_K^{\alpha} (a)$.   So that
  the inequality of Lemma 4.15 of \cite{LIEUTIERhomotopytype} reads (using our notation):
\begin{align*}
&( R_K(y(s)) -\alpha)^2 
\\ &\leq    (R_K(a) -\alpha)^2 + 2 \|y(s)-a\| \sqrt{(R_K(a) -\alpha)^2 -  \F_K^{\alpha} (a)^2}   +  \|y(s)-a\|^2 .
 \end{align*} 
Using \eqref{equation:DefinitionSA}  we have:
 \begin{align*}
 &( R_K(y(s)) -\alpha)^2 
\\
&\leq \lambda^2   +  S_a^2 + 2 \|y(s)-a\| \sqrt{(R_K(a) -\alpha)^2 -  \F_K^{\alpha} (a)^2}    +  \|y(s)-a\|^2 .
\end{align*}
and since   $\F_K^{\alpha} (a) \geq \lambda$ one has, using  \eqref{equation:DefinitionSA} again, 
\[\sqrt{(R_K(a) -\alpha)^2 -  \F_K^{\alpha} (a)^2}  \leq S_a\] 
and
 \begin{equation}
( R_K(y(s)) -\alpha)^2 \leq     ( S_a + \|y(s)-a\|)^2   +    \lambda^2.
\nonumber
 \end{equation}

Because $\|y(s)-a\| \leq \|y-a\| + \|y(s)-y\| \leq \epsilon +s$, we get
 \begin{equation}\label{equation:UpperBoundRK_2}
( R_K(y(s)) -\alpha)^2 \leq     ( S_a + s + \epsilon)^2   +    \lambda^2.
 \end{equation}
Combining this with \eqref{equation:LowerBoundRK} yields,
\[
 ( S_a + s + \epsilon)^2   +    \lambda^2  \geq  (S_0+s)^2 +  (\lambda - \delta)^2,
\]
which can be rewritten as
%
 \begin{equation}\label{equation:ConstraintOnS_1}
2 (S_0 -  S_a  - \epsilon )  s  \leq    (2 \lambda - \delta)\delta - \left( S_0^2 - S_a^2\right) +  \left(2 S_a + \epsilon \right) \,\epsilon. 
\end{equation}  
To recover an upper bound on $s$ from \eqref{equation:ConstraintOnS_1}, we need a lower bound on 
 $(S_0 -  S_a  - \epsilon )$ and an upper bound on the right hand side of the previous inequality, that is an upper bound on $ (2 \lambda - \delta)\delta - \left( S_0^2 - S_a^2\right) +  \left(2 S_a + \epsilon \right) \,\epsilon$.
 
From \eqref{equation:DefinitionS0} and \eqref{equation:DefinitionSA} we get
\begin{eqnarray*}
S_0^2 - S_a^2 &=& \left(R_K(y)-\alpha \right)^2 - (\lambda - \delta)^2 - \left( R_K(a) -\alpha \right)^2 +\lambda^2 \\
&=& \left(R_K(y)-\alpha \right)^2 - \left( R_K(a) -\alpha \right)^2 + (2 \lambda - \delta)\delta   ,
\end{eqnarray*}
so that,
\begin{equation}\label{equation:FirstBoundOnS02MinusSA2MinusTwoLambdaDeltaDelta}
S_0^2 - S_a^2 -   (2 \lambda - \delta)\delta   =  \left(R_K(y)-\alpha \right)^2 - \left( R_K(a) -\alpha \right)^2.
\end{equation}

Since $ | R_K(y) -R_K(a) | < \epsilon$  we get,
\[
| \left(R_K(y)-\alpha \right)^2 - \left( R_K(a) -\alpha \right)^2 | < | R_K(y) +   R_K(a)  - 2 \alpha |   \,\epsilon    \leq 2 R_{\max} \, \epsilon,
\]
where we used that $R_K(y), R_K(a) \leq R_{\max}$ by definition of $R_{\max}$ and $\alpha\leq R_{\max}$. 
Now \eqref{equation:FirstBoundOnS02MinusSA2MinusTwoLambdaDeltaDelta}, in turn  gives,
\begin{equation}\label{equation:SecondBoundOnS02MinusSA2MinusTwoLambdaDeltaDelta}
| (2 \lambda - \delta)\delta  - (S_0^2 - S_a^2)   |  < 2 R_{\max} \, \epsilon
\end{equation}
or, equivalently
\begin{equation}\label{equation:SecondBoundOnS02MinusSA2MinusTwoLambdaDeltaDelta_2}
(2 \lambda - \delta)\delta  - 2 R_{\max} \epsilon  <  S_0^2 - S_a^2 < ( 2 \lambda - \delta)\delta  + 2 R_{\max} \,\epsilon  .
\end{equation}
 With the assumption $\epsilon < \frac{(2 \lambda -\delta) \delta}{8 R_{\max}}$, which gives  
 $2 R_{\max} \,\epsilon   <  \frac{1}{4} (2 \lambda -\delta) \delta$,
  \eqref{equation:SecondBoundOnS02MinusSA2MinusTwoLambdaDeltaDelta_2} yields,
\begin{equation}\label{equation:UpperBoundOnS02MinusSA2}
  0 < \frac{3}{4} (2 \lambda -\delta) \delta <  S_0^2 - S_a^2 <  \frac{5}{4} (2 \lambda -\delta) \delta .
\end{equation}
This in turn implies that 
\begin{equation*}
  S_0 - S_a= \frac{S_0^2 - S_a^2 }{S_0+S_a}  >  \frac{3}{4}  \frac{\left(2 \lambda-\delta\right)  \delta}{S_0+S_a  }  .
\end{equation*} 
Combining  \eqref{equation:DefinitionSA} and \eqref{equation:DefinitionS0} one has $ S_0+S_a < 2 R_{\max}$ 
and using again $\epsilon < \frac{\left(2 \lambda-\delta\right)  \delta}{8 R_{\max}}$ we  get
\begin{equation}\label{equation:LowerBoundS0MinusSA_2}
  S_0 - S_a - \epsilon    >   \frac{3}{8}    \frac{\left(2 \lambda-\delta\right)  \delta}{R_{\max} }   - \epsilon  >   \frac{\left(2 \lambda-\delta\right)  \delta}{4 R_{\max} } .
\end{equation}

We have from  $\epsilon < 2 \alpha$ and  \eqref{equation:DefinitionSA}   that  $ 2 S_a  + \epsilon < 2(S_a + \alpha) \leq 2 R_{\max}$. Therefore
\eqref{equation:ConstraintOnS_1} together with \eqref{equation:SecondBoundOnS02MinusSA2MinusTwoLambdaDeltaDelta} gives us
\begin{align}
2 (S_0 -  S_a  - \epsilon )  s  &<   2 R_{\max} \, \epsilon + \left( 2 S_a  + \epsilon \right) \, \epsilon  
\nonumber 
\\
&\leq 4 R_{\max}  \epsilon.
\label{equation:LinearInequationOnS}
 \end{align}
 
\begin{remark}\label{remark:EpsilonLessThanAlphaNotNecessaryInEpsilonNeighborFlowsToLambdaMinusDelta}
Note that the assumption  $\epsilon < 2 \alpha$ is not really necessary as, here,  we could merely upper bound $\epsilon$ by $R_{\max}$,
so that $ 2 S_a  + \epsilon < 3R_{\max}$ and we would get $5 R_{\max}  \epsilon$ instead of $ 4 R_{\max}  \epsilon$ as upper bound in \eqref{equation:LinearInequationOnS}.
\end{remark}

Equations \eqref{equation:LowerBoundS0MinusSA_2} and \eqref{equation:LinearInequationOnS} gives us
\[
s < \frac{8 R_{\max}^2}{\left(2 \lambda-\delta\right)  \delta} \:  \epsilon
\]
We have obtained this inequality by assuming $y(s) \notin \ax_{\lambda-\delta}^\alpha$.
By contraposition we get
\[
y \left( \frac{8 R_{\max}^2}{ \left(2 \lambda-\delta\right)  \delta} \:  \epsilon \right) \in \ax_{\lambda-\delta}^\alpha.
\]
This proves the last statement of the lemma that upper bounds the length of the trajectory.
Since by Lemma \ref{lemma:NablaLowerBoundedByBeta}, as long as 
$y(s) \notin \ax_{\lambda-\delta}^\alpha$ the modulus of the right derivative of 
$t \mapsto \Phi_K \left( t, y \right)$, which is $\frac{ds}{dt} = \| \nabla_K\left(  \Phi_K \left( t, y \right)\right) \| $,  
is lower bounded by $\tilde{\mu}$ we get, still using Theorem \ref{theorem:FundamentalThmCalculusACFunctions},
 the first statement of  the lemma. 
\end{proof}


\begin{remark}\label{remark:alphaCanBeZeroInEpsilonNeighborFlowsToLambdaMinusDelta}
{\color{red} } While we do not need it in subsequent proofs, thanks to remark \ref{remark:EpsilonLessThanAlphaNotNecessaryInEpsilonNeighborFlowsToLambdaMinusDelta}, we could omit the condition
$\epsilon < 2 \alpha$ in Lemma \ref{lemma:EpsilonNeighborFlowsToLambdaMinusDelta}, so that the lemma holds as well without this condition and then also for $\alpha =0$,
that is for the $\lambda$-medial axis, at the mild price of a larger constant, replacing the flow time $\frac{8 R_{\max}^2}{\left( 2 \lambda -\delta\right)  \delta \tilde{\mu}}  \:  \epsilon$
by $\frac{10 R_{\max}^2}{\left( 2 \lambda -\delta\right)  \delta \tilde{\mu}}  \:  \epsilon$.
\end{remark}

\subsection{Flow to the medial axis after a perturbation of the set \texorpdfstring{$K$}{K}. 
} \label{subsec:QuickFlow2}

We write $d_H(C_1,C_2)$ for the Hausdorff distance between two compact sets $C_1, C_2 \subset  \R^n$.

\begin{lemma}\label{lemma:AXLambdaAlphaOfKPrimeFlowsToAXLambdaMinusDeltaAlphaOfK}
Let $K, K'\subset \R^n$ be  complements of bounded open  sets $K^c$ and $K'^c$ and 
$\alpha \geq 0$ ,$\mu >0$, $\lambda > 0$,
such that $r_\mu^\alpha (K)  > \alpha + \lambda$.

If
$d_H( K,  K') < \epsilon $
 then, if $\delta <  \lambda$ and $\epsilon < \frac{\left( 2 \lambda -\delta\right)  \delta}{8 R_{\max}}$ one has:
\[
\Phi_K \left(\frac{8 R_{\max}^2}{\left( 2 \lambda -\delta\right)  \delta \tilde{\mu}}  \:  \epsilon \, , \: \ax_\lambda^\alpha(K')  \right) \subset \ax_{\lambda-\delta}^\alpha(K),
\]
where $R_{\max} = {\max \{ R_{\max}(K) ,R_{\max} (K') \}  } < \infty$
 and $\tilde{\mu} = \tilde{\mu}_{\mu,\lambda}^{\alpha, \alpha}>0$.

Moreover, for $y \in \ax_\lambda^\alpha(K')$,
the length of the trajectory \newline $\Phi_K \left( \left[0, \frac{8 R_{\max}^2}{ \left( 2 \lambda -\delta\right)  \delta \tilde{\mu}} \:  \epsilon \right] , y \right)$ is upper bounded by
\[
\frac{8 R_{\max}^2}{\left( 2 \lambda -\delta\right)  \delta}\:  \epsilon.
\]
\end{lemma}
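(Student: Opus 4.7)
The argument parallels Lemma \ref{lemma:EpsilonNeighborFlowsToLambdaMinusDelta} almost verbatim, with the key substitution that the anchor point $a \in \ax_\lambda^\alpha(K)$ close to $y$ is replaced by $y$ itself, but evaluated with respect to $K'$: since $y \in \ax_\lambda^\alpha(K')$, one has $\F_{K'}^\alpha(y) \geq \lambda$. The Hausdorff hypothesis $d_H(K,K') < \epsilon$ now enters via the pointwise inequality $|R_K(x) - R_{K'}(x)| < \epsilon$ rather than via a distance between anchor and target.

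The plan is as follows. From $\F_{K'}^\alpha(y) \geq \lambda$ one deduces $R_{K'}(y) - \alpha \geq \lambda$, hence $R_K(y) - \alpha > \lambda - \epsilon > \lambda - \delta$ (the last step using $\epsilon < \delta$, which follows from the hypothesis $\epsilon < (2\lambda-\delta)\delta/(8R_{\max})$ together with $2\lambda-\delta \leq 2R_{\max}$). Parametrize $t \mapsto \Phi_K(t,y)$ by arc length as $s \mapsto y(s)$ and suppose for a proof by contradiction that $y(S) \notin \ax_{\lambda-\delta}^\alpha(K)$. Then Lemma \ref{lemma:RadiusTightLowerBoundWhenTrajectoryNotInAXLambdaAlpha} (applied with $\lambda$ replaced by $\lambda-\delta$) yields the lower bound
\[
(R_K(y(S)) - \alpha)^2 \geq (S_0 + S)^2 + (\lambda-\delta)^2, \qquad S_0^2 = (R_K(y)-\alpha)^2 - (\lambda-\delta)^2.
\]
For the complementary upper bound I would apply Lemma 4.15 of \cite{LIEUTIERhomotopytype} to the set $K'$ with anchor $y$ and target $y(S)$, using $\|y(S) - y\| \leq S$ and $\F_{K'}^\alpha(y) \geq \lambda$. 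Setting $S_a^2 = (R_{K'}(y)-\alpha)^2 - \lambda^2$ this gives $(R_{K'}(y(S)) - \alpha)^2 \leq (S_a + S)^2 + \lambda^2$, and $R_K(y(S)) \leq R_{K'}(y(S)) + \epsilon$ transfers this to a bound on $R_K(y(S)) - \alpha$ at the cost of an $\epsilon$-correction.

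Combining both bounds and using the identity
\[
S_0^2 - S_a^2 - (2\lambda-\delta)\delta = (R_K(y)-\alpha)^2 - (R_{K'}(y)-\alpha)^2,
\]
whose absolute value is at most $2 R_{\max}\epsilon$ (since $R_K(y), R_{K'}(y) \leq R_{\max}$), yields a linear inequality in $S$ of the same structural form as equation \eqref{equation:LinearInequationOnS}. The lower bound $S_0 - S_a \geq 3(2\lambda-\delta)\delta/(8R_{\max})$ follows from the hypothesis on $\epsilon$ exactly as in \eqref{equation:LowerBoundS0MinusSA_2}, forcing $S < 8R_{\max}^2 \epsilon / ((2\lambda-\delta)\delta)$ whenever $y(S)$ lies outside $\ax_{\lambda-\delta}^\alpha(K)$. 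Converting arc length to flow time via the lower bound $\|\nabla_K(y(s))\| \geq \tilde{\mu}$ from Lemma \ref{lemma:NablaLowerBoundedByBeta}, valid along the trajectory until it enters $\ax_{\lambda-\delta}^\alpha(K)$, then yields the claimed bound on the flow time.

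The main obstacle is carefully managing the $\epsilon$-corrections arising from squaring the upper bound $R_K(y(S)) - \alpha \leq \sqrt{(S_a+S)^2 + \lambda^2} + \epsilon$: the cross term $2\epsilon \sqrt{(S_a+S)^2 + \lambda^2}$ and the $\epsilon^2$ term must be absorbed into the same $O(R_{\max} \epsilon)$ budget as the $2R_{\max}\epsilon$ contribution from the identity above, in order to match the constants of the preceding lemma. A secondary point to verify is that $y(s) \in (K'^{\oplus \alpha})^c$ along the whole trajectory, which is needed to apply Lemma 4.15 of \cite{LIEUTIERhomotopytype} relative to $K'$; this follows from the monotonicity of $s \mapsto R_K(y(s))$ along the flow, the bound $R_{K'} \geq R_K - \epsilon$, and the fact that the hypothesis on $\epsilon$ implies $\epsilon < \lambda/2$ so that $R_{K'}(y(s)) > \alpha$ throughout.
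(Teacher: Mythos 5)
Your proposal matches the paper's proof: same anchor-point substitution ($y$ relative to $K'$ in place of $a$), same lower bound via Lemma \ref{lemma:RadiusTightLowerBoundWhenTrajectoryNotInAXLambdaAlpha} with $\lambda$ replaced by $\lambda-\delta$, same upper bound via Lemma 4.15 of \cite{LIEUTIERhomotopytype} applied to $K'$, same identity
$S_0^2 - S_a^2 - (2\lambda-\delta)\delta = (R_K(y)-\alpha)^2 - (R_{K'}(y)-\alpha)^2$,
and the same closing linear inequality in the arc length. The one place where your bookkeeping would give trouble is your ``main obstacle'': the cross term $2\epsilon\sqrt{(S_a+S)^2+\lambda^2}$ is $S$-dependent, so it is not a priori $O(R_{\max}\epsilon)$, and you would get a circular estimate (or weaker constants). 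The paper avoids this by transposing $\epsilon$ \emph{before} squaring, i.e.\ writing $R_K(y(S)) - \epsilon - \alpha \leq R_{K'}(y(S)) - \alpha$ (both sides nonnegative because the hypothesis forces $\epsilon < \lambda/8$) and squaring to get $(R_K(y(S))-\epsilon-\alpha)^2 \leq (S_a+S)^2 + \lambda^2$; the correction that then enters the subtraction is $(R_K(y(S))-\alpha)^2 - (R_K(y(S))-\epsilon-\alpha)^2 = (2(R_K(y(S))-\alpha)-\epsilon)\epsilon \leq 2R_{\max}\epsilon$, with no square root and no $S$-dependence, giving exactly the stated constants. Your secondary check that $y(s)\in\bigl(K'^{\oplus\alpha}\bigr)^c$ along the trajectory is a genuine requirement for applying Lemma 4.15 relative to $K'$; the paper does not spell it out, and your justification via monotonicity of $R_K$ along the flow together with $R_{K'}\geq R_K-\epsilon$ and $\epsilon<\lambda/8$ is correct.
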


\begin{proof} 
The proof is similar to the proof of Lemma \ref{lemma:EpsilonNeighborFlowsToLambdaMinusDelta}, except that now we start with a point in 
$\ax_\lambda^\alpha(K')$ and flow to $\ax_{\lambda-\delta}^\alpha(K)$.
We consider $y  \in \ax_\lambda^\alpha(K')$.
Denote by $s\mapsto y(s)$ the trajectory of $t \mapsto \Phi_K \left( t, y \right)$, parametrized by arc length, 
so that $y(0)=y$. Assume that for some $s>0$ one has:
\[
y(s) \notin \ax_{\lambda-\delta}^\alpha { (K) }
\]
Recall that $d_H( K,  K') < \epsilon $ implies that 
\begin{align} 
\left| R_{K'}(x) - R_{K}(x) \right| < \epsilon,
\label{eq:RKRKprimeBound}
\end{align}  
for all $x$.
Similarly to \eqref{equation:R_KOfALowerBounded}, once again has $ y  \in \ax_\lambda^\alpha(K')  \Rightarrow R_{K'}(y) -\alpha > \lambda \Rightarrow R_{K}(y) - \alpha > \lambda -\epsilon$. 
Moreover due to the hypothesis of the lemma one has $\epsilon < \delta$, so one sees, 
\begin{equation*}\tag{\ref{equation:RYMinusAlphaGreaterThanLambdaMinusDelta}} 
R_K(y) -\alpha > \lambda - \delta.
\end{equation*}
As in the proof of Lemma \ref{lemma:EpsilonNeighborFlowsToLambdaMinusDelta} 
we can  apply Lemma \ref{lemma:RadiusTightLowerBoundWhenTrajectoryNotInAXLambdaAlpha} with $\lambda $ replaced by $ \lambda - \delta$,
which gives us,
 \begin{equation}\label{equation:LowerBoundRK_2}
\left( R_K(y(s)) - \alpha \right)^2  \geq   (S_0+s)^2 +  (\lambda - \delta)^2,
 \end{equation}
where:
\begin{equation}\label{equation:DefinitionS0_2}
S_0^2 = \left(R_K(y)-\alpha \right)^2 - (\lambda - \delta)^2.
\end{equation}
Since $y \in \ax_\lambda^\alpha(K')$, $R_{K'}(y) -\alpha  \geq \lambda$ and there is  $S_y \geq 0$
such that
 \begin{equation}\label{equation:DefinitionSY}
S_y^2  = \left( R_{K'}(y) -\alpha \right)^2 - \lambda^2.
 \end{equation}
Again following the same steps as in the proof of Lemma \ref{lemma:EpsilonNeighborFlowsToLambdaMinusDelta}, we use Lemma 4.15 of \cite{LIEUTIERhomotopytype} to see that,
 \begin{equation}\label{equation:UpperBoundRK_2_1}
\left( R_{K'}(y(s)) -\alpha\right)^2 \leq     ( S_y + \|y(s)-y\|)^2   +    \lambda^2.
 \end{equation}
Because $y(s)$ is parametrized by arc length, we have $\|y(s)-y\| \leq s$, which together with \eqref{eq:RKRKprimeBound} yields,
 \begin{equation}\label{equation:UpperBoundRK_2_2}
\left( R_K(y(s)) - \epsilon -\alpha \right)^2 \leq     ( S_y + s )^2   +    \lambda^2.
 \end{equation}
Subtracting \eqref{equation:LowerBoundRK_2} from \eqref{equation:UpperBoundRK_2_2} yields,
\begin{align*}
 0  \leq  &( S_y + s )^2   - (S_0+s)^2   +    \lambda^2 -  (\lambda - \delta)^2 +  \left( R_K(y(s))  -\alpha \right)^2 
\\  & -\left( R_K(y(s)) - \epsilon -\alpha \right)^2,
\end{align*}
which can be rewritten (in two steps) as,
\begin{align}
 0  \leq  ( S_y + s )^2   - (S_0+s)^2   +    \left(2 \lambda - \delta \right) \delta + \left( 2 R_K(y(s))  - 2 \alpha  - \epsilon \right) \, \epsilon
\nonumber
\\
2 \left(S_0 - S_y \right) s \leq \left(2 \lambda - \delta \right) \delta - \left(S_0^2  - S_y^2 \right) + 2 R_{\max} \, \epsilon.
\label{equation:FirstLinearInequationOnS}
\end{align}
Again as in the proof of Lemma \ref{lemma:EpsilonNeighborFlowsToLambdaMinusDelta}, combining 
\eqref{equation:DefinitionS0_2} and \eqref{equation:DefinitionSY} gives us
\begin{align}
\left| \left(2 \lambda - \delta \right) \delta - \left(S_0^2  - S_y^2 \right) \right| 
 &= \left|  \left( R_{K'}(y) -\alpha \right)^2 -  \left( R_{K}(y) -\alpha \right)^2 \right| 
\nonumber 
\\
&= \left( R_K(y) + R_{K'}(y) - 2 \alpha \right) |  R_K(y) - R_{K'}(y) | \nonumber 
\\
 &\leq \left( R_K(y) + R_{K'}(y) - 2 \alpha \right) \, \epsilon 
\tag{by \eqref{eq:RKRKprimeBound}}  
\\
 &\leq 2 R_{\max} \, \epsilon ,
\label{equation:R02MinusSY2CloseToTwoDeltaMinusDelatTimesDelta}
\end{align}
and \eqref{equation:FirstLinearInequationOnS} gives
\begin{equation}\label{equation:SecondtLinearInequationOnS}
 \left(S_0 - S_y \right) s \leq  2 R_{\max} \, \epsilon.
\end{equation}
Since $\epsilon < \frac{\left( 2 \lambda -\delta\right)  \delta}{8 R_{\max}}$ , 
\eqref{equation:R02MinusSY2CloseToTwoDeltaMinusDelatTimesDelta} yields,
\begin{equation*}
\frac{3}{4}  \left(2 \lambda - \delta \right) \delta < S_0^2  - S_y^2  < \frac{5}{4}  \left(2 \lambda - \delta \right) \delta
\end{equation*}
and
\begin{equation*}
S_0 - S_y = \frac{S_0^2  - S_y^2 }{ S_0 + S_{ y } } 
> \frac{3}{4}  \frac{ \left(2 \lambda - \delta \right)\delta }{ S_0 + S_{ y } } >   \frac{ \left(2 \lambda - \delta \right) \delta}{ 4 R_{\max}} , 
\end{equation*} 
 where we used that \eqref{equation:DefinitionS0_2} implies that $S_0 \leq R_{\max}$, and \eqref{equation:DefinitionSY}  implies $S_y \leq R_{\max}$. 
With \eqref{equation:SecondtLinearInequationOnS} we get:
\begin{equation*}
s  < \frac{8 R_{\max} }{\left(2 \lambda - \delta \right) \delta } \, \epsilon
\end{equation*}
and we conclude exactly as in the proof of Lemma \ref{lemma:EpsilonNeighborFlowsToLambdaMinusDelta} 
\end{proof}


\begin{remark}
The previous lemma can be interpreted as a stability result with respect to the one-sided Hausdorff distance. Moreover, the lemma applies  when $\alpha =0$, in which case the statement can be compared to Theorem 3 of \cite{cl2005lambda}. 
Theorem 3 of \cite{cl2005lambda} says that the $\lambda$-medial axis is $\frac{1}{2}$-H{\"o}lder stable in the following sense: 
If $d_H(K,K') < \epsilon$, then for $x \in \ax_\lambda (K')$ there is $y\in ax_{\lambda - \delta} (K)$ 
with $\|y-x\| =O(\epsilon^{\frac{1}{2}})$. In other words the one sided Hausdorff distance between $\ax_\lambda (K')$ and $\ax_\lambda (K)$ is $O(\epsilon^{\frac{1}{2}})$.
 Lemma
 \ref{lemma:AXLambdaAlphaOfKPrimeFlowsToAXLambdaMinusDeltaAlphaOfK}
 proves the stronger linear bound $\|y-x\| =O(\epsilon)$.
The effect of a translation on $K$ shows that one cannot expect a bound better than linear.

The proofs of Theorem 3 in \cite{cl2005lambda} and Lemma \ref{lemma:AXLambdaAlphaOfKPrimeFlowsToAXLambdaMinusDeltaAlphaOfK}
are based on the same idea. However, here we get a better bound by using the
lower bound on $R_K(y(S))$ given by Lemma 
\ref{lemma:RadiusTightLowerBoundWhenTrajectoryNotInAXLambdaAlpha}
which is tighter than the one used in \cite{cl2005lambda}.
\end{remark}

\subsection{Hausdorff distance between  
\texorpdfstring{$\ax_\lambda^\alpha(K')$}{lambda alpha medial axis of K'} 
and 
\texorpdfstring{$\ax_\lambda^\alpha(K)$}{lambda alpha medial axis of K}
}

Combining Lemmas \ref{lemma:AXLambdaAlphaOfKPrimeFlowsToAXLambdaMinusDeltaAlphaOfK}
and \ref{lemma:BoundOnLengthToLambdaPlusDelta} allows to give a more symmetric statement.



\begin{lemma}\label{lemma:AXLambdaHausdorffStableSymmetric}
Let $K, K'\subset \R^n$ be  complements of a bounded open  sets $K^c$ and $K'^c$ and 
$\alpha > 0$, $\mu >0$, $\lambda > 0$,
such that $r_\mu^\alpha (K)   > \alpha + \lambda$.
If
\[
d_H( K,  K') < \epsilon,
\] 
and, if  $\delta = 2 \sqrt{ \alpha \tilde{\mu} \epsilon}<  \lambda$, and $\epsilon < \min \left(  \frac{\left( 2 \lambda -\delta\right)  \delta}{8 R_{\max}} , \frac{\lambda^2}{16 \alpha \tilde{\mu}} \right)$ then,
\begin{align*}
\Phi_K \left( C \epsilon^{\frac{1}{2}}
 \, , \: \ax_\lambda^\alpha(K')  \right) 
\subset \ax_\lambda^\alpha(K),
\end{align*}
where  $R_{\max} = \max \{ R_{\max}(K) ,R_{\max} (K') \} < \infty$ 
, $\tilde{\mu} = \tilde{\mu}_{\mu,\lambda}^{\alpha, \alpha} (K) >0$,
 and $C$ is defined as
\begin{equation}\label{equation:DefinitionC}
C = \frac{22}{3} \frac{R_{\max}^2 }{\alpha^{\frac{1}{2}}  \tilde{\mu}^{\frac{3}{2}} \lambda } .
\end{equation}

Moreover, if one has also symmetrically  $r_\mu^\alpha (K)  ,   r_\mu^\alpha (K')  > \alpha + \lambda$  and $R_{\max} = \max \left( R_{\max}(K),  R_{\max}(K')\right)$,
and $\tilde{\mu} = \min \left( \tilde{\mu}_{\mu,\lambda}^{\alpha, \alpha} (K) , \tilde{\mu}_{\mu,\lambda}^{\alpha, \alpha} (K')  \right)$
 we have
\begin{equation}\label{equation:HausdorffDistanceBetweenLambdaAlphaMA}
d_H \left(  \ax_\lambda^\alpha(K) ,  \ax_\lambda^\alpha(K') \right) < C \epsilon^{\frac{1}{2}}. 
\end{equation}
\end{lemma}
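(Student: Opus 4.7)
The plan is to compose two applications of the flow $\Phi_K$. First, by Lemma \ref{lemma:AXLambdaAlphaOfKPrimeFlowsToAXLambdaMinusDeltaAlphaOfK}, for $\delta<\lambda$ and $\epsilon<\frac{(2\lambda-\delta)\delta}{8R_{\max}}$ the flow carries $\ax_\lambda^\alpha(K')$ into the slightly enlarged set $\ax_{\lambda-\delta}^\alpha(K)$ in time $T_1=\frac{8R_{\max}^2\epsilon}{(2\lambda-\delta)\delta\tilde{\mu}}$. Second, by Lemma \ref{lemma:BoundOnLengthToLambdaPlusDelta} applied with the role of $\lambda$ played by $\lambda-\delta$ and increment $\delta$, the flow pushes $\ax_{\lambda-\delta}^\alpha(K)$ into $\ax_\lambda^\alpha(K)$ in time $T_2=\frac{R_{\max}^2\delta}{\alpha(\lambda-\delta)\tilde{\mu}^2}$. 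The value of $\tilde\mu$ coincides in the two lemmas since $\tilde\mu_{\mu,\lambda}^{\alpha,\alpha}=\tilde\mu_{\mu,(\lambda-\delta)+\delta}^{\alpha,\alpha}$. The semigroup property \eqref{equation:FlowAsSemiGroup} then gives $\Phi_K(T_1+T_2,\ax_\lambda^\alpha(K'))\subset\ax_\lambda^\alpha(K)$.

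Having expressed the total time as a sum of a term decreasing in $\delta$ ($T_1\propto\epsilon/\delta$) and one increasing in $\delta$ ($T_2\propto\delta/\alpha$), I balance them. Setting $T_1\sim T_2$ leads to $\delta^2\sim\alpha\tilde{\mu}\epsilon$, which is precisely why the choice $\delta=2\sqrt{\alpha\tilde{\mu}\epsilon}$ appears in the statement. With this choice the additional assumption $\epsilon<\frac{\lambda^2}{16\alpha\tilde{\mu}}$ is exactly what forces $\delta<\lambda/2$, giving the convenient bounds $2\lambda-\delta>3\lambda/2$ and $\lambda-\delta>\lambda/2$ that bring both $T_1$ and $T_2$ into the form $\mathrm{const}\cdot\frac{R_{\max}^2}{\sqrt{\alpha}\,\tilde{\mu}^{3/2}\lambda}\sqrt\epsilon$. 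Summing and absorbing numerical factors (using $\tilde\mu\le 1$ as needed) yields $T_1+T_2\leq C\epsilon^{1/2}$ with $C$ as in \eqref{equation:DefinitionC}, which is the first claim.

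For the Hausdorff statement, it suffices to note that the Euclidean displacement along any trajectory of $\Phi_K$ is at most its time parameter, since $\|\nabla_K(\cdot)\|\leq 1$; in particular, each point of $\ax_\lambda^\alpha(K')$ lies within Euclidean distance $C\epsilon^{1/2}$ of the point it flows to in $\ax_\lambda^\alpha(K)$, proving $\ax_\lambda^\alpha(K')\subset(\ax_\lambda^\alpha(K))^{\oplus C\epsilon^{1/2}}$. For the reverse inclusion, the symmetric hypothesis $r_\mu^\alpha(K')>\alpha+\lambda$ allows exchanging the roles of $K$ and $K'$: an identical argument using the flow $\Phi_{K'}$ yields $\ax_\lambda^\alpha(K)\subset(\ax_\lambda^\alpha(K'))^{\oplus C\epsilon^{1/2}}$. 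The symmetric definitions of $R_{\max}$ and $\tilde{\mu}$ as a maximum and minimum respectively ensure the same constant $C$ works in both directions. Combining the two inclusions with Definition \ref{definition:HausdorffDistance} gives \eqref{equation:HausdorffDistanceBetweenLambdaAlphaMA}.

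No single step is genuinely subtle; the only real obstacle is the simultaneous book-keeping of several inequalities ($\delta<\lambda/2$, the hypotheses of both Lemma \ref{lemma:AXLambdaAlphaOfKPrimeFlowsToAXLambdaMinusDeltaAlphaOfK} and Lemma \ref{lemma:BoundOnLengthToLambdaPlusDelta}, and the matching of $\tilde{\mu}$ parameters), together with verifying that the near-optimal split $\delta\sim\sqrt{\alpha\tilde{\mu}\epsilon}$ makes the two time-terms both $O(\sqrt\epsilon)$ with a constant no worse than $\tfrac{22}{3}$. This split is what ultimately fixes the H{\"o}lder exponent $1/2$ in the Hausdorff stability.
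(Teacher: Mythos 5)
Your proposal is correct and follows essentially the same route as the paper: compose Lemma \ref{lemma:AXLambdaAlphaOfKPrimeFlowsToAXLambdaMinusDeltaAlphaOfK} (flow $\ax_\lambda^\alpha(K')$ into $\ax_{\lambda-\delta}^\alpha(K)$) with Lemma \ref{lemma:BoundOnLengthToLambdaPlusDelta} (flow $\ax_{\lambda-\delta}^\alpha(K)$ into $\ax_\lambda^\alpha(K)$) via the semigroup property, then optimize $\delta=2\sqrt{\alpha\tilde{\mu}\epsilon}$ and use $\delta<\lambda/2$ to reach the constant $\tfrac{22}{3}$, with the Hausdorff bound following from $\|\nabla_K\|\leq 1$ and the symmetric hypotheses. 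The balancing of the two time-terms and the matching of the $\tilde{\mu}$ parameters are exactly the paper's argument.
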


\begin{proof}
{
We first flow $\ax_\lambda^\alpha(K')$ into $\ax_{\lambda-\delta}^\alpha(K)$, and then we flow from $\ax_{\lambda-\delta}^\alpha(K)$ to $\ax_{\lambda}^\alpha(K)$.   } 
Indeed these Lemmas \ref{lemma:AXLambdaAlphaOfKPrimeFlowsToAXLambdaMinusDeltaAlphaOfK}
and \ref{lemma:BoundOnLengthToLambdaPlusDelta} give that if  $K, K'\subset \R^n$ 
are  complements of a bounded open  sets $K^c$ and $K'^c$ and 
$\alpha >0 $, $\mu >0$, $\lambda > 0$,
are such that $r_\mu^\alpha (K)   > \alpha + \lambda$ and 
$d_H( K,  K') < \epsilon$ then for $0< \delta < \lambda$
and  $\epsilon <  \frac{\left( 2 \lambda -\delta\right)  \delta}{8 R_{\max}}$ one has,
\begin{equation}\label{equation:PhiKSendAXLambdaOfKPrimeInAXLambdaOfK}
\Phi_K \left(  \frac{R_{\max}^2}{\alpha (\lambda-\delta)  \tilde{\mu}^2} \delta
     +  \frac{8 R_{\max}^2}{\left( 2 \lambda -\delta\right)  \delta \tilde{\mu}}  \: 
      \epsilon \, , \: \ax_\lambda^\alpha(K')  \right) \subset \ax_{\lambda}^\alpha(K),
\end{equation}
where we use \eqref{equation:FlowAsSemiGroup} and $R_{\max} = R_{\max}(K)< \infty$ and $\tilde{\mu} = \tilde{\mu}_{\mu,\lambda}^{\alpha, \alpha}>0$. 

Here we still have to choose a value of $\delta$ that minimizes
the first argument of $\Phi_K$ 
in \eqref{equation:PhiKSendAXLambdaOfKPrimeInAXLambdaOfK}, that is, 
\begin{equation}\label{equation:TimeToMinimizeInFlowFromAXLambdaKPrimeToAXLambdaKP}
\delta \mapsto \frac{R_{\max}^2}{\alpha (\lambda-\delta)  \tilde{\mu}^2} \delta
     +  \frac{8 R_{\max}^2}{\left( 2 \lambda -\delta\right)  \delta \tilde{\mu}}  \: 
      \epsilon.
\end{equation}
To this end we first observe that when $\epsilon$ is small, 
the value of $\delta$ that minimizes, \eqref{equation:TimeToMinimizeInFlowFromAXLambdaKPrimeToAXLambdaKP}
is small.
When $\delta$ is small 
\eqref{equation:TimeToMinimizeInFlowFromAXLambdaKPrimeToAXLambdaKP}
is well approximated by
\begin{equation}\label{equation:ApproxTimeToMinimizeInFlowFromAXLambdaKPrimeToAXLambdaKP}
\delta \mapsto \frac{R_{\max}^2}{\alpha \lambda  \tilde{\mu}^2} \delta
     +  \frac{4 R_{\max}^2}{ \lambda  \delta \tilde{\mu}}  \: 
      \epsilon.
\end{equation}
The value of $\delta$ that minimizes \eqref{equation:ApproxTimeToMinimizeInFlowFromAXLambdaKPrimeToAXLambdaKP}
is
\begin{equation}\label{equation:AChoiceForDelta}
\delta = 2 \sqrt{ \alpha \tilde{\mu} \epsilon}.
\end{equation}
We now substitute \eqref{equation:AChoiceForDelta} in \eqref{equation:TimeToMinimizeInFlowFromAXLambdaKPrimeToAXLambdaKP}. If we also observe that if $\epsilon < \frac{\lambda^2}{16 \alpha \tilde{\mu}}$, \eqref{equation:AChoiceForDelta} gives $\delta < \lambda/2$ and thus $2 \lambda -\delta > 3/2 \lambda$ and  $\lambda-\delta > 1/2 \lambda$ we find the following upper bound
\[
\frac{R_{\max}^2}{\alpha (\lambda-\delta)  \tilde{\mu}^2} \delta
     +  \frac{8 R_{\max}^2}{\left( 2 \lambda -\delta\right)  \delta \tilde{\mu}}  \: 
      \epsilon
      < \frac{22}{3} \frac{R_{\max}^2 }{\alpha^{\frac{1}{2}}  \tilde{\mu}^{\frac{3}{2}} \lambda } \epsilon^{\frac{1}{2}}.
\]
\end{proof} 


\begin{remark} The symmetric condition $r_\mu^\alpha (K)  , r_\mu^\alpha (K')  > \alpha + \lambda$ can be replaced by a condition on $K$ only. More precisely, 
in the limit where some $\delta$ tends to zero as $d_H( K,  K') \rightarrow 0$ we have that $r_\mu^{\alpha - \delta} (K)  >   \alpha + \lambda + \delta$ implies $r_\mu^\alpha (K)  , r_\mu^\alpha (K')  > \alpha + \lambda$ , see \cite[Theorem 3.4]{chazal2009sampling} where also the dependencies of $\delta$ and $d_H( K,  K')$ are made precise.
\end{remark}



\begin{lemma}\label{lemma:EpsilonNeighborFlowsToLambda}
Let $K\subset \R^n$ be the complement of a bounded open  set $K^c$ and 
$\alpha > 0$, $\mu >0$ and $\lambda > 0$,
such that $r_\mu^\alpha (K)  > \alpha + \lambda$.
Then,
if  $\delta = 2 \sqrt{ \alpha \tilde{\mu} \epsilon}<   \lambda$ and $\epsilon < \min \left(2 \alpha,   \frac{\left( 2 \lambda -\delta\right)  \delta}{8 R_{\max}} , \frac{\lambda^2}{16 \alpha \tilde{\mu}} \right)$ one has
\[
\Phi_K \left(C \epsilon^{\frac{1}{2}}
 \, , \:  \left(\ax_\lambda^\alpha(K) \right)^{\oplus \epsilon}  \right) \subset \ax_\lambda^\alpha(K),
\]
where  $R_{\max} = R_{\max}(K)< \infty$, $\tilde{\mu} = \tilde{\mu}_{\mu,\lambda}^{\alpha, \alpha} (K)>0$, and $C$ is defined by \eqref{equation:DefinitionC}.
\end{lemma}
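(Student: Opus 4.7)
The plan is to follow the same two-stage flow strategy used in the proof of Lemma~\ref{lemma:AXLambdaHausdorffStableSymmetric}, but with the set $K'$ replaced by the role of a nearby point in the $\epsilon$-offset. The two building blocks are Lemma~\ref{lemma:EpsilonNeighborFlowsToLambdaMinusDelta} (which pushes any point in $\left(\ax_\lambda^\alpha(K)\right)^{\oplus \epsilon}$ into $\ax_{\lambda-\delta}^\alpha(K)$) and Lemma~\ref{lemma:BoundOnLengthToLambdaPlusDelta} (which pushes any point in $\ax_{\lambda-\delta}^\alpha(K)$ back up into $\ax_{\lambda}^\alpha(K)$). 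Composing these two pushes via the semigroup identity \eqref{equation:FlowAsSemiGroup} yields, for any $\delta\in(0,\lambda)$ and $\epsilon$ small enough to satisfy the hypotheses of both lemmas,
\[
\Phi_K\!\left(\frac{R_{\max}^2}{\alpha (\lambda-\delta)\tilde{\mu}^2}\delta + \frac{8R_{\max}^2}{(2\lambda-\delta)\delta\tilde{\mu}}\epsilon,\; \left(\ax_\lambda^\alpha(K)\right)^{\oplus\epsilon}\right)\subset \ax_\lambda^\alpha(K).
\]

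The next step is to choose $\delta$ to minimize this total time. Heuristically, for $\delta$ small relative to $\lambda$ the expression is approximated by $\frac{R_{\max}^2}{\alpha\lambda\tilde{\mu}^2}\delta + \frac{4R_{\max}^2}{\lambda\tilde{\mu}\delta}\epsilon$, and setting the derivative to zero gives $\delta = 2\sqrt{\alpha\tilde{\mu}\epsilon}$, which is exactly the choice made in the statement. The assumption $\epsilon < \lambda^2/(16\alpha\tilde{\mu})$ is precisely what guarantees $\delta < \lambda/2$, so that $\lambda-\delta > \lambda/2$ and $2\lambda-\delta > 3\lambda/2$, allowing a clean upper bound on each term.

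Substituting $\delta = 2\sqrt{\alpha\tilde{\mu}\epsilon}$ and using these inequalities, each of the two summands becomes $O\!\left(\frac{R_{\max}^2}{\alpha^{1/2}\tilde{\mu}^{3/2}\lambda}\epsilon^{1/2}\right)$, and an explicit accounting gives a total bounded by $\frac{22}{3}\frac{R_{\max}^2}{\alpha^{1/2}\tilde{\mu}^{3/2}\lambda}\epsilon^{1/2} = C\epsilon^{1/2}$, matching the constant $C$ defined in \eqref{equation:DefinitionC}. This yields the desired inclusion.

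The proof is essentially a direct specialization of Lemma~\ref{lemma:AXLambdaHausdorffStableSymmetric}: the calculation is identical, since Lemma~\ref{lemma:EpsilonNeighborFlowsToLambdaMinusDelta} and Lemma~\ref{lemma:AXLambdaAlphaOfKPrimeFlowsToAXLambdaMinusDeltaAlphaOfK} produce the exact same flow-time bound $\tfrac{8R_{\max}^2}{(2\lambda-\delta)\delta\tilde{\mu}}\epsilon$. The only genuine care required is to verify that the assumptions of the lemma we are proving imply all three hypotheses of Lemma~\ref{lemma:EpsilonNeighborFlowsToLambdaMinusDelta} (in particular $\epsilon<2\alpha$, $\epsilon<(2\lambda-\delta)\delta/(8R_{\max})$, and $\delta<\lambda$) as well as the hypothesis of Lemma~\ref{lemma:BoundOnLengthToLambdaPlusDelta} with $\lambda$ replaced by $\lambda-\delta$ and offset $\delta$, namely $r_\mu^\alpha(K)>\alpha+\lambda$. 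There is no real obstacle beyond careful bookkeeping of the constants and verifying the monotonicity $\tilde{\mu}_{\mu,\lambda}^{\alpha,\alpha}\le \tilde{\mu}_{\mu,\lambda-\delta}^{\alpha,\alpha}$ so that a single $\tilde{\mu}$ suffices in the final bound.
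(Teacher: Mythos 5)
Your proposal is correct and follows essentially the same route as the paper: the paper's own proof simply says the argument is identical to that of Lemma~\ref{lemma:AXLambdaHausdorffStableSymmetric}, combining Lemma~\ref{lemma:EpsilonNeighborFlowsToLambdaMinusDelta} with Lemma~\ref{lemma:BoundOnLengthToLambdaPlusDelta} and taking the same optimal $\delta = 2\sqrt{\alpha\tilde{\mu}\epsilon}$ from \eqref{equation:AChoiceForDelta}, which is exactly your two-stage composition and choice of $\delta$.
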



\begin{proof}
The proof is identical to the proof of Lemma \ref{lemma:AXLambdaHausdorffStableSymmetric}, however this 
 this time we combine Lemmas \ref{lemma:EpsilonNeighborFlowsToLambdaMinusDelta}
and \ref{lemma:BoundOnLengthToLambdaPlusDelta} to achieve the more symmetric statement. We note that we choose the same optimal value for $\delta$ from \eqref{equation:AChoiceForDelta}. 
\end{proof}


\section{Gromov-Hausdorff stability of the \texorpdfstring{$(\lambda,\alpha)$}{(lambda, alpha)}-medial 
axis under Hausdorff perturbation of \texorpdfstring{$K$}{K}} \label{section:GromovHausdorffStability}

\begin{figure}[h!]
\centering
\includegraphics[width=0.48\textwidth]{} 
\caption{Illustration of the proof of Theorem \ref{theorem:GromovHausdorffStability}. The path $\Gamma$,
concatenation of  $\sigma_{aa'}$, $\gamma'$ and $\sigma_{b'b}$, is pushed through the flow $\Phi_K$ into a path  $\PushedPath{\Gamma}{C^\frac{3}{2} \epsilon^\frac{1}{4}}$
 that realizes a path in $\ax_\lambda^\alpha(K)$ between $a$ and $b$.}
\label{figure:PathPushedGamma}
\Description{The pushing of a path on one $(\lambda,\alpha)$-medial axis onto another $(\lambda,\alpha)$-medial axis.}
\end{figure}

In this section, we bound the Gromov-Hausdorff distance between the $(\lambda,\alpha)$-medial axis of a set $K$ and the medial axis of perturbation of the set, where the perturbation is small in the Hausdorff sense. 
Gromov-Hausdorff distance is understood to be with respect to the intrinsic distance on the medial axis, that is the metric on the space $\ax_{\lambda}^\alpha (K)$ 
is the metric induced by the geodesic distance on the set. As we have seen in Section \ref{sec:finiteDiam} this metric is well defined. 

{We assume that we are in the symmetric setting, that is the conditions of Lemma \ref{lemma:AXLambdaHausdorffStableSymmetric} are satisfied.} 
We assume moreover that $K^c$ and $(K')^c$ are connected.

Figure   \ref{figure:PathPushedGamma} illustrates the idea of the proof of Theorem \ref{theorem:GromovHausdorffStability}. 
{Consider } the pairs 
\[(a,a'), (b,b') \in \mathcal{R}= \left\{ (x, x') \in \ax_\lambda^\alpha(K) \times \ax_\lambda^\alpha(K'), \|x - x'\| <  C \epsilon^{\frac{1}{2}} \right\}.\] 
In order  to compare the length of a geodesic  $\gamma$ from $a$ to $b$ 
to the length of a geodesic  $\gamma'$ from $a'$ to $b'$  (left), we first create a path $\Gamma$ (middle), concatenation of  $\gamma$ with two straight segments $\sigma_{aa'}$ and $\sigma_{b' b}$.
Then $\Gamma$ is ``pushed'' (right) along the flow $\Phi_K$, which, after a ``time'' $t=\mathcal{O}\big(\epsilon^{\frac{1}{4}}\big)$, belongs to $\ax_\lambda^\alpha (K)$. The  pushed path can then be shown to be not much longer than the path $\gamma'$.


\begin{theorem}\label{theorem:GromovHausdorffStability}
Let $K, K' \subset \R^n$ be  complements of  bounded open  sets $K^c,(K')^c$,
$\alpha > 0$, $\mu >0$ and $\lambda > 0$,
such that, for some $\alpha' < \alpha$ one has $r_\mu^{\alpha'} (K) > \alpha + \lambda $ and
$r_\mu^{\alpha'} (K') > \alpha + \lambda $,
and  $\left(K^{\oplus \alpha}\right)^c$ and  $\left(K'^{\oplus \alpha}\right)^c$ are connected.
Denote $R_{\max} = \max \left( R_{\max}(K),  R_{\max}(K')\right)$,
and $\tilde{\mu} = \min \left( \tilde{\mu}_{\mu,\lambda}^{\alpha, \alpha} (K) , \tilde{\mu}_{\mu,\lambda}^{\alpha, \alpha} (K')  \right)$.

Assume that  $d_H( K,  K') < \epsilon$. 
{ If }
  \[
  \epsilon < \min \left( \frac{  \lambda^2 \alpha \tilde{\mu}}{16 R_{\max}^2} , \frac{\lambda^2}{16 \alpha \tilde{\mu}},\frac{9 \lambda^4 \alpha \tilde{\mu}^3}{400 R_{\max}^4},\left( \frac{ 2 \alpha} {C }\right)^2, \left(\frac{\lambda^2 \alpha \tilde{\mu}}{16 R_{\max}^2 C}\right)^2, \left( \frac{\lambda^2}{16 \alpha \tilde{\mu} C}  \right)^2 \right),
  \]
then the Gromov-Hausdorff  distance between $\ax_\lambda^\alpha(K)$ and $\ax_\lambda^\alpha(K')$ { with respect to the intrinsic metric}
is upper bounded by
\[
2  C^{\frac{3}{2}} \epsilon^\frac{1}{4}   +  2 C \epsilon^{\frac{1}{2}}    e^\frac{ C^\frac{3}{2} \epsilon^\frac{1}{4} }{\alpha} 
+ D \left( e^\frac{ C^\frac{3}{2} \epsilon^\frac{1}{4} }{\alpha}  - 1 \right) =O\left(\epsilon^\frac{1}{4} \right),
\]
 where 
\[
C = \frac{22}{3} \frac{R_{\max}^2 }{\alpha^{\frac{1}{2}}  \tilde{\mu}^{\frac{3}{2}} \lambda } 
\]
 and
\begin{align}
D &= \max \left(  \Gdiam( \ax_\lambda^\alpha(K)),  \Gdiam( \ax_\lambda^\alpha(K')) \right) 
\nonumber
<\infty.
\nonumber
\end{align}
\end{theorem}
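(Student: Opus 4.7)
The plan is to build an $\varepsilon$-relation $\mathcal{R} \subset \ax_\lambda^\alpha(K) \times \ax_\lambda^\alpha(K')$ in the sense of Definition \ref{definition:GromovHausdorffDistance}, following the geometric picture of Figure \ref{figure:PathPushedGamma}. The natural candidate, given Lemma \ref{lemma:AXLambdaHausdorffStableSymmetric}, is
\[
\mathcal{R} = \{(x,x') \in \ax_\lambda^\alpha(K) \times \ax_\lambda^\alpha(K') : \|x - x'\| < C \epsilon^{1/2}\}.
\]
Surjectivity of $\mathcal{R}$ onto each factor follows directly from \eqref{equation:HausdorffDistanceBetweenLambdaAlphaMA}, provided the smallness hypotheses on $\epsilon$ hold so that Lemma \ref{lemma:AXLambdaHausdorffStableSymmetric} applies in both directions. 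The remaining task is to bound the metric distortion.

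Fix pairs $(a,a'), (b,b') \in \mathcal{R}$, write $d_1,d_2$ for the intrinsic geodesic distances on $\ax_\lambda^\alpha(K)$ and $\ax_\lambda^\alpha(K')$ (which are finite by Theorem \ref{theorem:AXLambdaAlphaIsGeodesicAndFiniteDiameter}), and let $\gamma'$ be a minimizing geodesic from $a'$ to $b'$ in $\ax_\lambda^\alpha(K')$. Form the concatenated path
\[
\Gamma = \sigma_{aa'} \cdot \gamma' \cdot \sigma_{b'b},
\]
where $\sigma_{aa'}$ and $\sigma_{b'b}$ are Euclidean segments, each of length at most $C\epsilon^{1/2}$. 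The image of $\Gamma$ lies in the $C\epsilon^{1/2}$-neighborhood of $\ax_\lambda^\alpha(K)$: the segments, because their endpoints in $\ax_\lambda^\alpha(K)$ are at distance $< C\epsilon^{1/2}$; and $\gamma'$, because it lies in $\ax_\lambda^\alpha(K')$, itself contained in $(\ax_\lambda^\alpha(K))^{\oplus C\epsilon^{1/2}}$ by \eqref{equation:HausdorffDistanceBetweenLambdaAlphaMA}.

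Now apply Lemma \ref{lemma:EpsilonNeighborFlowsToLambda} with $\epsilon$ replaced by $C\epsilon^{1/2}$. After flow time $T = C \cdot (C\epsilon^{1/2})^{1/2} = C^{3/2}\epsilon^{1/4}$, the image of $\Gamma$ under the pushing operation lies entirely inside $\ax_\lambda^\alpha(K)$; the smallness conditions listed in the statement on $\epsilon$ are exactly what is needed to feed $C\epsilon^{1/2}$ into the hypotheses of Lemma \ref{lemma:EpsilonNeighborFlowsToLambda}. By Lemma \ref{lemma:PushedPathLengthExpansion}, the resulting pushed path $\PushedPath{\Gamma}{T}$ connects $a$ to $b$ inside $\ax_\lambda^\alpha(K)$ and has length at most
\[
2T + L(\Gamma) e^{T/\alpha} \le 2 C^{3/2}\epsilon^{1/4} + (2 C \epsilon^{1/2} + d_2(a',b')) e^{C^{3/2}\epsilon^{1/4}/\alpha}.
\]
Rearranging and using $d_2(a',b') \le D$,
\[
d_1(a,b) - d_2(a',b') \;\le\; 2 C^{3/2}\epsilon^{1/4} + 2 C \epsilon^{1/2} e^{C^{3/2}\epsilon^{1/4}/\alpha} + D\bigl(e^{C^{3/2}\epsilon^{1/4}/\alpha}-1\bigr).
\]
Exchanging the roles of $K$ and $K'$ (using the symmetric hypotheses of the theorem and Lemma \ref{lemma:AXLambdaHausdorffStableSymmetric} applied with the sets swapped) yields the reverse inequality, so $\mathcal{R}$ satisfies Condition (2) of Definition \ref{definition:GromovHausdorffDistance} with the stated bound, concluding the proof.

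The main obstacle I anticipate is bookkeeping the smallness conditions on $\epsilon$: the flow argument is applied at scale $C\epsilon^{1/2}$, so each hypothesis required by Lemmas \ref{lemma:AXLambdaHausdorffStableSymmetric} and \ref{lemma:EpsilonNeighborFlowsToLambda} (in particular the conditions of the form $\epsilon < \lambda^2 / (16\alpha\tilde{\mu})$ and $\epsilon < \left( 2\lambda - \delta\right)\delta / (8R_{\max})$) needs to be rewritten in terms of the outer $\epsilon$ after the substitution $\epsilon \mapsto C\epsilon^{1/2}$, which produces precisely the list of constraints in the statement. Everything else is a straightforward concatenation of the Hausdorff stability result with the flow-length estimate of Lemma \ref{lemma:PushedPathLengthExpansion}.
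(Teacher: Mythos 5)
Your proposal is correct and follows essentially the same route as the paper: the same relation $\mathcal{R}$ built from Lemma \ref{lemma:AXLambdaHausdorffStableSymmetric}, the same concatenated path $\sigma_{aa'}\cdot\gamma'\cdot\sigma_{b'b}$ pushed by the flow for time $C^{3/2}\epsilon^{1/4}$ via Lemma \ref{lemma:EpsilonNeighborFlowsToLambda}, and the same length bound from Lemma \ref{lemma:PushedPathLengthExpansion}, with symmetry giving the two-sided distortion estimate. The only detail you leave implicit is checking that $\Image(\Gamma)\subset\left(K^{\oplus\alpha}\right)^c$ (a hypothesis of Lemma \ref{lemma:PushedPathLengthExpansion}), which the paper verifies explicitly using $R_K(a),R_K(b)\geq \lambda+\alpha$, $d_H(K,K')<\epsilon<\lambda$, and the condition $\epsilon<\frac{9\lambda^4\alpha\tilde{\mu}^3}{400R_{\max}^4}$ so that $C\epsilon^{1/2}<\lambda$.
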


\begin{proof} 
In order to lower bound the Gromov-Hausdorff distance between $\ax_\lambda^\alpha(K)$
and $\ax_\lambda^\alpha(K')$
under the assumptions of Lemma  \ref{lemma:AXLambdaHausdorffStableSymmetric} we use Definition \ref{definition:GromovHausdorffDistance}.

Consider the  relation $\mathcal{R} \subset \ax_\lambda^\alpha(K) \times \ax_\lambda^\alpha(K')$ defined by:
\begin{equation}\label{equation:RelationForGromovHausdorff}
\mathcal{R}  = \left\{ (x, x') \in \ax_\lambda^\alpha(K) \times \ax_\lambda^\alpha(K'), \|x - x'\| <  C \epsilon^{\frac{1}{2}} \right\},
\end{equation} 
where $C$ is defined by \eqref{equation:DefinitionC}.

We know from Lemma \ref{lemma:AXLambdaHausdorffStableSymmetric} that this relation is surjective, which means that for any $x \in \ax_\lambda^\alpha(K)$
there is $x' \in \ax_\lambda^\alpha(K')$ such that $(x,x') \in \mathcal{R}$ and, reciprocally, 
if $x' \in \ax_\lambda^\alpha(K')$
there is $x \in \ax_\lambda^\alpha(K)$ such that $(x,x') \in \mathcal{R}$.

Consider $(a,a'), (b,b') \in \mathcal{R}$.
Since $K^c$ and $(K')^c$ are connected,  Theorem \ref{theorem:AXLambdaAlphaIsGeodesicAndFiniteDiameter} yields
that there are paths $\gamma : [0,L] \rightarrow  \ax_\lambda^\alpha(K)$ 
and $\gamma' : [0,L'] \rightarrow  \ax_\lambda^\alpha(K')$, parametrized by arc length such that:
\[
\gamma(0)=a, \: \gamma(L)=b,  \:  \gamma'(0)=a',  \:   \gamma'(L')=b'
\]
and  where $L$ and $L'$ are the respective
geodesic distances in $\ax_\lambda^\alpha(K)$ and  $\ax_\lambda^\alpha(K')$ between $a,b$ and $a',b'$, that is,
\begin{align}
L = d_{\ax_\lambda^\alpha(K)} (a,b) \quad \text{and} \quad L' = d_{\ax_\lambda^\alpha(K')} (a',b') . 
\nonumber
\end{align} 
Denote  by $\sigma_{aa'}$ and $\sigma_{b'b}$ the linear path from $a$ to $a'$ and from
$b'$ to $b$, respectively. By definition of the relation $\mathcal{R}$, we have that their lengths are upper bounded by $C \epsilon^{\frac{1}{2}} $
and, since $R_K(a), R_K(b) \geq \lambda + \alpha$, we have the inclusions 
$\Image(\sigma_{aa'}), \Image(\sigma_{b'b}) \subset \left( K^{\oplus \lambda + \alpha - C \epsilon^{\frac{1}{2}} }\right)^c$.

From now on we assume:
\begin{equation}\label{equation:FourthBoundOnEpsilon}
\epsilon < \frac{\lambda^2}{C^2} = \frac{9 \lambda^4 \alpha \tilde{\mu}^3}{400 R_{\max}^4}
\end{equation}
then $C \epsilon^{\frac{1}{2}} < \lambda$ and it follows that  $\Image(\sigma_{aa'}), \Image(\sigma_{b'b})\subset \left( K^{\oplus \alpha}\right)^c$.
Also $\Image(\gamma') \subset \ax_\lambda^\alpha(K') \subset \left( K'^{\oplus \lambda + \alpha }\right)^c$.
Because $\alpha, \lambda  < R_{\max}$ and $\tilde{\mu} \leq 1$, \eqref{equation:FourthBoundOnEpsilon} implies $\epsilon < \lambda$, and $d_H(K,K') < \epsilon$ gives
$\left( K'^{\oplus \lambda + \alpha }\right)^c \subset \left( K^{\oplus \alpha }\right)^c$ and we get $\Image(\gamma') \subset \left( K^{\oplus \alpha}\right)^c$.

From \eqref{equation:HausdorffDistanceBetweenLambdaAlphaMA} 
one has as well 
$\Image(\gamma')\subset  \ax_\lambda^\alpha (K') \subset    \left(\ax_\lambda^\alpha (K) \right)^{\oplus C \epsilon^{\frac{1}{2}}}$ and
$a,b \in \ax_\lambda^\alpha (K)$ with $(a,a'), (b,b') \in \mathcal{R}$ gives as well \newline
$\Image(\sigma_{aa'}), \Image(\sigma_{b'b})\subset   \ax_\lambda^\alpha \left( K\right)^{\oplus C \epsilon^{\frac{1}{2}}}$.

Now consider the path $\Gamma$ from $a$ to $b$, which we define as the concatenation of $\sigma_{aa'}$, $\gamma'$ and $\sigma_{b'b}$.
One has,
\begin{equation}\label{equation:LengthGAMMA}
\length \left(\Image(\Gamma) \right) < L' + 2 C \epsilon^{\frac{1}{2}} .
\end{equation}
$\Gamma$  is included in $\left( K^{\oplus \alpha}\right)^c$ and $\left(\ax_\lambda^\alpha (K)\right)^{\oplus C \epsilon^{\frac{1}{2}}}$, that is, 
\begin{equation}\label{equation:GammaIncludedInTubularNeihghborAXLambdaAlpha}
\Image(\Gamma) \subset \left( K^{\oplus \alpha}\right)^c  \cap \left(\ax_\lambda^\alpha (K)\right)^{\oplus C \epsilon^{\frac{1}{2}}}. 
\end{equation}
For $T \geq 0$, we consider now the path $\PushedPath{\Gamma}{T}$,
 connecting $a$ to $b$
according to Definition \ref{definition:PushedPaths}.

Intuitively, the path $\PushedPath{\Gamma}{T}$ can be visualized as the path $\Gamma$ ``pushed'' during a time $T$ by the flow $\Phi_K$
while ``holding'' the end points $a$ and $b$ as shown in Figure \ref{figure:PathPushedGamma}.

Using \eqref{equation:GammaIncludedInTubularNeihghborAXLambdaAlpha},
if $\delta' = 2 \sqrt{ \alpha \tilde{\mu} C \epsilon^{\frac{1}{2}}}<   \lambda$ and 
\[ C \epsilon^{\frac{1}{2}}  < \min \left(2 \alpha,   \frac{\left( 2 \lambda -\delta'\right)  \delta'}{8 R_{\max}} , \frac{\lambda^2}{16 \alpha \tilde{\mu}} \right)\] 
we can apply Lemma \ref{lemma:EpsilonNeighborFlowsToLambda} with $\epsilon $ replaced by $C \epsilon^\frac{1}{2}$.

Note that the condition:
$\delta' = 2 \sqrt{ \alpha \tilde{\mu} C \epsilon^{\frac{1}{2}}}<   \lambda$ and 
 \newline $C \epsilon^{\frac{1}{2}}  < \min \left(2 \alpha,   \frac{\left( 2 \lambda -\delta'\right)  \delta'}{8 R_{\max}} , \frac{\lambda^2}{16 \alpha \tilde{\mu}} \right)$
 is implied by
 \begin{equation}\label{equation:complementaryConditionsOnEpsilon}
 \epsilon < \min \left(  \left( \frac{ 2 \alpha} {C }\right)^2, \left(\frac{\lambda^2 \alpha \tilde{\mu}}{16 R_{\max}^2 C}\right)^2, \left( \frac{\lambda^2}{16 \alpha \tilde{\mu} C}  \right)^2 \right),
 \end{equation}
so that Lemma  \ref{lemma:EpsilonNeighborFlowsToLambda} gives us:
\[
\Phi_K \left( C  \left(C \epsilon^\frac{1}{2} \right)^\frac{1}{2} , 
\Image(\Gamma) \right) \subset \ax_\lambda^\alpha (K),
\]
that is, together with  \eqref{equation:AXLambdaAlphaStableUnderFlow}, 
\[
\Image \left( \PushedPath{\Gamma}{C^\frac{3}{2} \epsilon^\frac{1}{4}} \right)  \subset \ax_\lambda^\alpha (K).
\]
Since $\PushedPath{\Gamma}{C^\frac{3}{2} \epsilon^\frac{1}{4}} $ is a path from $a \in \ax_\lambda^\alpha (K)$
to $b \in \ax_\lambda^\alpha (K)$ inside 
$\ax_\lambda^\alpha (K)$, one has
\[
\length \left(  \PushedPath{\Gamma}{C^\frac{3}{2} \epsilon^\frac{1}{4}}  \right) \geq L = d_{\ax_\lambda^\alpha (K)} (a,b).
\]
Because by \eqref{equation:GammaIncludedInTubularNeihghborAXLambdaAlpha}
$\Gamma$ lies in $\left( K^{\oplus \alpha}\right)^c$, 
 Lemma \ref{lemma:PushedPathLengthExpansion} can be applied which gives, using \eqref{equation:LengthGAMMA}:
 
\begin{align*}
\length \left( \PushedPath{\Gamma}{C^\frac{3}{2} \epsilon^\frac{1}{4}} \right) &\leq  2  C^{\frac{3}{2}} \epsilon^\frac{1}{4}   + \left( L' + 2 C \epsilon^{\frac{1}{2}}  \right) e^\frac{ C^\frac{3}{2} \epsilon^\frac{1}{4} }{\alpha} \\
&=  L' + 2  C^{\frac{3}{2}} \epsilon^\frac{1}{4}   + 2 C \epsilon^{\frac{1}{2}} e^\frac{ C^\frac{3}{2} \epsilon^\frac{1}{4} }{\alpha} 
+ L'  \left( e^\frac{ C^\frac{3}{2} \epsilon^\frac{1}{4} }{\alpha}  - 1 \right).
\end{align*}
Since,  by Theorem \ref{theorem:AXLambdaAlphaIsGeodesicAndFiniteDiameter}, $L' \leq  \Gdiam( \ax_\lambda^\alpha(K')) < \infty$, one has
\begin{equation*}
L \leq  L' + 2  C^{\frac{3}{2}} \epsilon^\frac{1}{4}   +  2 C \epsilon^{\frac{1}{2}}    e^\frac{ C^\frac{3}{2} \epsilon^\frac{1}{4} }{\alpha} + \Gdiam( \ax_\lambda^\alpha(K')) \left( e^\frac{ C^\frac{3}{2} \epsilon^\frac{1}{4} }{\alpha}  - 1 \right).
\end{equation*}
Because we make, symmetrically, the same assumptions on $K$ and $K'$ one has,
\begin{align*}
(a,a'), (b,b') \in \mathcal{R} \Rightarrow&\left| d_{\ax_\lambda^\alpha(K)} (a,b) -  d_{\ax_\lambda^\alpha(K')} (a',b') \right| \\
& \leq 2  C^{\frac{3}{2}} \epsilon^\frac{1}{4}   +  2 C \epsilon^{\frac{1}{2}}    e^\frac{ C^\frac{3}{2} \epsilon^\frac{1}{4} }{\alpha} + D\left( e^\frac{ C^\frac{3}{2} \epsilon^\frac{1}{4} }{\alpha}  - 1 \right),
\end{align*}
where
\[
D = \max \left(  \Gdiam( \ax_\lambda^\alpha(K)),  \Gdiam( \ax_\lambda^\alpha(K')) \right).
\]
By Corollary \ref{corollary:AXLambdaAlphaIsGeodesicAndFiniteDiameter} (using the assumptions $r_\mu^{\alpha'} (K) > \alpha + \lambda $ and
$r_\mu^{\alpha'} (K') > \alpha + \lambda $), we know that $D< \infty$.
\end{proof}

\begin{acks}
We are greatly indebted to Erin Chambers for posing a number of questions that eventually led to this paper. We would also like to thank the other organizers of the workshop on `Algorithms for the medial axis'.  We are also indebted to Tatiana Ezubova for helping with the search for and translation of Russian literature.
The second author thanks all members of the Edelsbrunner and Datashape groups for the atmosphere in which the research was conducted. 

The research leading to these results has received funding from the European Research Council (ERC) under  the  European  Union's  Seventh  Framework  Programme (FP/2007-2013)  /  ERC  Grant Agreement No. 339025 GUDHI (Algorithmic Foundations of Geometry Understanding in Higher Dimensions).
Supported by the European Union's Horizon 2020 research and innovation programme under the Marie Sk{\l}odowska-Curie grant agreement No. 754411. The Austrian science fund (FWF) M-3073.
\end{acks}


\balance
\bibliographystyle{ACM-Reference-Format}
\bibliography{geomrefs}


\begin{thebibliography}{59}


\ifx \showCODEN    \undefined \def \showCODEN     #1{\unskip}     \fi
\ifx \showDOI      \undefined \def \showDOI       #1{#1}\fi
\ifx \showISBNx    \undefined \def \showISBNx     #1{\unskip}     \fi
\ifx \showISBNxiii \undefined \def \showISBNxiii  #1{\unskip}     \fi
\ifx \showISSN     \undefined \def \showISSN      #1{\unskip}     \fi
\ifx \showLCCN     \undefined \def \showLCCN      #1{\unskip}     \fi
\ifx \shownote     \undefined \def \shownote      #1{#1}          \fi
\ifx \showarticletitle \undefined \def \showarticletitle #1{#1}   \fi
\ifx \showURL      \undefined \def \showURL       {\relax}        \fi
\providecommand\bibfield[2]{#2}
\providecommand\bibinfo[2]{#2}
\providecommand\natexlab[1]{#1}
\providecommand\showeprint[2][]{arXiv:#2}

\bibitem[Abramsky and Jung(1994)]%
        {abramsky1994domain}
\bibfield{author}{\bibinfo{person}{Samson Abramsky} {and}
  \bibinfo{person}{Achim Jung}.} \bibinfo{year}{1994}\natexlab{}.
\newblock \showarticletitle{Domain theory}.
\newblock In \bibinfo{booktitle}{\emph{Handbook of logic in computer science
  (vol. 3) semantic structures}}, \bibfield{editor}{\bibinfo{person}{Samson
  Abramsky}, \bibinfo{person}{Dov~M. Gabbay}, {and} \bibinfo{person}{T.S.E.
  Maibaum}} (Eds.). \bibinfo{publisher}{Clarendon Press, Oxford},
  \bibinfo{pages}{1--168}.
\newblock
\urldef\tempurl%
\url{http://www.cs.bham.ac.uk/˜axj/papers.html}
\showURL{%
\tempurl}


\bibitem[Alekseev(1970)]%
        {Alekseev}
\bibfield{author}{\bibinfo{person}{V.M. Alekseev}.}
  \bibinfo{year}{1970}\natexlab{}.
\newblock \showarticletitle{A theorem on an integral inequality and some of its
  applications}. In \bibinfo{booktitle}{\emph{Thirteen Papers on Differential
  Equations}}, \bibfield{editor}{\bibinfo{person}{Alekseev et~al.}} (Ed.).
  \bibinfo{publisher}{American Mathematical Society}, \bibinfo{pages}{61--88}.
\newblock


\bibitem[Amadio and Curien(1998)]%
        {amadio1998domains}
\bibfield{author}{\bibinfo{person}{Roberto~M Amadio} {and}
  \bibinfo{person}{Pierre-Louis Curien}.} \bibinfo{year}{1998}\natexlab{}.
\newblock \bibinfo{booktitle}{\emph{Domains and lambda-calculi}}.
\newblock Number~46. \bibinfo{publisher}{Cambridge University Press}.
\newblock


\bibitem[Amenta et~al\mbox{.}(1998)]%
        {amenta1998crust}
\bibfield{author}{\bibinfo{person}{Nina Amenta}, \bibinfo{person}{Marshall
  Bern}, {and} \bibinfo{person}{David Eppstein}.}
  \bibinfo{year}{1998}\natexlab{}.
\newblock \showarticletitle{The crust and the $\beta$-skeleton: Combinatorial
  curve reconstruction}.
\newblock \bibinfo{journal}{\emph{Graphical Models and Image Processing}}
  \bibinfo{volume}{60}, \bibinfo{number}{2} (\bibinfo{year}{1998}),
  \bibinfo{pages}{125--135}.
\newblock
\showISSN{1077-3169}
\urldef\tempurl%
\url{https://doi.org/10.1006/gmip.1998.0465}
\showDOI{\tempurl}


\bibitem[Amenta et~al\mbox{.}(2001)]%
        {amenta2001power}
\bibfield{author}{\bibinfo{person}{Nina Amenta}, \bibinfo{person}{Sunghee
  Choi}, {and} \bibinfo{person}{Ravi~Krishna Kolluri}.}
  \bibinfo{year}{2001}\natexlab{}.
\newblock \showarticletitle{The power crust}. In
  \bibinfo{booktitle}{\emph{Proceedings of the sixth ACM symposium on Solid
  modeling and applications}}. \bibinfo{pages}{249--266}.
\newblock
\urldef\tempurl%
\url{https://doi.org/10.1145/376957.376986}
\showDOI{\tempurl}


\bibitem[Attali et~al\mbox{.}(2009)]%
        {Attali2009}
\bibfield{author}{\bibinfo{person}{Dominique Attali},
  \bibinfo{person}{Jean-Daniel Boissonnat}, {and} \bibinfo{person}{Herbert
  Edelsbrunner}.} \bibinfo{year}{2009}\natexlab{}.
\newblock \showarticletitle{Stability and Computation of Medial Axes - a
  State-of-the-Art Report}. In \bibinfo{booktitle}{\emph{Mathematical
  Foundations of Scientific Visualization, Computer Graphics, and Massive Data
  Exploration}}, \bibfield{editor}{\bibinfo{person}{Torsten M{\"o}ller},
  \bibinfo{person}{Bernd Hamann}, {and} \bibinfo{person}{Robert~D. Russell}}
  (Eds.). \bibinfo{publisher}{Springer Berlin Heidelberg},
  \bibinfo{address}{Berlin, Heidelberg}, \bibinfo{pages}{109--125}.
\newblock
\showISBNx{978-3-540-49926-8}
\urldef\tempurl%
\url{https://doi.org/10.1007/b106657_6}
\showDOI{\tempurl}


\bibitem[Attali and Montanvert(1996)]%
        {attali1996modeling}
\bibfield{author}{\bibinfo{person}{Dominique Attali} {and}
  \bibinfo{person}{Annick Montanvert}.} \bibinfo{year}{1996}\natexlab{}.
\newblock \showarticletitle{Modeling noise for a better simplification of
  skeletons}. In \bibinfo{booktitle}{\emph{Proceedings of 3rd IEEE
  International Conference on Image Processing}}, Vol.~\bibinfo{volume}{3}.
  IEEE, \bibinfo{pages}{13--16}.
\newblock
\urldef\tempurl%
\url{https://doi.org/10.1109/ICIP.1996.560357}
\showDOI{\tempurl}


\bibitem[Attali and Montanvert(1997)]%
        {attali1997computing}
\bibfield{author}{\bibinfo{person}{Dominique Attali} {and}
  \bibinfo{person}{Annick Montanvert}.} \bibinfo{year}{1997}\natexlab{}.
\newblock \showarticletitle{Computing and simplifying 2{D} and 3{D} continuous
  skeletons}.
\newblock \bibinfo{journal}{\emph{Computer vision and image understanding}}
  \bibinfo{volume}{67}, \bibinfo{number}{3} (\bibinfo{year}{1997}),
  \bibinfo{pages}{261--273}.
\newblock
\urldef\tempurl%
\url{https://doi.org/10.1006/cviu.1997.0536}
\showDOI{\tempurl}


\bibitem[Bangert(1982)]%
        {bangert1982sets}
\bibfield{author}{\bibinfo{person}{Victor Bangert}.}
  \bibinfo{year}{1982}\natexlab{}.
\newblock \showarticletitle{Sets with positive reach}.
\newblock \bibinfo{journal}{\emph{Archiv der Mathematik}} \bibinfo{volume}{38},
  \bibinfo{number}{1} (\bibinfo{year}{1982}), \bibinfo{pages}{54--57}.
\newblock
\urldef\tempurl%
\url{https://doi.org/10.1007/BF01304757}
\showDOI{\tempurl}


\bibitem[Battenfeld(2008)]%
        {battenfeld2008topological}
\bibfield{author}{\bibinfo{person}{Ingo Battenfeld}.}
  \bibinfo{year}{2008}\natexlab{}.
\newblock \emph{\bibinfo{title}{Topological domain theory}}.
\newblock \bibinfo{thesistype}{Ph.\,D. Dissertation}.
  \bibinfo{school}{University of Edinburgh}.
\newblock


\bibitem[Blanc-Beyne et~al\mbox{.}(2018)]%
        {blanc2018salience}
\bibfield{author}{\bibinfo{person}{Thibault Blanc-Beyne},
  \bibinfo{person}{G{\'e}raldine Morin}, \bibinfo{person}{Kathryn Leonard},
  \bibinfo{person}{Stefanie Hahmann}, {and} \bibinfo{person}{Axel Carlier}.}
  \bibinfo{year}{2018}\natexlab{}.
\newblock \showarticletitle{A salience measure for 3{D} shape decomposition and
  sub-parts classification}.
\newblock \bibinfo{journal}{\emph{Graphical Models}}  \bibinfo{volume}{99}
  (\bibinfo{year}{2018}), \bibinfo{pages}{22--30}.
\newblock
\urldef\tempurl%
\url{https://doi.org/10.1016/j.gmod.2018.07.003}
\showDOI{\tempurl}


\bibitem[Boissonnat et~al\mbox{.}(2018)]%
        {boissonnat2018geometric}
\bibfield{author}{\bibinfo{person}{Jean-Daniel Boissonnat},
  \bibinfo{person}{Fr{\'e}d{\'e}ric Chazal}, {and} \bibinfo{person}{Mariette
  Yvinec}.} \bibinfo{year}{2018}\natexlab{}.
\newblock \bibinfo{booktitle}{\emph{Geometric and topological inference}}.
  \bibinfo{series}{Cambridge texts in applied mathematics},
  Vol.~\bibinfo{volume}{57}.
\newblock \bibinfo{publisher}{Cambridge University Press}.
\newblock


\bibitem[Boissonnat and Wintraecken(2023)]%
        {JDM2022}
\bibfield{author}{\bibinfo{person}{Jean-Daniel Boissonnat} {and}
  \bibinfo{person}{Mathijs Wintraecken}.} \bibinfo{year}{2023+}\natexlab{}.
\newblock \showarticletitle{The reach of subsets of manifolds}.
\newblock \bibinfo{journal}{\emph{Applied and Computational Topology
  (accepted)}} (\bibinfo{year}{2023+}).
\newblock


\bibitem[Brattka et~al\mbox{.}(2008)]%
        {brattka2008tutorial}
\bibfield{author}{\bibinfo{person}{Vasco Brattka}, \bibinfo{person}{Peter
  Hertling}, {and} \bibinfo{person}{Klaus Weihrauch}.}
  \bibinfo{year}{2008}\natexlab{}.
\newblock \bibinfo{booktitle}{\emph{A Tutorial on Computable Analysis}}.
\newblock \bibinfo{publisher}{Springer New York}, \bibinfo{address}{New York,
  NY}, \bibinfo{pages}{425--491}.
\newblock
\showISBNx{978-0-387-68546-5}
\urldef\tempurl%
\url{https://doi.org/10.1007/978-0-387-68546-5_18}
\showDOI{\tempurl}


\bibitem[Bridson and Haefliger(2013)]%
        {bridson2013metric}
\bibfield{author}{\bibinfo{person}{Martin~R Bridson} {and}
  \bibinfo{person}{Andr{\'e} Haefliger}.} \bibinfo{year}{2013}\natexlab{}.
\newblock \bibinfo{booktitle}{\emph{Metric spaces of non-positive curvature}}.
  \bibinfo{series}{Grundlehren der mathematischen Wissenschaften},
  Vol.~\bibinfo{volume}{319}.
\newblock \bibinfo{publisher}{Springer Berlin, Heidelberg}.
\newblock
\urldef\tempurl%
\url{https://doi.org/10.1007/978-3-662-12494-9}
\showDOI{\tempurl}


\bibitem[Buchner(1977)]%
        {buchner1977stability}
\bibfield{author}{\bibinfo{person}{Michael~A Buchner}.}
  \bibinfo{year}{1977}\natexlab{}.
\newblock \showarticletitle{Stability of the cut locus in dimensions less than
  or equal to 6}.
\newblock \bibinfo{journal}{\emph{Inventiones mathematicae}}
  \bibinfo{volume}{43}, \bibinfo{number}{3} (\bibinfo{year}{1977}),
  \bibinfo{pages}{199--231}.
\newblock
\urldef\tempurl%
\url{https://doi.org/10.1007/BF01390080}
\showDOI{\tempurl}


\bibitem[Buchner(1978)]%
        {Buchner1977Structure}
\bibfield{author}{\bibinfo{person}{Michael~A. Buchner}.}
  \bibinfo{year}{1978}\natexlab{}.
\newblock \showarticletitle{The structure of the cut locus in dimension less
  than or equal to six}.
\newblock \bibinfo{journal}{\emph{Compositio Mathematica}}
  \bibinfo{volume}{37}, \bibinfo{number}{1} (\bibinfo{year}{1978}),
  \bibinfo{pages}{103--119}.
\newblock
\urldef\tempurl%
\url{http://www.numdam.org/item/CM_1978__37_1_103_0/}
\showURL{%
\tempurl}


\bibitem[Chambers et~al\mbox{.}(2022)]%
        {Burning}
\bibfield{author}{\bibinfo{person}{Erin Chambers}, \bibinfo{person}{Christopher
  Fillmore}, \bibinfo{person}{Elizabeth Stephenson}, {and}
  \bibinfo{person}{Mathijs Wintraecken}.} \bibinfo{year}{2022}\natexlab{}.
\newblock \showarticletitle{Video: {A Cautionary Tale: Burning the Medial Axis
  Is Unstable}}. In \bibinfo{booktitle}{\emph{38th International Symposium on
  Computational Geometry (SoCG 2022)}} \emph{(\bibinfo{series}{Leibniz
  International Proceedings in Informatics (LIPIcs)},
  Vol.~\bibinfo{volume}{224})}, \bibfield{editor}{\bibinfo{person}{Xavier
  Goaoc} {and} \bibinfo{person}{Michael Kerber}} (Eds.).
  \bibinfo{publisher}{Schloss Dagstuhl -- Leibniz-Zentrum f{\"u}r Informatik},
  \bibinfo{address}{Dagstuhl, Germany}, \bibinfo{pages}{66:1--66:9}.
\newblock
\showISBNx{978-3-95977-227-3}
\showISSN{1868-8969}
\urldef\tempurl%
\url{https://doi.org/10.4230/LIPIcs.SoCG.2022.66}
\showDOI{\tempurl}
\newblock
\shownote{https://youtu.be/CFmFP6CHVEk}.


\bibitem[Chazal et~al\mbox{.}(2009)]%
        {chazal2009sampling}
\bibfield{author}{\bibinfo{person}{F. Chazal}, \bibinfo{person}{D.
  Cohen-Steiner}, {and} \bibinfo{person}{A. Lieutier}.}
  \bibinfo{year}{2009}\natexlab{}.
\newblock \showarticletitle{{A sampling theory for compact sets in Euclidean
  space}}.
\newblock \bibinfo{journal}{\emph{Discrete and Computational Geometry}}
  \bibinfo{volume}{41}, \bibinfo{number}{3} (\bibinfo{year}{2009}),
  \bibinfo{pages}{461--479}.
\newblock
\urldef\tempurl%
\url{https://doi.org/10.1007/s00454-009-9144-8}
\showDOI{\tempurl}


\bibitem[Chazal and Lieutier(2005a)]%
        {cl2005lambda}
\bibfield{author}{\bibinfo{person}{F. Chazal} {and} \bibinfo{person}{A.
  Lieutier}.} \bibinfo{year}{2005}\natexlab{a}.
\newblock \showarticletitle{{The $\lambda$-medial axis}}.
\newblock \bibinfo{journal}{\emph{Graphical Models}} \bibinfo{volume}{67},
  \bibinfo{number}{4} (\bibinfo{year}{2005}), \bibinfo{pages}{304--331}.
\newblock
\urldef\tempurl%
\url{https://doi.org/10.1016/j.gmod.2005.01.002}
\showDOI{\tempurl}


\bibitem[Chazal and Lieutier(2005b)]%
        {chazal2005weak}
\bibfield{author}{\bibinfo{person}{Fr{\'e}d{\'e}ric Chazal} {and}
  \bibinfo{person}{Andr{\'e} Lieutier}.} \bibinfo{year}{2005}\natexlab{b}.
\newblock \showarticletitle{Weak feature size and persistent homology:
  computing homology of solids in $\mathbb{R}^n$ from noisy data samples}. In
  \bibinfo{booktitle}{\emph{Proceedings of the twenty-first annual symposium on
  Computational geometry}}. \bibinfo{pages}{255--262}.
\newblock
\urldef\tempurl%
\url{https://doi.org/10.1145/1064092.1064132}
\showDOI{\tempurl}


\bibitem[Clarke(1990)]%
        {Clarke1990}
\bibfield{author}{\bibinfo{person}{Frank~H. Clarke}.}
  \bibinfo{year}{1990}\natexlab{}.
\newblock \bibinfo{booktitle}{\emph{Optimization and Nonsmooth Analysis}}.
  \bibinfo{series}{Classics in applied mathematics}, Vol.~\bibinfo{volume}{5}.
\newblock \bibinfo{publisher}{SIAM}.
\newblock
\showISBNx{0-89871-256-4}


\bibitem[Dey and Sun(2006)]%
        {dey2006defining}
\bibfield{author}{\bibinfo{person}{Tamal~K. Dey} {and} \bibinfo{person}{Jian
  Sun}.} \bibinfo{year}{2006}\natexlab{}.
\newblock \showarticletitle{Defining and Computing Curve-Skeletons with Medial
  Geodesic Function}. In \bibinfo{booktitle}{\emph{Proceedings of the Fourth
  Eurographics Symposium on Geometry Processing}} (Cagliari, Sardinia, Italy)
  \emph{(\bibinfo{series}{SGP '06})}. \bibinfo{publisher}{Eurographics
  Association}, \bibinfo{address}{Goslar, DEU}, \bibinfo{pages}{143–152}.
\newblock
\showISBNx{3905673363}


\bibitem[Dey and Zhao(2004)]%
        {dey2004approximating}
\bibfield{author}{\bibinfo{person}{Tamal~K Dey} {and} \bibinfo{person}{Wulue
  Zhao}.} \bibinfo{year}{2004}\natexlab{}.
\newblock \showarticletitle{Approximating the medial axis from the {V}oronoi
  diagram with a convergence guarantee}.
\newblock \bibinfo{journal}{\emph{Algorithmica}} \bibinfo{volume}{38},
  \bibinfo{number}{1} (\bibinfo{year}{2004}), \bibinfo{pages}{179--200}.
\newblock
\urldef\tempurl%
\url{https://doi.org/10.1007/s00453-003-1049-y}
\showDOI{\tempurl}


\bibitem[Duistermaat and Kolk(2004)]%
        {DuistermaatKolkIntegration}
\bibfield{author}{\bibinfo{person}{J.J. Duistermaat} {and}
  \bibinfo{person}{J.A.C. Kolk}.} \bibinfo{year}{2004}\natexlab{}.
\newblock \bibinfo{booktitle}{\emph{Multivariable Real Analysis II:
  Integration}}.
\newblock \bibinfo{publisher}{Cambridge University Press}.
\newblock


\bibitem[Edalat and Heckmann(1998)]%
        {edalat1998computational}
\bibfield{author}{\bibinfo{person}{Abbas Edalat} {and}
  \bibinfo{person}{Reinhold Heckmann}.} \bibinfo{year}{1998}\natexlab{}.
\newblock \showarticletitle{A computational model for metric spaces}.
\newblock \bibinfo{journal}{\emph{Theoretical Computer Science}}
  \bibinfo{volume}{193}, \bibinfo{number}{1-2} (\bibinfo{year}{1998}),
  \bibinfo{pages}{53--73}.
\newblock
\urldef\tempurl%
\url{https://doi.org/10.1016/S0304-3975(96)00243-5}
\showDOI{\tempurl}


\bibitem[Federer(1959)]%
        {Federer}
\bibfield{author}{\bibinfo{person}{H. Federer}.}
  \bibinfo{year}{1959}\natexlab{}.
\newblock \showarticletitle{Curvature measures}.
\newblock \bibinfo{journal}{\emph{Trans. Amer. Math. Soc.}}
  \bibinfo{volume}{93} (\bibinfo{year}{1959}), \bibinfo{pages}{418--491}.
\newblock
\urldef\tempurl%
\url{https://doi.org/10.1090/S0002-9947-1959-0110078-1}
\showDOI{\tempurl}


\bibitem[Foskey et~al\mbox{.}(2003)]%
        {foskey2003efficient}
\bibfield{author}{\bibinfo{person}{Mark Foskey}, \bibinfo{person}{Ming~C Lin},
  {and} \bibinfo{person}{Dinesh Manocha}.} \bibinfo{year}{2003}\natexlab{}.
\newblock \showarticletitle{Efficient computation of a simplified medial axis}.
\newblock \bibinfo{journal}{\emph{J. Comput. Inf. Sci. Eng.}}
  \bibinfo{volume}{3}, \bibinfo{number}{4} (\bibinfo{year}{2003}),
  \bibinfo{pages}{274--284}.
\newblock
\urldef\tempurl%
\url{https://doi.org/10.1145/781606.781623}
\showDOI{\tempurl}


\bibitem[Giesen and John(2008)]%
        {giesen2008flow}
\bibfield{author}{\bibinfo{person}{Joachim Giesen} {and}
  \bibinfo{person}{Matthias John}.} \bibinfo{year}{2008}\natexlab{}.
\newblock \showarticletitle{The flow complex: A data structure for geometric
  modeling}.
\newblock \bibinfo{journal}{\emph{Computational Geometry}}
  \bibinfo{volume}{39}, \bibinfo{number}{3} (\bibinfo{year}{2008}),
  \bibinfo{pages}{178--190}.
\newblock
\urldef\tempurl%
\url{https://doi.org/10.1016/j.comgeo.2007.01.002}
\showDOI{\tempurl}


\bibitem[Giesen et~al\mbox{.}(2009)]%
        {giesen2009scale}
\bibfield{author}{\bibinfo{person}{Joachim Giesen}, \bibinfo{person}{Balint
  Miklos}, \bibinfo{person}{Mark Pauly}, {and} \bibinfo{person}{Camille
  Wormser}.} \bibinfo{year}{2009}\natexlab{}.
\newblock \showarticletitle{The Scale Axis Transform}. In
  \bibinfo{booktitle}{\emph{Proceedings of the Twenty-Fifth Annual Symposium on
  Computational Geometry}} (Aarhus, Denmark). \bibinfo{publisher}{Association
  for Computing Machinery}, \bibinfo{address}{New York, NY, USA},
  \bibinfo{pages}{106–115}.
\newblock
\showISBNx{9781605585017}
\urldef\tempurl%
\url{https://doi.org/10.1145/1542362.1542388}
\showDOI{\tempurl}


\bibitem[Grzegorczyk(1955)]%
        {grzegorczyk1955computable}
\bibfield{author}{\bibinfo{person}{Andrzej Grzegorczyk}.}
  \bibinfo{year}{1955}\natexlab{}.
\newblock \showarticletitle{Computable functionals}.
\newblock \bibinfo{journal}{\emph{Fundamenta Mathematicae}}
  \bibinfo{volume}{42}, \bibinfo{number}{168-202} (\bibinfo{year}{1955}),
  \bibinfo{pages}{3}.
\newblock


\bibitem[Heil(2019)]%
        {heil2019introduction}
\bibfield{author}{\bibinfo{person}{Christopher Heil}.}
  \bibinfo{year}{2019}\natexlab{}.
\newblock \bibinfo{booktitle}{\emph{Introduction to Real Analysis}}.
  Vol.~\bibinfo{volume}{280}.
\newblock \bibinfo{publisher}{Springer}.
\newblock


\bibitem[Hirsch(1976)]%
        {Hirsch1976}
\bibfield{author}{\bibinfo{person}{M.W. Hirsch}.}
  \bibinfo{year}{1976}\natexlab{}.
\newblock \bibinfo{booktitle}{\emph{Differential Topology}}.
\newblock \bibinfo{publisher}{Springer-Verlag: New York, Heidelberg, Berlin}.
\newblock


\bibitem[Kim et~al\mbox{.}(2020)]%
        {kim2019homotopy}
\bibfield{author}{\bibinfo{person}{Jisu Kim}, \bibinfo{person}{Jaehyeok Shin},
  \bibinfo{person}{Fr{\'e}d{\'e}ric Chazal}, \bibinfo{person}{Alessandro
  Rinaldo}, {and} \bibinfo{person}{Larry Wasserman}.}
  \bibinfo{year}{2020}\natexlab{}.
\newblock \showarticletitle{{Homotopy Reconstruction via the Cech Complex and
  the Vietoris-Rips Complex}}. In \bibinfo{booktitle}{\emph{36th International
  Symposium on Computational Geometry (SoCG 2020)}}
  \emph{(\bibinfo{series}{Leibniz International Proceedings in Informatics
  (LIPIcs)}, Vol.~\bibinfo{volume}{164})},
  \bibfield{editor}{\bibinfo{person}{Sergio Cabello} {and}
  \bibinfo{person}{Danny~Z. Chen}} (Eds.). \bibinfo{publisher}{Schloss
  Dagstuhl--Leibniz-Zentrum f{\"u}r Informatik}, \bibinfo{address}{Dagstuhl,
  Germany}, \bibinfo{pages}{54:1--54:19}.
\newblock
\showISBNx{978-3-95977-143-6}
\showISSN{1868-8969}
\urldef\tempurl%
\url{https://doi.org/10.4230/LIPIcs.SoCG.2020.54}
\showDOI{\tempurl}
\newblock
\shownote{Full version: arXiv:1903.06955}.


\bibitem[Kleinjohann(1980)]%
        {kleinjohann1980convexity}
\bibfield{author}{\bibinfo{person}{Norbert Kleinjohann}.}
  \bibinfo{year}{1980}\natexlab{}.
\newblock \showarticletitle{Convexity and the unique footpoint property in
  {R}iemannian geometry}.
\newblock \bibinfo{journal}{\emph{Archiv der Mathematik}} \bibinfo{volume}{35},
  \bibinfo{number}{1} (\bibinfo{year}{1980}), \bibinfo{pages}{574--582}.
\newblock
\urldef\tempurl%
\url{https://doi.org/10.1007/BF01235383}
\showDOI{\tempurl}


\bibitem[Kleinjohann(1981)]%
        {kleinjohann1981nachste}
\bibfield{author}{\bibinfo{person}{Norbert Kleinjohann}.}
  \bibinfo{year}{1981}\natexlab{}.
\newblock \showarticletitle{N{\"a}chste {P}unkte in der {R}iemannschen
  {G}eometrie}.
\newblock \bibinfo{journal}{\emph{Mathematische Zeitschrift}}
  \bibinfo{volume}{176}, \bibinfo{number}{3} (\bibinfo{year}{1981}),
  \bibinfo{pages}{327--344}.
\newblock
\urldef\tempurl%
\url{https://doi.org/10.1007/BF01214610}
\showDOI{\tempurl}


\bibitem[Ko(1991)]%
        {malajovich1992ker}
\bibfield{author}{\bibinfo{person}{Ker-I Ko}.} \bibinfo{year}{1991}\natexlab{}.
\newblock \bibinfo{booktitle}{\emph{Complexity theory of real functions}}.
\newblock \bibinfo{publisher}{Birkh{\"a}user}. viii+ 309 pages.
\newblock
\urldef\tempurl%
\url{https://doi.org/10.1007/978-1-4684-6802-1}
\showDOI{\tempurl}


\bibitem[Lacombe(1955a)]%
        {Lacombe1}
\bibfield{author}{\bibinfo{person}{Daniel Lacombe}.}
  \bibinfo{year}{1955}\natexlab{a}.
\newblock \showarticletitle{Extension de la notion de fonction r{\'e}cursive
  aux fonctions d'une ou plusieurs variables r{\'e}elles}.
\newblock \bibinfo{journal}{\emph{Comptes rendus hebdomadaires des s{\'e}ances
  de l'Acad{\'e}mie des Sciences}}  \bibinfo{volume}{240}
  (\bibinfo{year}{1955}), \bibinfo{pages}{2478 -- 2480}.
\newblock


\bibitem[Lacombe(1955b)]%
        {Lacombe2}
\bibfield{author}{\bibinfo{person}{Daniel Lacombe}.}
  \bibinfo{year}{1955}\natexlab{b}.
\newblock \showarticletitle{Extension de la notion de fonction r{\'e}cursive
  aux fonctions d'une ou plusieurs variables r{\'e}elles {II}}.
\newblock \bibinfo{journal}{\emph{Comptes rendus hebdomadaires des s{\'e}ances
  de l'Acad{\'e}mie des Sciences}}  \bibinfo{volume}{241}
  (\bibinfo{year}{1955}), \bibinfo{pages}{13 -- 14}.
\newblock


\bibitem[Lacombe(1955c)]%
        {Lacombe3}
\bibfield{author}{\bibinfo{person}{Daniel Lacombe}.}
  \bibinfo{year}{1955}\natexlab{c}.
\newblock \showarticletitle{Extension de la notion de fonction r{\'e}cursive
  aux fonctions d'une ou plusieurs variables r{\'e}elles {III}}.
\newblock \bibinfo{journal}{\emph{Comptes rendus hebdomadaires des s{\'e}ances
  de l'Acad{\'e}mie des Sciences}}  \bibinfo{volume}{241}
  (\bibinfo{year}{1955}), \bibinfo{pages}{151 -- 153}.
\newblock


\bibitem[Lacombe(1955d)]%
        {Lacombe4}
\bibfield{author}{\bibinfo{person}{Daniel Lacombe}.}
  \bibinfo{year}{1955}\natexlab{d}.
\newblock \showarticletitle{Remarques sur les op{\'e}rateurs r{\'e}cursifs et
  surles fonctions r{\'e}cursives d'une variable r{\'e}elle}.
\newblock \bibinfo{journal}{\emph{Comptes rendus hebdomadaires des s{\'e}ances
  de l'Acad{\'e}mie des Sciences}}  \bibinfo{volume}{241}
  (\bibinfo{year}{1955}), \bibinfo{pages}{1250 -- 1252}.
\newblock


\bibitem[Lieutier(2004)]%
        {LIEUTIERhomotopytype}
\bibfield{author}{\bibinfo{person}{Andr{\'e} Lieutier}.}
  \bibinfo{year}{2004}\natexlab{}.
\newblock \showarticletitle{Any open bounded subset of $\mathbb{R}^n$ has the
  same homotopy type as its medial axis}.
\newblock \bibinfo{journal}{\emph{Computer-Aided Design}} \bibinfo{volume}{36},
  \bibinfo{number}{11} (\bibinfo{year}{2004}), \bibinfo{pages}{1029 -- 1046}.
\newblock
\showISSN{0010-4485}
\urldef\tempurl%
\url{https://doi.org/10.1016/j.cad.2004.01.011}
\showDOI{\tempurl}
\newblock
\shownote{Solid Modeling Theory and Applications}.


\bibitem[Liu et~al\mbox{.}(2011)]%
        {LIU20111496}
\bibfield{author}{\bibinfo{person}{Lu Liu}, \bibinfo{person}{Erin~W. Chambers},
  \bibinfo{person}{David Letscher}, {and} \bibinfo{person}{Tao Ju}.}
  \bibinfo{year}{2011}\natexlab{}.
\newblock \showarticletitle{Extended grassfire transform on medial axes of 2D
  shapes}.
\newblock \bibinfo{journal}{\emph{Computer-Aided Design}} \bibinfo{volume}{43},
  \bibinfo{number}{11} (\bibinfo{year}{2011}), \bibinfo{pages}{1496 -- 1505}.
\newblock
\showISSN{0010-4485}
\urldef\tempurl%
\url{https://doi.org/10.1016/j.cad.2011.09.002}
\showDOI{\tempurl}
\newblock
\shownote{Solid and Physical Modeling 2011}.


\bibitem[Mather(1983)]%
        {mather1983distance}
\bibfield{author}{\bibinfo{person}{John~N Mather}.}
  \bibinfo{year}{1983}\natexlab{}.
\newblock \showarticletitle{Distance from a submanifold in {E}uclidean space}.
  In \bibinfo{booktitle}{\emph{Proceedings of symposia in pure mathematics}},
  Vol.~\bibinfo{volume}{40}. American Mathematical Society,
  \bibinfo{pages}{199--216}.
\newblock


\bibitem[Menger(1930)]%
        {menger1930untersuchungen}
\bibfield{author}{\bibinfo{person}{K. Menger}.}
  \bibinfo{year}{1930}\natexlab{}.
\newblock \showarticletitle{Untersuchungen uber allgemeine Metrik, Vierte
  Untersuchungen Zur Metrik Kurven}.
\newblock \bibinfo{journal}{\emph{Math. Ann.}}  \bibinfo{volume}{103}
  (\bibinfo{year}{1930}), \bibinfo{pages}{466--501}.
\newblock
\urldef\tempurl%
\url{https://doi.org/10.1007/978-3-7091-6110-4_22}
\showDOI{\tempurl}


\bibitem[Mitrinovic et~al\mbox{.}(1991)]%
        {mitrinovic1991inequalities}
\bibfield{author}{\bibinfo{person}{Dragoslav~S Mitrinovic},
  \bibinfo{person}{Josip Pecaric}, {and} \bibinfo{person}{Arlington~M Fink}.}
  \bibinfo{year}{1991}\natexlab{}.
\newblock \bibinfo{booktitle}{\emph{Inequalities involving functions and their
  integrals and derivatives}}. Vol.~\bibinfo{volume}{53}.
\newblock \bibinfo{publisher}{Springer Dordrecht}.
\newblock
\urldef\tempurl%
\url{https://doi.org/10.1007/978-94-011-3562-7}
\showDOI{\tempurl}


\bibitem[{Perov}(1957)]%
        {Perov}
\bibfield{author}{\bibinfo{person}{A.I. {Perov}}.}
  \bibinfo{year}{1957}\natexlab{}.
\newblock \showarticletitle{{On integral inequalities}}.
\newblock \bibinfo{journal}{\emph{Vorone{\v{z}} Gos. Univ. Trudy Sem.
  Funkcional. Anal.}}  \bibinfo{volume}{5} (\bibinfo{year}{1957}),
  \bibinfo{pages}{87--97}.
\newblock


\bibitem[Shaked and Bruckstein(1998)]%
        {SHAKED1998156}
\bibfield{author}{\bibinfo{person}{Doron Shaked} {and}
  \bibinfo{person}{Alfred~M. Bruckstein}.} \bibinfo{year}{1998}\natexlab{}.
\newblock \showarticletitle{Pruning Medial Axes}.
\newblock \bibinfo{journal}{\emph{Computer Vision and Image Understanding}}
  \bibinfo{volume}{69}, \bibinfo{number}{2} (\bibinfo{year}{1998}),
  \bibinfo{pages}{156 -- 169}.
\newblock
\showISSN{1077-3142}
\urldef\tempurl%
\url{https://doi.org/10.1006/cviu.1997.0598}
\showDOI{\tempurl}


\bibitem[Sutherland(2009)]%
        {sutherland2009introduction}
\bibfield{author}{\bibinfo{person}{Wilson~A Sutherland}.}
  \bibinfo{year}{2009}\natexlab{}.
\newblock \bibinfo{booktitle}{\emph{Introduction to metric and topological
  spaces}}.
\newblock \bibinfo{publisher}{Oxford University Press}.
\newblock


\bibitem[Tagliasacchi et~al\mbox{.}(2016)]%
        {tagliasacchi20163d}
\bibfield{author}{\bibinfo{person}{Andrea Tagliasacchi},
  \bibinfo{person}{Thomas Delame}, \bibinfo{person}{Michela Spagnuolo},
  \bibinfo{person}{Nina Amenta}, {and} \bibinfo{person}{Alexandru Telea}.}
  \bibinfo{year}{2016}\natexlab{}.
\newblock \showarticletitle{3{D} Skeletons: A state-of-the-art report}. In
  \bibinfo{booktitle}{\emph{Computer Graphics Forum}},
  Vol.~\bibinfo{volume}{35}. Wiley Online Library, \bibinfo{pages}{573--597}.
\newblock
\urldef\tempurl%
\url{https://doi.org/10.1111/cgf.12865}
\showDOI{\tempurl}


\bibitem[Turing(2004)]%
        {turing2004computable}
\bibfield{author}{\bibinfo{person}{Alan Turing}.}
  \bibinfo{year}{2004}\natexlab{}.
\newblock \showarticletitle{On Computable Numbers, with an Application to the
  Entscheidungs problem, 1936}.
\newblock \bibinfo{journal}{\emph{The essential Turing: seminal writings in
  computing, logic, philosophy, artificial intelligence, and artificial life,
  plus the secrets of Enigma}} (\bibinfo{year}{2004}), \bibinfo{pages}{58}.
\newblock


\bibitem[Turing(1937)]%
        {turing1937entscheidungs}
\bibfield{author}{\bibinfo{person}{A.~M. Turing}.}
  \bibinfo{year}{1937}\natexlab{}.
\newblock \showarticletitle{On Computable Numbers, with an Application to the
  Entscheidungsproblem}.
\newblock \bibinfo{journal}{\emph{Proceedings of the London Mathematical
  Society}} \bibinfo{volume}{s2-42}, \bibinfo{number}{1}
  (\bibinfo{year}{1937}), \bibinfo{pages}{230--265}.
\newblock
\urldef\tempurl%
\url{https://doi.org/10.1112/plms/s2-42.1.230}
\showDOI{\tempurl}
\showeprint{https://londmathsoc.onlinelibrary.wiley.com/doi/pdf/10.1112/plms/s2-42.1.230}


\bibitem[van Manen(2007)]%
        {van2007maxwell}
\bibfield{author}{\bibinfo{person}{Martijn van Manen}.}
  \bibinfo{year}{2007}\natexlab{}.
\newblock \showarticletitle{Maxwell strata and caustics}.
\newblock In \bibinfo{booktitle}{\emph{Singularities In Geometry And
  Topology}}. \bibinfo{publisher}{World Scientific}, \bibinfo{pages}{787--824}.
\newblock


\bibitem[Wall(1977)]%
        {Wall1977}
\bibfield{author}{\bibinfo{person}{C.~T.~C. Wall}.}
  \bibinfo{year}{1977}\natexlab{}.
\newblock \showarticletitle{Geometric properties of generic differentiable
  manifolds}. In \bibinfo{booktitle}{\emph{Geometry and Topology}},
  \bibfield{editor}{\bibinfo{person}{Jacob Palis} {and}
  \bibinfo{person}{Manfredo do~Carmo}} (Eds.). \bibinfo{publisher}{Springer
  Berlin Heidelberg}, \bibinfo{address}{Berlin, Heidelberg},
  \bibinfo{pages}{707--774}.
\newblock
\showISBNx{978-3-540-37301-8}


\bibitem[Walter(1970)]%
        {walter2012differential}
\bibfield{author}{\bibinfo{person}{Wolfgang Walter}.}
  \bibinfo{year}{1970}\natexlab{}.
\newblock \bibinfo{booktitle}{\emph{Differential and integral inequalities}}.
  \bibinfo{series}{Ergebnisse der Mathematik und ihrer Grenzgebiete},
  Vol.~\bibinfo{volume}{55}.
\newblock \bibinfo{publisher}{Springer-Verlag}.
\newblock
\urldef\tempurl%
\url{https://doi.org/10.1007/978-3-642-86405-6}
\showDOI{\tempurl}


\bibitem[Weihrauch(2000)]%
        {weihrauch2000computable}
\bibfield{author}{\bibinfo{person}{Klaus Weihrauch}.}
  \bibinfo{year}{2000}\natexlab{}.
\newblock \bibinfo{booktitle}{\emph{Computable analysis: an introduction}}.
\newblock \bibinfo{publisher}{Springer Berlin, Heidelberg}.
\newblock
\urldef\tempurl%
\url{https://doi.org/10.1007/978-3-642-56999-9}
\showDOI{\tempurl}


\bibitem[Yan et~al\mbox{.}(2015)]%
        {Burning2015}
\bibfield{author}{\bibinfo{person}{Yajie Yan}, \bibinfo{person}{Tao Ju},
  \bibinfo{person}{David Letscher}, {and} \bibinfo{person}{Erin Chambers}.}
  \bibinfo{year}{2015}\natexlab{}.
\newblock \showarticletitle{Burning the Medial Axis}. In
  \bibinfo{booktitle}{\emph{ACM SIGGRAPH 2015 Posters}} (Los Angeles,
  California) \emph{(\bibinfo{series}{SIGGRAPH '15})}.
  \bibinfo{publisher}{Association for Computing Machinery},
  \bibinfo{address}{New York, NY, USA}, Article \bibinfo{articleno}{62},
  \bibinfo{numpages}{1}~pages.
\newblock
\showISBNx{9781450336321}
\urldef\tempurl%
\url{https://doi.org/10.1145/2787626.2792658}
\showDOI{\tempurl}


\bibitem[Yan et~al\mbox{.}(2016)]%
        {Erin2016}
\bibfield{author}{\bibinfo{person}{Yajie Yan}, \bibinfo{person}{Kyle Sykes},
  \bibinfo{person}{Erin Chambers}, \bibinfo{person}{David Letscher}, {and}
  \bibinfo{person}{Tao Ju}.} \bibinfo{year}{2016}\natexlab{}.
\newblock \showarticletitle{Erosion Thickness on Medial Axes of 3D Shapes}.
\newblock \bibinfo{journal}{\emph{ACM Transactions on Graphics}}
  \bibinfo{volume}{35}, \bibinfo{number}{4}, Article \bibinfo{articleno}{38}
  (\bibinfo{date}{July} \bibinfo{year}{2016}), \bibinfo{numpages}{12}~pages.
\newblock
\showISSN{0730-0301}
\urldef\tempurl%
\url{https://doi.org/10.1145/2897824.2925938}
\showDOI{\tempurl}


\bibitem[Yomdin(1981)]%
        {yomdin1981local}
\bibfield{author}{\bibinfo{person}{Yosef Yomdin}.}
  \bibinfo{year}{1981}\natexlab{}.
\newblock \showarticletitle{On the local structure of a generic central set}.
\newblock \bibinfo{journal}{\emph{Compositio Mathematica}}
  \bibinfo{volume}{43}, \bibinfo{number}{2} (\bibinfo{year}{1981}),
  \bibinfo{pages}{225--238}.
\newblock
\urldef\tempurl%
\url{http://www.numdam.org/item/CM_1981__43_2_225_0/}
\showURL{%
\tempurl}


\end{thebibliography}

\end{document}